\newtheorem{theorem}{Theorem}
\newtheorem{corollary}{Corollary}
\newtheorem{lemma}{Lemma}
\newtheorem{proposition}{Proposition}
\newcommand{\dd}[1]{\mathrm{d}#1}
\newcommand{\eqrefb}[1]{(\ref*{#1})}
\begin{document}

\title{Sharing Credit for Joint Research}
\author{Nicholas Wu\thanks{Department of Economics, Yale University, New Haven, CT 06511, nick.wu@yale.edu \\ I am deeply indebted to Johannes H\"orner for invaluable feedback and discussions. I also thank Dirk Bergemann, Mira Frick, Marina Halac, Ryota Iijima, Jonathan Libgober, Bart Lipman, Larry Samuelson, Tan Gan, participants at the Yale microeconomic theory lunch seminar, and anonymous referees for helpful comments. All errors are my own. }}
\date{March 26, 2024}
\maketitle

\begin{abstract}

How can one efficiently share payoffs with collaborators when participating in risky research? First, I show that efficiency can be achieved by allocating payoffs asymmetrically between the researcher who makes a breakthrough (``winner'') and the others, even if agents cannot observe others' effort. When the winner's identity is non-contractible, allocating credit based on effort at time of breakthrough also suffices to achieve efficiency; so the terminal effort profile, rather than the full history of effort, is a sufficient statistic. These findings suggest that simple mechanisms using minimal information are robust and effective in addressing inefficiencies in strategic experimentation.

\bigskip

\noindent \textbf{JEL Codes}: C73, D83, O31, O32
\end{abstract}

\newpage

\section{Introduction}


Researchers rarely conduct work alone; this raises questions about the information necessary to ensure efficiency. The economic literature has commonly studied the implications of informational free-riding with multiple agents. Quite generally, the equilibrium is inefficient because agents exert suboptimally low effort on research, due to the information they gain from observing each other. 
This paper seeks to extend the understanding of strategic experimentation by investigating what information must be contractible to restore efficiency. I consider an environment in which researchers exert costly effort on developing a breakthrough, but do not know whether a breakthrough is possible.

I show that one piece of information sufficient to restore efficiency is the identity of the researcher that makes the breakthrough (termed the ``winner''). To show this, I first consider a full-information environment with heterogeneity in payoffs between the discoverer and non-discoverers. I find that absent contracting, equilibria are inefficient generically, except in a knife-edge case. This case requires payoff parameters to align in a specific way; the payoff externalities must be such that the continuation value of failing to make a discovery (``losing'') equals the flow opportunity cost of research. Intuitively, if the losers benefit too much from a discovery, strategic agents have an incentive to free-ride on the efforts of others; they inefficiently reduce research and give up on research projects too easily. On the other hand, if the losers suffer in the event of a discovery, strategic agents overexert effort on failing research endeavors because they are afraid of another agent making the discovery. The efficiency condition thus depends on the losers' payoffs and the opportunity cost of research, but it notably does \textit{not} depend on what the winner receives.

While payoff externalities induced by a sharing contract conditioning on the winner's identity can fix the inefficiencies induced by strategic experimentation, the structure of the sharing contract is notable. In particular, winner-take-all contracts and equal sharing are both inefficient; the efficient contract must guarantee something to the losers, but not too much or too little, and the payoffs in the efficient contract are asymmetric ex-post. Further, it is also significant that such a contract does not require the agents to observe each others' actions; that is, the same sharing contract that restores efficiency in the observable-action model still uniquely induces the efficient outcome even if agents cannot observe each others' actions. Therefore, observability of effort is not essential to restoring efficiency.

Having shown that the identity of the winner is sufficient for restoring efficiency (even without observing effort), it might seem that this information is also necessary for implementing an efficient outcome. It is not; contracting on the effort profile at the time of breakthrough is \textit{also} sufficient to restore efficiency. That is, if the identity of the winner is not contractible, then the effort profile at time of breakthrough also suffices. In particular, this implies that the full history of effort is redundant given the terminal effort profile, and further that the identity of the winner is sufficient but not necessary to restoring efficiency. Importantly for fairness considerations, contracting on the effort profile at time of breakthrough results in outcomes which are ex-post symmetric on the equilibrium path, unlike the asymmetry necessary for efficient behavior when contracting on the winner's identity.

Methodologically, I build off of the canonical model of strategic experimentation of \cite{krc2005} where multiple agents conduct research on a project that is initially unknown to be good or bad. Exerting effort on research comes at an opportunity cost. If the project is good, the project generates a conclusive breakthrough at some rate according to each researching agent's effort. Instead, if the project is bad, a breakthrough never arrives. Breakthrough brings about fixed instantaneous and continuation rewards, shared amongst the participating agents. Since I consider a general model, the Hamilton-Jacobi-Bellman equation characterizing the agent best-response problem does not always admit a differentiable solution. To resolve this, I consider viscosity solutions, use a guess-and-verify approach to confirm an equilibrium candidate, and exploit other features of the environment to rule out other equilibria.

To intuitively understand why sharing contracts are well-suited to the research environment, note that sharing contracts create an encouragement effect by altering the degree of strategic complementarity or substitutability. In particular, for an environment similar to \cite{krc2005} in which free-riding drives inefficiency, the strategic complementary induced by a sharing contract can manufacture an offsetting encouraging effect. However, as a byproduct, this implies that the contracts considered in this paper can only alter the degree of strategic complementarity or substitutability across all agents uniformly.

As a consequence, the insights of this paper do not necessarily hold in environments in which the nature of inefficiency is heterogeneous between agents. For example, with asymmetric returns to research effort, the first-best solution takes a more complex form where some agents stop experimenting at different beliefs than others. The type of sharing contracts considered in this paper fail because the ``winner'' or discovery bonus can only be calibrated to the agent with the highest returns to effort and cannot ensure efficient behavior of the other agents; in those environments, stronger contracting instruments are necessary to restore efficiency. However, the results of the paper do extend to allow for some heterogeneity; namely, if agents have heterogeneity in the measure of resources available to invest in research (but identical returns to effort), the main insights still hold. While the analysis in this paper focuses on the conclusive good-news model of experimentation, the techniques do not rely on any specific features of the good-news model beyond the Markov assumptions.\footnote{Preliminary calculations suggest that the insights also extend to other strategic experimentation environments, like the bad-news model of \cite{kr2015}.} This paper focuses on conclusive good-news primarily because arrivals of stochastic breakthroughs plausibly model the process of conducting research.

The paper is structured as follows. The next subsection reviews the related literature. Section \ref{sec:model} lays out the experimentation game. Section \ref{sec:cooperative} derives the efficient research outcome of the experimentation game. Section \ref{sec:noncoop} discusses the equilibria of the noncooperative game, where agents strategically make research effort decisions. Section \ref{sec:contracts} analyzes sharing contracts. Section \ref{sec:extensions} discusses an extension of the model and Section \ref{sec:conclusion} concludes the paper.

\subsection{Literature}

This paper builds on the strategic experimentation literature that originated with \cite{bh1999}, which considered a Brownian motion bandit problem and identified the free-riding and encouragement effects that are present in these games. Some of the techniques used in this paper, such as considering the individual agent best-response Hamilton-Jacobi-Bellman problem and rewriting the best-response policy in the value-belief space, originally appeared in \cite{bh1999}. \cite{krc2005} first introduced the exponential bandit framework for strategic experimentation, where payoffs on the risky arm arrive as lump-sums if and only if the state is good. This paper generalizes \cite{krc2005} by considering heterogenous payoff effects after the first breakthrough.

A number of papers have extended the original \cite{krc2005} model; however, this paper specifically focuses on payoff externalities of a different form relative to those in the literature. 
Theorem 1 of \cite{hkr2022} shows that when one of two conditions is met, the inefficiency arising in \cite{krc2005} disappears when weakening the Markov solution concept to strongly symmetric equilibria; either there must be a Brownian drift component to the information process, or the belief jump from a breakthrough at the efficient threshold belief must be lower than the individual belief cutoff. Importantly, their Theorem 1 also shows that the broader strongly symmetric equilibrium concept does \textit{not} remedy the inefficiency in a pure good-news environment; thus, in this paper, it is more striking that payoff externalities can restore efficiency within the stronger Markov solution concept. That is, \cite{hkr2022} show that the inefficiency arising in the good-news environment \textit{cannot} be removed by broadening the solution concept and dropping the Markov assumption.

Some other papers consider specific forms of externalities. \cite{al2016} considers a model with two research lines that are monopolizable, but only one line is risky and can bring bad news. Their paper is focused on welfare implications of hiding bad news. In contrast, this paper considers just one research line but with arbitrary payoffs (allowing for imperfectly monopolizable research) and focuses on ex ante contracts to share rewards. In another paper, \cite{t2021} studies a problem where the safe options are rival; that is, only one agent can take the safe option. In contrast, this paper assumes that externalities only arise after a breakthrough, rather than from agents competing on the safe arm.

The results that focus on contractible information relate to strategic experimentation papers that consider the role of the observability of breakthroughs, payoffs, and actions (\cite{rsv2007}, \cite{bh2011}, \cite{rsv2013}). \cite{bh2011} considers an equal payoff sharing environment with unobserved actions; in this paper, I show that the efficient contract that redistributes payoffs between winner and losers still implements efficiency even when the actions are unobserved as in \cite{bh2011}. There are a number of other papers that focus on correlation of the bandit state (\cite{kr2011}, \cite{rsv2013}), bad news (\cite{kr2015}) and L\'evy process bandits (\cite{hkr2022}). The insights in this paper allow for generalization to asymmetries in the amount of research resource available; this complements other papers that have considered asymmetry in the quality of research between players (\cite{dks2020}) and in the informational content available to players (\cite{d2018}).

Since this paper studies an environment where the first breakthrough obtains a different payoff than the other experimenters, it also relates to the economics literature on contests. The closest paper is \cite{hkl2017}, which considers public and hidden contests where a principal incentivizes agents to exert costly effort on research; however, their paper focuses on information disclosure and whether hidden equal-sharing or public winner-take-all contests result in a higher probability of breakthrough for the principal, not on whether the outcome is necessarily socially efficient. Instead, this paper focuses on payoff characterizations that result in social efficiency, rather than maximizing total effort, which was the principal's objective in \cite{hkl2017}. 

This paper is also related to the literature on efficient dynamic mechanism design. This literature primarily focuses on the social choice setting. \cite{bv2002} study the incentives for agents to acquire information about their own types in a static social choice setting, and \cite{bv2010} formulates a dynamic pivot mechanism. More relatedly, \cite{as2013} propose a VCG-like mechanism for social choice in a dynamic environment that is also budget-balanced. This paper differs from these other papers in that these other papers require an assumption of private values, which fails when there are explicit payoff externalities resulting from experimentation. Although the \cite{as2013} model can capture informational externalities via the evolution of the belief stochastic process, the payoff externalities in this paper fail the assumption of private values because they introduce interdependence in the instantaneous incentives.

Indeed, in the presence of payoff interdependence, \cite{jm2003} show that even in a static setting, efficiency may not attain. In such environments, \cite{m2004} shows that requiring transfers after uncertainty resolution can restore efficiency. However, those insights do not apply to the dynamic experimentation setting; because experimentation stops with positive probability, there are outcomes where the state of the world never fully realizes to the agents.
 
\section{Game Structure}
\label{sec:model}
I first formally lay out the baseline structure of the research game. 

\paragraph{Exposition} 
There are $N$ agents $i \in \{1, 2, \dots N \}$ investigating a potential research breakthrough. The research idea is good or bad, which is drawn by Nature prior to the start of the game and unobserved by the agents. Formally, the quality of the research idea is the state of the world, $\omega \in \Omega := \{ \textnormal{good}, \textnormal{bad} \}$. Nature draws the state of the world to be good with probability $p(0)$, which is the initial prior belief shared by the agents on the state of the world. Time is continuous, $t \in [0, \infty)$, and at every instant of time, each agent is endowed with a unit measure of a resource (effort) that it allocates over two projects, the status quo technology or the research process. 

\paragraph{Actions}
At each instant in time, agent $i$ chooses how much effort $k_i \in [0,1]$ to allocate to the research process, with the remaining effort allocated to an outside project, which produces a flow reward according to the status quo technology.
The status quo technology yields a constant, deterministic flow payoff $\pi_s(1 - k_i)$ to the agent, where $\pi_s \ge 0$ is the flow profit per unit effort. 
The research process yields no flow payoff but could produce a breakthrough depending on the unobserved state of the world $\omega$. If the state of the world is $\omega =$ good, the research process yields a breakthrough at an exponential rate $\lambda k_i$ independently across agents, where $k_i$ is the measure of effort allocated by agent $i$ to research. If the state of the world is $\omega=$ bad, the research process never yields a breakthrough. All effort decisions and breakthrough events are observable to all participants, so as usual I assume that all agents share a common belief that the state of the world is good, which I denote $p(t)$ at time $t$.\footnote{This assumption is the standard ``no signaling what you don't know'' restriction.} In general, if any agent is choosing $k_i(t) > 0$ at time $t$, the agent is \textit{experimenting}. 

\paragraph{Breakthrough}
As stated, if a breakthrough arrives, the research game ends. At that instant, a lump-sum instantaneous reward arrives of size $R > 0$. The total continuation value of all agents improves to $\Pi > N\pi_s$.\footnote{I distinguish between these two objects primarily for continuity with previous literature; in the \cite{krc2005} model, $\Pi = N g$ and $R = h$.} Motivated by studying the payoff externalities imposed by possible sharing contracts, we will grant the agent making the breakthrough (``winner'') an instantaneous payoff of $R_w$ and a continuation payoff of $\pi_w$. The other agents receive an instantaneous reward $R_l$ and continuation payoff $\pi_l$. Since the total payoff is fixed, $\Pi = \pi_w + (N-1)\pi_l$ and $R = R_w + (N-1)R_l$.




\paragraph{Outcomes and Payoffs}
While the game has not ended, a history $h_t$ is given by a measurable path of effort choices, $\{ (k_1(s), k_2(s), \dots k_N(s)) \mid k_i(s) \in [0,1], \ s \le t \}$. 

An outcome of the game is a triple $\left(\tau, w, h_\tau\right)$. The first element $\tau \in \mathbb{R}_+ \cup \{ \infty \}$ is the realization of a stopping time, namely the stochastic arrival time of the breakthrough. 
Note that $\tau = \infty$ if the state of the world is bad or experimentation stops before breakthrough. 
The second element $w \in \{1, 2, \dots N\} \cup \{ \emptyset \}$ denotes the identity of the winner; if there is no winner ($\tau = \infty$), $w = \emptyset$. Finally, 
$h_\tau$ is the history of effort choices up to the stopping time. Note that $h_\tau$ implies paths $\{ k_i(t) \mid t \le \tau \}$ for every $i \in \{1, 2, \dots N \}$. 

Given some outcome $(\tau, w, h_\tau)$, one can formally define payoffs. All agents discount payoffs at a rate $r > 0$. If $\tau < \infty$, then the realized payoff to a winner $i$ (that is, $i = w$) is
\[ \Pi^W_i\left(\tau, w, h_\tau\right) = \int_0^\tau re^{-rt} \pi_s \left(1 - k_i(t)\right)\ \dd t + re^{-r\tau}R_w + e^{-r\tau} \pi_w. \]
The realized payoff to a loser $i \neq w$ is 
\[ \Pi^L_i\left(\tau, w, h_\tau\right) = \int_0^\tau re^{-rt} \pi_s \left(1 - k_i(t)\right)\ \dd t + re^{-r\tau}R_l + e^{-r\tau} \pi_l. \]
If the breakthrough never arrives $(\tau = \infty)$, the only payoffs come from the status quo technology, so the payoffs are given by 
\[  \Pi^{N}_i\left(h_{\infty}\right) = \int_0^\infty  re^{-rt} \pi_s \left(1 - k_i(t)\right)\ \dd t . \]
The structure and payoffs of the game are common knowledge to all agents. 

\paragraph{Beliefs and Strategies}
Fix a history $h_t$. History $h_t$ implies a realization of the path of total effort; let that path of total effort be $K(t) = \sum_i k_i(t)$. By assumption, the belief process $p(t)$ over the state of the world $\omega$ is public and common to all agents. To understand how beliefs evolve, consider an infinitesimal time increment $[t, t + \dd t)$. The flow probability of no breakthroughs occurring, conditional on $\omega = $ good, is $1 - K(t)  \lambda \dd t $. If $\omega = $ bad, breakthroughs never occur. Proceeding heuristically, the evolution of the belief according to Bayes' rule is given by
\[ p(t) + \dd p(t) = \frac{p(t)(1 - K(t)\lambda \dd t) }{(1-p(t)) + p(t)(1 - K(t)\lambda \dd t)}. \]
Rearranging and taking the limit as $dt \to 0$, the belief evolves as 
\begin{equation}\label{eqn:belief_law}
\dd p(t) = - K(t) \lambda p(t)(1-p(t)) \dd t.
\end{equation}
Note that the sign of $dp(t)$ is nonpositive; that is, if $K(t) > 0$, the belief becomes more pessimistic in the absence of the game-ending breakthrough. If $\tau$ is finite, then the belief $p$ jumps to 1 at $\tau$.

Since the cooperative problem of a planner maximizing the joint payoffs of all agents is a Markov decision process over the belief $p$, there exists an optimal first-best policy that is measurable with respect to $p$, which I explicitly solve for in Section \ref{sec:cooperative}. Motivated by this, I also focus on Markov strategies with respect to the public belief in the noncooperative game. Formally, in the noncooperative game, the strategy of agent $i$ is a map $\sigma_i : [0,1] \to [0,1]$ from the public belief into an effort level. As usual, denote a profile of Markov strategies for all agents except $i$ as $\sigma_{-i}$. Given the focus on Markov strategies, I use Markov perfect equilibrium (MPE) as the solution concept. More precisely, an MPE is a profile $\{ \sigma_i \}_i$ such that at any state of the game, agent $i$'s strategy $\sigma_i$ is a best response policy to the other agents' strategies $\sigma_{-i}$, given the belief law of motion in (\ref{eqn:belief_law}). Note that for any fixed Markov $\sigma_{-i}$, the best response problem of agent $i$ is a Markov decision process, and so there always exists a best response policy that is Markov; hence, such an MPE is also an equilibrium even when a larger class of strategies is permitted. 

Furthermore, all agents, whether cooperatively or not, make effort decisions based on their subjective expectation over the distribution of future outcomes conditional on the history, and the evolution of the public belief reflects the learning process about the state of the world. That is, $K$ is not an exogenous process, but rather depends on the history of $p$ and $K$. Thus, for the stochastic belief process $p$ and action profiles $(k_1, k_2, ... k_N)$ to be well defined, I impose some further restrictions on strategies of the agents in the noncooperative game; namely, I focus on strategies $\sigma_i$ that are finite piecewise Lipschitz and left-continuous.\footnote{These conditions were originally introduced in \cite{kr2010}. The piecewise Lipschitz and left-continuous restriction is a technical assumption that ensures that the belief law of motion in \eqref{eqn:belief_law} is well defined. The strengthening to finite piecewise Lipschitz eliminates asymmetric infinitely switching equilibria in \cite{krc2005}. These equilibria do exist for certain choices of the game parameters, but I believe it is reasonable to eliminate these equilibria because they do not arise as limits of PBEs of discrete-time games and are an artifact of continuous time (see \cite{hkr2022}).} 

For this paper, I focus on symmetric equilibria; that is, all agents use a common continuation strategy after any history. This focus is a natural choice since the agents are ex ante symmetric (in the next section, I show that in the first-best solution, the optimal policy is symmetric across all agents). Some of the results can be strengthened to characterize asymmetric equilibria; since these results are auxiliary to the focus of the paper, I address these extensions in the appendix.

\subsection{Motivating Examples}
Having presented the model formulation for the baseline research game, I provide a few motivating examples.

\paragraph{Industry Research:} Several firms are engaged in researching a technological advance. The first firm to make the technology work can claim a patent, which alters its future revenue stream and that of its competitors. The extent to which the payoffs of the discovering firm's rivals benefit or suffer depends on how strictly the patent can be enforced. 

\paragraph{Contest Participation:} There are individuals engaged in a contest to develop a product. The first individual to succeed earns some prize, and the losers can also be compensated or penalized. 

\paragraph{Academic Collaboration:} Researchers are collaborating on a project. However, depending on their contribution to a project, they might receive different levels of credit (e.g. first authorship). The first author gains a different amount of credit than the subsequent authors, and one might like to know whether the discrepancy between the credit received by a first author results in first-best research in equilibrium.

\section{Cooperative Solution}
\label{sec:cooperative}
I now discuss the first-best solution, that is, the solution that maximizes the total payoff of all agents. I refer to this as the first-best or the efficient solution, and this is the benchmark for efficiency against which equilibria of the noncooperative game are measured.

Since the cooperative social planner can set the effort decisions of all agents, the problem is a continuous-time Markov decision process (MDP) over the state (which is the public belief $p$), and so an optimal policy exists among those that are measurable with respect to the state $p$. This problem is standard, so my discussion here is brief.\footnote{See \cite{krc2005} for a more in-depth discussion of the methodology}

Recall that $R$ and $\Pi$ are the total instantaneous and continuation payoffs after a breakthrough. The average value function of the agents over the state satisfies a Hamilton-Jacobi-Bellman (HJB) equation, which is given by
\begin{equation}\label{eqn:hjb_cooperative}
    V_N(p) = \pi_s + \max_K \left[K\left(p\frac{\lambda}{r}\left(\frac{\Pi}{N} - V_N(p) - (1-p)V_N'(p) \right) - \frac{c(p)}{N} \right) \right] 
\end{equation}
where
\begin{equation}\label{eqn:def_c}
    c(p) = \pi_s - p \lambda R.
\end{equation} 

Intuitively, the term 
$p\frac{\lambda}{r}\left(\frac{\Pi}{N} - V_N(p) - (1-p)V_N'(p) \right)$
denotes the flow average benefit of experimentation, which is the average increase in continuation payoffs (from $V_N(p)$ to $\frac{\Pi}{N}$) 
minus the downward effect on payoffs from becoming more pessimistic about the state of the world $(1-p)V_N'(p)$. The cost $c(p)$ denotes the myopic opportunity cost of experimentation. 

I explicitly solve for the value function satisfying this HJB equation and obtain the following result characterizing the efficient solution. 
\begin{theorem}\label{thm:cooperative}
The first-best solution has all agents exerting full effort $k_i = 1$ on the research project until the public belief reaches the first-best threshold 
\begin{equation}\label{eqn:belief_fb}
    p_{FB} = \frac{\pi_s}{\lambda R + \frac{\lambda}{r}\left(\Pi- N\pi_s \right) },
\end{equation}
and no effort is exerted on research after the belief falls below $p_{FB}$. 
\end{theorem}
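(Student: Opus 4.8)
The plan is to solve the HJB equation in \eqref{eqn:hjb_cooperative} by the standard guess-and-verify method for conclusive good-news bandits. First I would observe that the objective is \emph{linear} in the control $K \in [0, N]$, so the maximizer is bang-bang: the optimal policy sets full effort $K = N$ (i.e.\ $k_i = 1$ for all $i$) exactly when the bracketed coefficient
\[
m(p) := p\frac{\lambda}{r}\left(\frac{\Pi}{N} - V_N(p) - (1-p)V_N'(p)\right) - \frac{c(p)}{N}
\]
is positive, and $K = 0$ when it is negative. The key structural claim to establish is that this coefficient is single-crossing in $p$ (from below, as $p$ increases), so that the optimal policy is a threshold rule: experiment above some cutoff $p^*$ and stop below it. The value of stopping is simply the status-quo payoff $V_N(p) = \pi_s$ on the no-experimentation region, which pins down the right boundary condition.

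Next I would solve the ODE on the experimentation region. Setting $K = N$ and rearranging \eqref{eqn:hjb_cooperative} gives a first-order linear ODE in $V_N$,
\[
V_N(p) = \pi_s + N\left[p\frac{\lambda}{r}\left(\frac{\Pi}{N} - V_N(p) - (1-p)V_N'(p)\right) - \frac{c(p)}{N}\right],
\]
which I would rewrite in the form $a(p)V_N'(p) + b(p)V_N(p) = d(p)$ and integrate using an integrating factor. The two conditions that select the particular solution are value matching at the threshold, $V_N(p^*) = \pi_s$, and smooth pasting (the $C^1$ condition $V_N'(p^*) = 0$, since $V_N \equiv \pi_s$ is constant on the stopping region). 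Imposing both conditions simultaneously determines the threshold $p^*$; I expect this to reduce, after simplification, to the claimed expression $p_{FB}$ in \eqref{eqn:belief_fb}. The identity to check is that at $p = p_{FB}$ the coefficient $m(p)$ vanishes when evaluated with $V_N = \pi_s$ and $V_N' = 0$, i.e.\ $p_{FB}\frac{\lambda}{r}\left(\frac{\Pi}{N} - \pi_s\right) = \frac{c(p_{FB})}{N} = \frac{\pi_s - p_{FB}\lambda R}{N}$; solving this linear equation for $p_{FB}$ should yield \eqref{eqn:belief_fb} directly, which is a quick sanity check even before integrating the ODE.

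The main obstacle is the verification step rather than the construction. Because the problem is stated generally and the value function need not be differentiable at the threshold, I cannot simply assert that a $C^1$ solution of the HJB equation is optimal. The plan is to verify that the candidate $V_N$ — defined piecewise as the ODE solution above $p_{FB}$ and the constant $\pi_s$ below — is the \emph{viscosity solution} of \eqref{eqn:hjb_cooperative}, checking the subsolution and supersolution inequalities at the kink $p_{FB}$, and then invoke a standard verification theorem for Markov decision processes to conclude that the threshold policy attains the value and is therefore optimal. I would also confirm that $V_N \ge \pi_s$ everywhere (experimentation is weakly beneficial above the threshold) and that the threshold lies in $(0,1)$ under the maintained assumptions $\Pi > N\pi_s$ and $R > 0$, so the cutoff is interior and the stated policy is well defined. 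The symmetry of the optimal policy across agents is immediate here, since the planner's problem depends only on aggregate effort $K$ and the per-agent cost enters symmetrically.
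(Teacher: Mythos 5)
Your proposal is correct and follows essentially the same route as the paper: bang--bang optimality from linearity in $K$, solving the linear ODE on the experimentation region, and pinning down $p_{FB}$ via value matching and smooth pasting (your sanity-check identity $p_{FB}\frac{\lambda}{r}(\Pi/N-\pi_s)=\frac{c(p_{FB})}{N}$ is exactly how the paper derives \eqref{eqn:belief_fb}). The only cosmetic difference is that the paper notes the smooth-pasted candidate is globally $C^1$ and convex, so a standard differentiable verification theorem suffices and the viscosity machinery you invoke at the threshold is not needed here (it is reserved for the noncooperative best-response problem).
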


A few features of the efficient solution are worth emphasizing. First, the planner implements symmetric strategies, so the cooperative first-best policy for each agent is identical. Another consequence of the solution is that the optimal policy is ``bang-bang'' at a cutoff; the cooperative planner either sets total effort $N$ into the research project or nothing, depending on whether the belief is above or below $p_{FB}$. Hence, implementation of the first-best requires that all agents exert full effort into research above $p_{FB}$ and drop the research project below $p_{FB}$. 

Formally, I call a Markov strategy $\sigma_i$ a \textbf{cutoff} strategy if $\sigma_i(p) = 1$ for $p > p_T$, and $\sigma_i(p) = 0$ for $p \le p_T$ for some $p_T$. I refer to $p_T$ as the \textbf{threshold} belief. The efficient solution consists of cutoff strategies with threshold $p_{FB}$ defined in (\ref{eqn:belief_fb}).

\section{Noncooperative Game}
\label{sec:noncoop}
Having characterized the efficient solution, I continue onto the analysis of the noncooperative game.  The first main result shows that the nature of the equilibria of the noncooperative game critically depends on a sharp condition on the game parameters concerning the payoffs of the losers.

\begin{theorem}\label{thm:noncoop_efficient}
    The efficient solution is an MPE of the noncooperative game if and only if $\frac{\pi_s - \pi_l}{r} = R_l $. Furthermore, if $\frac{\pi_s - \pi_l}{r} = R_l $, the efficient solution is also the unique MPE. 
\end{theorem}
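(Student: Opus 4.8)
The plan is to analyze agent $i$'s best-response problem against a fixed symmetric profile and to extract the marginal value of $i$'s own effort. First I would write the best-response HJB: if the other agents supply total effort $K_{-i}(p)$ and agent $i$'s value is $W$, then the terms multiplying $i$'s own effort $k_i$ assemble into a linear coefficient
\[
\phi(p) = -r\pi_s + p\lambda(rR_w + \pi_w) - \lambda p\bigl[W(p) + (1-p)W'(p)\bigr],
\]
so the best response is bang-bang: experiment where $\phi>0$, stop where $\phi<0$. Crucially, $\phi$ sees the winner's payoff $A := rR_w + \pi_w$ directly (only $i$'s own breakthroughs make $i$ the winner), while the losers' payoff enters only through $W$. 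When all agents use the efficient cutoff of Theorem~\ref{thm:cooperative}, symmetry forces $W = V_N$, the per-capita cooperative value, with $V_N(p_{FB}) = \pi_s$ and, by the smooth-pasting condition underlying \eqref{eqn:belief_fb}, $V_N'(p_{FB}) = 0$.

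For both directions of the equivalence the pivotal computation is the marginal condition at the threshold. Since the other agents have stopped just below $p_{FB}$, there $W = \pi_s$ with $W' = 0$, while just above $p_{FB}$ we have $V_N(p_{FB}) = \pi_s$ and $V_N'(p_{FB}) = 0$; hence $\phi$ is continuous at $p_{FB}$ with $\phi(p_{FB}) = -r\pi_s + p_{FB}\lambda(A - \pi_s)$. A cutoff at $p_{FB}$ can be a best response only if $\phi(p_{FB}) = 0$: otherwise $\phi$ keeps a constant sign across $p_{FB}$, giving a profitable one-sided deviation. Setting $\phi(p_{FB}) = 0$, substituting \eqref{eqn:belief_fb}, and using $\Pi = \pi_w + (N-1)\pi_l$ and $R = R_w + (N-1)R_l$, the winner's payoff $A$ cancels and the identity collapses to $(N-1)(rR_l + \pi_l - \pi_s) = 0$, i.e. $\tfrac{\pi_s - \pi_l}{r} = R_l$ for $N \ge 2$. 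This yields necessity; since $rR + \Pi = A + (N-1)(rR_l + \pi_l)$, the condition together with $R>0$ and $\Pi > N\pi_s$ also gives $A > \pi_s$.

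For sufficiency I would run guess-and-verify with $W = V_N$ above $p_{FB}$ and $W = \pi_s$ below. Below $p_{FB}$, $\phi(p) = -r\pi_s + p\lambda(A - \pi_s)$ is increasing and vanishes at $p_{FB}$, so $\phi < 0$ there and stopping is optimal. Above $p_{FB}$, I would substitute the experimentation ODE for $V_N$ to eliminate $W'$ and, using the condition $rR_l + \pi_l = \pi_s$, rewrite $\phi(p) = -r\pi_s + \tfrac{\lambda p(N-1)}{N}(A - \pi_s) + \tfrac{r}{N}V_N(p)$; since $A > \pi_s$ and $V_N$ is strictly increasing above $p_{FB}$, $\phi$ is strictly increasing and positive, so full effort is optimal. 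This verifies that the efficient cutoff is a best response to itself, hence an MPE.

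The uniqueness claim is the part I expect to be the main obstacle. The plan is: (i) show every symmetric MPE is a cutoff, by establishing single-crossing of the equilibrium incentive $\phi$ in $p$ (so the experimentation region is an up-interval $(\bar p, 1]$) and ruling out intervals of interior effort on which $\phi \equiv 0$; (ii) pin down $\bar p$ by value-matching together with the indifference $\phi(\bar p) = 0$, which—because the condition aligns the individual and planner marginal incentives exactly—forces $\bar p = p_{FB}$. The delicate points are that $W$ need not be differentiable (so I would work with viscosity solutions and one-sided derivatives, as flagged in Section~\ref{sec:model}), that single-crossing of $\phi$ must be derived from the equilibrium ODE rather than assumed, and that interior-effort and infinitely-switching profiles must be excluded using the finite-piecewise-Lipschitz and left-continuity restrictions.
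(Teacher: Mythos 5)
Your characterization argument for the ``if and only if'' part is correct and is in substance the same as the paper's: your coefficient $\phi$ is exactly $r\bigl(b_I(p,u,u') - c_I(p)\bigr)$ from \eqref{eqn:hjb_baseline}, your threshold identity $\phi(p_{FB})=0$ is equivalent to $p_I = p_{FB}$, and your algebra showing the winner's payoff cancels reproduces Lemma \ref{lem:pI_characterization}. Your packaging of necessity is actually a bit cleaner than the paper's: you get both failure directions from the single continuous marginal condition at the threshold (using smooth pasting of $V_N$ at $p_{FB}$), whereas the paper splits into two cases and routes one through the level-curve geometry at $p_\times$ (Lemmas \ref{lem:baseline_br} and \ref{lem:ptimes_pFB_comparison}) and the other through the stopping bound at $p_I$. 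The sufficiency verification (substituting the experimentation ODE to eliminate $W'$ and checking the sign of $\phi$ on each side of $p_{FB}$) matches the paper's verification of $V_{FB}$ against the best-response HJB.

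The genuine gap is uniqueness, which you correctly flag as the hard part but only sketch, and the sketch as written would not close. Two concrete problems. First, scope: the theorem asserts the efficient solution is the unique MPE, not merely the unique \emph{symmetric} MPE, so step (i) of your plan (``every symmetric MPE is a cutoff'') does not cover asymmetric or nonmonotone profiles; the paper's argument deliberately avoids any symmetry or monotonicity restriction at this stage. Second, your proposed mechanism --- deriving single-crossing of $\phi$ from the equilibrium ODE --- is not the lever that actually works, and in the general game it is false (the appendix exhibits nonmonotone ``semiefficient'' equilibria when the condition fails, so single-crossing cannot be a generic structural fact). The ingredient you are missing is the payoff-neutrality of losing under the condition $rR_l + \pi_l = \pi_s$: perpetual shirking then guarantees every agent a value of exactly $\pi_s$ at every belief regardless of what others do, so any equilibrium value function satisfies $u(p)\ge \pi_s$ everywhere. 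Combined with Lemma \ref{lem:baseline_br}, $u(p)>\pi_s$ forces $k_i=1$ (the point lies above every $\mathcal{D}_{K_{-i}}$), and $u(p)=\pi_s$ with $k_i<1$ forces $K_{-i}=0$ and then a one-sided-derivative argument produces some $p'$ with $u(p')<\pi_s$, a contradiction. That, together with Lemma \ref{lem:stopping_lower_bound} (experimentation cannot stop below $p_\times = p_{FB}$), pins down full effort above $p_{FB}$ and stopping at $p_{FB}$ for every agent in every MPE. Without the $u\ge\pi_s$ bound, neither your indifference step (ii) nor the exclusion of interior-effort regions goes through.
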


Note that the condition is if-and-only-if and hence sharply characterizes whether the efficient solution is an MPE. Furthermore, the second part of the statement implies unique implementation; that is, the efficient solution is the only MPE when the condition holds, and in particular this implies there are no other asymmetric or nonmonotone equilibria (which can arise when the efficiency condition fails). Note that the condition is an equality condition (knife-edge) on parameters of the model, and so equilibria are generically inefficient. 

As a brief aside, I can characterize the nature of the weakly monotonic symmetric MPEs when the efficiency condition fails.  Intuitively, when $\frac{\pi_s - \pi_l}{r} > R_l$, breakthroughs harm the losers; in the noncooperative game, agents overexperiment due to the incentive to not lose. When $\frac{\pi_s - \pi_l}{r} < R_l$, breakthroughs benefit the losers; hence, the incentive to free-ride induces underexperimentation. When $\frac{\pi_s - \pi_l}{r} = R_l$, breakthroughs are neutral for the losers, and the noncooperative outcome is efficient. Since this paper focuses on efficiency and contracts that restore efficiency, I relegate the formal statements to the Appendix, which contains the general characterization of MPEs when the efficiency condition fails. 

\subsection{Discussion}
Some remarks on the efficiency condition
\[ \frac{\pi_s - \pi_l}{r} =  R_l \]
 are in order. The left-hand side is the difference in present discounted value of the status quo technology and the loser technology, and the right-hand side is the lump-sum compensatory reward that a loser receives at time of breakthrough. Hence, the economic interpretation of the condition is whether the lump-sum reward a loser receives compensates for the change in the technology value to the loser.

The efficiency condition has no dependence on the number of agents $N$ or the arrival rate of the breakthrough process $\lambda$. That is, if a designer were selecting game parameters to attempt to implement efficiency as an equilibrium of the noncooperative game, increasing/decreasing the size of the winner's rewards (so long as $\pi_w \ge \pi_l$) does not impact the efficiency of the result. Altering the number of agents $N$ and the breakthrough rate $\lambda$ also does not change the nature of the equilibria in terms of efficiency. Increasing $N$ does not change whether the equilibrium is efficient, but exacerbates any existing inefficiency (i.e., the difference in belief between the cutoff belief for the first-best and the cutoff in any MPE). Changing $\lambda$ scales the noncooperative game and the first-best solution identically and hence also has no impact on efficiency.

More notably, the efficiency condition is independent of $R_w$ and $\pi_w$; that is, the condition for efficiency does not depend on what the winner receives. To understand this, consider an incremental increase in $R_w$ or $\pi_w$, fixing $R_l$ and $\pi_l$. Any incremental increase induces agents in the noncooperative game to experiment more but also induces more experimentation in the social planner problem. To understand why this does not affect the efficiency implications, consider the best-response problem of agent $i$. For agent $i$, the cumulative effort of other agents $K_{-i}$ brings about a ``loss'' event at a rate of $\lambda K_{-i}$, but agent $i$ has no agency over the effort of the other agents. Instead, the effort choice for agent $i$ weighs the relative benefit of winning versus the status quo project (and in fact, $b_I$ and $c_I$ have no $R_l, \pi_l$ dependence). Hence, in the noncooperative game, agents are intuitively already trading off the relative benefits of winning versus the status quo in an efficient manner; however, they do not properly account for the externalities induced by their research effort on other players. Thus, the source of any potential inefficiency is the extent to which the externalities harm or help the other agents. 

The next subsection outlines the technical steps to prove Theorem \ref{thm:noncoop_efficient}. A reader less concerned with these details may skip the next subsection and proceed to Section \ref{sec:contracts}.

\subsection{Best Response Problem}
Since I am interested in Markov perfect equilibria, I start by considering the best response problem of a single agent reacting to a given profile of Markov strategies of the other agents. That is, suppose that the strategies of all other agents are exogenously fixed at $\{ \sigma_j \}_{j \neq i}$. The cumulative effort of the other agents at some belief $p$ is $K_{-i}(p) = \sum_{j \neq i} \sigma_j(p)$, which is also left-continuous and finite piecewise Lipschitz by assumption.
Using standard arguments, I derive the HJB equation characterizing best-response $k_i$ given the function $K_{-i}(p)$. Let $u(\cdot)$ denote the value function of agent $i$'s best-response Markov decision process. Then the HJB equation characterizing $u$ is given by 
\begin{equation}\label{eqn:hjb_baseline}
    u(p) = \pi_s + K_{-i}(p) \left(p\lambda R_l + b_I(p,u,u') - p\frac{\lambda}{r}(\pi_w - \pi_l) \right)+  \max_{k_i} \left[k_i \left( b_I(p, u, u') - c_I(p) \right)  \right],
\end{equation}
where
\begin{equation}\label{eqn:def_bI}
     b_I(p, u, u') = p\frac{\lambda}{r}(\pi_w - u(p) - (1-p)u'(p)),
\end{equation}  
\begin{equation}\label{eqn:def_cI}
     c_I(p) = \pi_s - p \lambda R_w.
\end{equation}  

\begin{lemma}\label{lem:viscosity}
    Fixing $K_{-i}(p)$, there is a unique viscosity solution of \eqref{eqn:hjb_baseline}. 
\end{lemma}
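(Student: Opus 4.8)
\section*{Proof proposal for Lemma \ref{lem:viscosity}}

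The plan is to put \eqref{eqn:hjb_baseline} into the standard stationary first-order form $F(p,u,u')=0$ and then obtain existence and uniqueness separately, via the control-theoretic characterization of the solution and via a comparison principle. First I would carry out the reduction. Since the bracket multiplying $k_i$ is linear in $k_i$ and $k_i$ ranges over $[0,1]$, the maximization evaluates to a positive part, so \eqref{eqn:hjb_baseline} is equivalent to
\[ F(p,u,u') := u(p) - \pi_s - K_{-i}(p)\Big(p\lambda R_l + b_I(p,u,u') - p\tfrac{\lambda}{r}(\pi_w-\pi_l)\Big) - \big(b_I(p,u,u') - c_I(p)\big)^+ = 0, \]
with $b_I,c_I$ as in \eqref{eqn:def_bI}--\eqref{eqn:def_cI}. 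Two structural features drive everything. The derivative $u'$ enters only through $b_I$, with coefficient $-p(1-p)\tfrac{\lambda}{r}$, which vanishes at both $p=0$ and $p=1$; hence at the endpoints $F$ degenerates to an algebraic relation in $u$ alone, so the equation pins down its own boundary values (in particular $u(0)=\pi_s$) and no exogenous boundary data are needed. Moreover the only nonlinearity is the kink from the positive part, so on each region where the sign of $b_I-c_I$ is fixed the equation is linear in $(u,u')$.

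Next I would establish existence through the underlying control problem. The agent's best response to the fixed profile $K_{-i}$ is a continuous-time Markov decision process on the belief state with uniformly bounded flow and terminal rewards; its value function $u^\ast$ is therefore bounded, and by the dynamic programming principle $u^\ast$ is a viscosity solution of $F=0$. (Alternatively one can run Perron's method off the comparison principle below, using constant sub- and supersolutions read off from the payoff bounds.) I would take the control-theoretic route because it simultaneously identifies the viscosity solution with the value function, which is exactly what the later guess-and-verify step needs.

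For uniqueness I would prove a comparison principle: every viscosity subsolution is $\le$ every viscosity supersolution, which forces the solution to be unique. The key ingredient is strict properness in the value slot: because $b_I$ is decreasing in $u$, $K_{-i}\ge 0$, and $t\mapsto t^+$ is nondecreasing, one checks directly that $F(p,s_2,q)-F(p,s_1,q)\ge s_2-s_1$ whenever $s_2\ge s_1$, so $F$ is strictly increasing in $u$ with modulus $1$ (the discount never degenerates). Combined with the fact that $F$ is Lipschitz in $u'$ with a uniformly bounded constant (the coefficient $p(1-p)\tfrac{\lambda}{r}\le\tfrac{\lambda}{4r}$ and $K_{-i}\le N-1$), the usual doubling-of-variables argument applies: assuming $\sup(u-v)>0$, one penalizes with $\tfrac{1}{2\varepsilon}|p-q|^2$, extracts the viscosity inequalities at the doubled maximizer, and lets the properness constant $1$ dominate the gradient mismatch, which the Lipschitz-in-$u'$ bound forces to zero as $\varepsilon\to0$; this yields $\sup(u-v)\le0$, a contradiction.

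The main obstacle is that $K_{-i}$ is only finite piecewise Lipschitz and left-continuous, so $F$ is discontinuous in $p$ and the structure condition on the $p$-dependence that the doubling argument needs fails at the finitely many jump points. I would handle this by partitioning $[0,1]$ at the finitely many discontinuities of $K_{-i}$ into closed subintervals on which the coefficients are Lipschitz, running the comparison argument on each subinterval where $F$ is continuous and the structure condition holds, and then gluing across the interfaces. The gluing is feasible precisely because the endpoint degeneracy noted above pins the value at each interface while viscosity solutions are continuous, and because the monotone belief drift (beliefs only fall, by \eqref{eqn:belief_law}) gives a definite direction along which to propagate the inequality outward from the pinned value at $p=0$. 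Getting this interface bookkeeping clean — deciding which one-sided limit of $K_{-i}$ governs at a jump, thanks to left-continuity, and verifying the sub/supersolution inequalities survive the pasting — is the part that requires the most care; the rest is the standard proper-first-order-equation toolkit.
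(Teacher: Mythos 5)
Your proposal is essentially correct, but it takes a genuinely different route from the paper. The paper's proof is a three-line verification of the hypotheses (A0--A4) of Theorem 2.12 in \cite{bd1997}: compact action space, bounded dynamics Lipschitz in the state, positive discounting, and Lipschitz payoffs, after which existence and uniqueness of the viscosity solution are quoted wholesale. You instead reconstruct the machinery from scratch: you reduce \eqref{eqn:hjb_baseline} to the proper first-order form $F(p,u,u')=0$ via the positive-part identity $\max_{k\in[0,1]}k(b_I-c_I)=(b_I-c_I)^+$, get existence by identifying the candidate with the value function of the best-response MDP through the dynamic programming principle, and get uniqueness from a comparison principle whose two ingredients---strict properness with modulus $1$ in the $u$ slot and uniform Lipschitz dependence on $u'$---you verify correctly. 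What your route buys is transparency about exactly where the discontinuity of $K_{-i}$ bites: the structure condition in $p$ needed for the doubling argument fails at the jump points, which you handle by partitioning at the finitely many discontinuities and propagating uniqueness from $p=0$ in the direction of the monotone belief drift. This is arguably \emph{more} careful than the paper's own proof, which asserts that the law of motion is ``Lipschitz continuous in $p$'' even though the drift $-(k_i+K_{-i}(p))\lambda p(1-p)$ inherits the jumps of $K_{-i}$; the textbook theorem as stated assumes continuous dynamics, so the localization-and-gluing step you supply is precisely the repair the citation-based proof leaves implicit. What the paper's route buys is brevity and a clean identification of the solution with the value function without re-deriving comparison.

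One statement in your gluing step is imprecise and should be fixed: you write that ``the endpoint degeneracy \ldots pins the value at each interface.'' The coefficient $p(1-p)\lambda/r$ of $u'$ vanishes only at $p=0$ and $p=1$, so the equation degenerates to an algebraic relation in $u$ only there; at an interior jump point $p^{*}$ of $K_{-i}$ nothing degenerates, and the value at $p^{*}$ is pinned instead by the uniqueness already established on the adjacent lower subinterval together with continuity of viscosity solutions, with the monotone leftward drift then letting you treat $p^{*}$ as exit-time boundary data for the next subinterval up. You state this correct mechanism in the same sentence, so the architecture survives, but as written the justification conflates the two and should be disentangled before the interface bookkeeping can be checked.
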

The proof is in the Appendix, but intuitively requires checking that the problem is sufficiently well-behaved (compact action space, continuous and bounded law of motion that is Lipschitz in the state, and Lipschitz continuous payoffs in actions). Note that the generality of this setting implies that viscosity solutions are \textit{necessary} here, and the notion of a viscosity solution is not just an exercise for the sake of generality.\footnote{It happens that much of the literature can sidestep this concern because the parameter values work out to admit differentiable solutions, but for the generality I consider, nondifferentiable solutions are unavoidable.} That is, for a range of parameter values, there exists no differentiable solution to \eqref{eqn:hjb_baseline}. In those cases, a viscosity solution always exists, which is differentiable almost everywhere and satisfies the HJB equation when it is differentiable, with additional constraints at kinks.\footnote{These kinks occur when the experimentation game features overexperimentation; for a full characterization, see the Appendix. In such games, kinks occur precisely when experimentation stops, which is also the case in \cite{kr2015}; in this case, it is due to discontinuity in the strategies of the other players.}

I briefly discuss the economic intuition behind the HJB equation in \eqref{eqn:hjb_baseline}. The term $b_I$ defined in \eqref{eqn:def_bI} denotes the individual perceived continuation benefit of experimentation; that is, if the state of the world is good (which has perceived probability $p$), the agent generates a breakthrough and wins the contest at rate $\lambda$. The breakthrough generates a shift in continuation payoffs from $u(p)$ to $\pi_w$, and the final $(1-p)u'(p)$ term denotes the marginal downward effect on payoffs from continued experimentation if no breakthrough arrives. The $c_I$ expression defined in \eqref{eqn:def_cI} denotes the myopic opportunity cost of experimentation. Note also the presence of the externality term of others' effort; $K_{-i}$ brings about losing at perceived rate $p \lambda$, which yields a lump-sum reward $R_l$ but also the continuation benefit $b_I$, but under the losing state (hence subtracting the $\pi_w - \pi_l$ term). 

It is immediate from \eqref{eqn:hjb_baseline} that agent $i$ finds it optimal to take $k_i = 1$ if $b_I > c_I$, $k_i = 0$ if $b_I < c_I$, and any action if $b_I = c_I$. However, $b_I$ is a complex mathematical object, since it depends both on the value function $u$ as well as the derivative of the value function, $u'$. However, a standard argument allows me to simplify the best-response policy such that in any MPE, the best-response policy must satisfy a simpler condition that eliminates the dependence on $u'$.\footnote{This technique originally appeared in \cite{bh1999}.}

\begin{lemma}\label{lem:baseline_br}
The best-response policy in any equilibrium satisfies:
\begin{equation}\label{eqn:baseline_br}
    k_i = \begin{cases}
    0 & u(p) < \pi_s + K_{-i}(p)\left(\pi_s - p\lambda (R_w - R_l) - \frac{p\lambda}{r}\left( \pi_w - \pi_l \right) \right) \\
    \in [0,1] & u(p) = \pi_s + K_{-i}(p)\left(\pi_s - p\lambda (R_w - R_l)- \frac{p\lambda}{r}\left( \pi_w - \pi_l \right) \right)  \\
    1 & u(p) > \pi_s + K_{-i}(p)\left(\pi_s - p\lambda (R_w - R_l) - \frac{p\lambda}{r}\left( \pi_w - \pi_l \right) \right)  \\
    \end{cases}
\end{equation}
\end{lemma}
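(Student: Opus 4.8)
The plan is to eliminate the dependence on $u'$ from the optimality condition $b_I \gtrless c_I$ by substituting the optimal action back into the Hamilton--Jacobi--Bellman equation \eqref{eqn:hjb_baseline}, the device originating in \cite{bh1999}. Write
\[ W(p) := \pi_s + K_{-i}(p)\left(\pi_s - p\lambda(R_w - R_l) - \frac{p\lambda}{r}(\pi_w - \pi_l)\right) \]
for the threshold appearing on the right-hand side of the three branches of \eqref{eqn:baseline_br}, and recall from the discussion preceding the lemma that the pointwise maximizer in \eqref{eqn:hjb_baseline} is $k_i = 1$ when $b_I > c_I$, $k_i = 0$ when $b_I < c_I$, and any $k_i \in [0,1]$ when $b_I = c_I$. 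The goal is to show that the sign of $u(p) - W(p)$, which does not involve $u'$, coincides with the sign of $b_I - c_I$, which does; this is exactly the reformulation into value--belief space.

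The main step is an algebraic identity. Fixing a belief $p$ at which $u$ is differentiable and letting $k_i$ denote the optimal action and $K := K_{-i}(p) + k_i$ the induced total effort, I substitute the optimal $k_i$ into \eqref{eqn:hjb_baseline} and form $u(p) - W(p)$. The terms $\pm\frac{p\lambda}{r}(\pi_w - \pi_l)$ cancel, and the remaining coefficient of $K_{-i}$ collapses to $b_I - c_I$, because its non-$b_I$ part satisfies $p\lambda R_l + p\lambda(R_w - R_l) - \pi_s = p\lambda R_w - \pi_s = -(\pi_s - p\lambda R_w) = -c_I$; adding the $k_i(b_I - c_I)$ contributed by the maximized term gives the clean identity
\[ u(p) - W(p) = \left(K_{-i}(p) + k_i\right)(b_I - c_I) = K(p)\,(b_I - c_I). \]
Since $K = K_{-i} + k_i \ge 0$, the sign of $u - W$ matches that of $b_I - c_I$ whenever $K > 0$, which is precisely what converts the $u'$-dependent comparison into the $u'$-free comparison against $W$.

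With the identity in hand the three branches follow. If $b_I > c_I$ then $k_i = 1$, so $K \ge 1 > 0$ and $u - W = K(b_I - c_I) > 0$; conversely $u > W$ rules out the other two cases (both of which give $u \le W$), forcing $b_I > c_I$ and hence $k_i = 1$. If $b_I < c_I$ then $k_i = 0$ and $u - W = K_{-i}(b_I - c_I) \le 0$, so $u \le W$, and any $u < W$ must arise this way, giving $k_i = 0$. Finally $b_I = c_I$ yields $u = W$ with any $k_i \in [0,1]$ admissible; the only remaining subcase, $K_{-i} = 0$ together with $b_I < c_I$, also gives $u - W = 0$ while $k_i = 0 \in [0,1]$, so it too is consistent with the middle branch of \eqref{eqn:baseline_br}.

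The one subtlety, which I expect to be the main obstacle, is that by Lemma \ref{lem:viscosity} the value function $u$ is only a viscosity solution and need not be differentiable everywhere, so the identity above is established only on the full-measure set of differentiability points. The payoff of the substitution is that the resulting condition in \eqref{eqn:baseline_br} involves only $u$ and $K_{-i}$, not $u'$, so it is well defined at every belief, including the kinks. Since both the optimal policy and the belief law of motion \eqref{eqn:belief_law} are insensitive to modifications of the policy on a measure-zero set of beliefs, the characterization derived almost everywhere pins down the best-response policy; I would close by invoking the left-continuity convention imposed on strategies to extend the policy across the measure-zero set of kinks, and by confirming the degenerate $K_{-i} = 0$ boundary is handled as above.
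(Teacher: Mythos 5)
Your proposal is correct and follows essentially the same route as the paper: both arguments add and subtract $K_{-i}(p)\,c_I(p)$ in the HJB so that the externality term collapses into $K_{-i}(b_I - c_I)$ plus the $u'$-free threshold $W(p)$, and then read off the three branches from the sign of $b_I - c_I$; your identity $u - W = (K_{-i}+k_i)(b_I - c_I)$ is just a compact packaging of the paper's case-by-case inequalities. The extra care you take with differentiability points of the viscosity solution is sound and goes slightly beyond what the paper writes explicitly, but does not change the argument.
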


A key simplification made by Lemma \ref{lem:baseline_br} is that the conditions for the best-response policy now depend only on $u$ and $p$ (and not $u'$). That is, as a thought experiment, consider the best-response policies when $K_{-i}(p)$ is fixed at a constant level $K_{-i}$. Define the level curves for each $K_{-i}$, 
\[ \mathcal{D}_{K_{-i}} = \left\{(p,u) \in [0,1]\times \mathbb{R}_+ \left\vert \, u = \pi_s + K_{-i}\left(\pi_s - p\lambda (R_w - R_l) - \frac{p\lambda}{r}\left( \pi_w - \pi_l \right) \right)\right.\right\} . \] 
If the belief is $p$, Lemma \ref{lem:baseline_br} implies that the best-response policy should be to exert effort when the current state $(p, u(p))$ lies above or below the level curve $\mathcal{D}_{K_{-i}}$. Note that $\mathcal{D}_0$ is a flat line $u = \pi_s$, and $\mathcal{D}_{K_{-i}}$ is a downward-sloping line for $K_{-i} > 0$. Note further that all the $\mathcal{D}_{K_{-i}}$ intersect at the same point in the $(u, p)$ plane, when $u = \pi_s$ and at belief $p_\times$ characterized by:
\[ \pi_s = p_\times \left(\lambda (R_w - R_l) + \frac{\lambda}{r}\left(\pi_w - \pi_l \right) \right)  \]
\begin{equation}\label{eqn:belief_cross}
    p_\times = \frac{\pi_s}{\lambda (R_w - R_l) + \frac{\lambda}{r}\left(\pi_w - \pi_l \right)}.
\end{equation}

Since each of the level curves $\mathcal{D}_{K_{-i}}$ intersect at $p_\times$, one can informally see that if $p_\times > p_{FB}$ the first-best solution is not an equilibrium of the noncooperative game; intuitively, this is because the point $(p_{FB}, \pi_s)$ lies below $\mathcal{D}_{N-1}$, and so Lemma \ref{lem:baseline_br} implies that the best-response policy of some agent falls into the first category, and so someone should have stopped experimenting earlier.

It turns out that the characterization of whether $p_\times$ is larger or smaller than $p_{FB}$ depends on a simple condition:\footnote{A degenerate case arises if $R_w = R_l$ and $\pi_w = \pi_l$ (i.e., the ``collaborating'' case from \cite{bh2011}). In this case, set $p_\times = \infty$, as the level curves $\mathcal{D}_{K_{-i}}$ are parallel.)} 

\begin{lemma}
\label{lem:ptimes_pFB_comparison}
    If $\frac{\pi_s - \pi_l}{r} > R_l$, then $p_\times < p_{FB}$. If $\frac{\pi_s - \pi_l}{r} < R_l$, then $p_\times > p_{FB}$. If $\frac{\pi_s - \pi_l}{r} = R_l$, $p_\times = p_{FB}$.
\end{lemma}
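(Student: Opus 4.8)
The plan is to reduce the three-way comparison to the sign of a single quantity by exploiting the fact that $p_{FB}$ and $p_\times$ are fractions with the \emph{same} positive numerator $\pi_s$. Writing $p_{FB} = \pi_s/D_{FB}$ and $p_\times = \pi_s/D_\times$ with
\[ D_{FB} = \lambda R + \tfrac{\lambda}{r}(\Pi - N\pi_s), \qquad D_\times = \lambda(R_w - R_l) + \tfrac{\lambda}{r}(\pi_w - \pi_l), \]
the comparison of $p_{FB}$ and $p_\times$ becomes the reciprocal comparison of $D_{FB}$ and $D_\times$, once the signs of both denominators are pinned down.

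First I would substitute the accounting identities $R = R_w + (N-1)R_l$ and $\Pi = \pi_w + (N-1)\pi_l$ into $D_{FB}$, so that both denominators are expressed in the common variables $R_w, R_l, \pi_w, \pi_l, \pi_s$. Then I would form the difference $D_\times - D_{FB}$ and simplify: the $\lambda R_w$ and $\tfrac{\lambda}{r}\pi_w$ terms cancel, and collecting the remaining $R_l$, $\pi_l$, and $\pi_s$ terms yields the clean factorization
\[ D_\times - D_{FB} = \lambda N\left( \frac{\pi_s - \pi_l}{r} - R_l \right). \]
Since $\lambda N > 0$, the sign of $D_\times - D_{FB}$ is exactly the sign of $\tfrac{\pi_s-\pi_l}{r} - R_l$, so the cases $D_\times > D_{FB}$, $D_\times < D_{FB}$, $D_\times = D_{FB}$ correspond precisely to the three cases in the hypothesis.

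To translate back from denominators to the beliefs I would use that $D_{FB} > 0$, which follows from $R > 0$ together with the maintained assumption $\Pi > N\pi_s$. In the nondegenerate case $D_\times > 0$ as well, so both beliefs are positive reciprocals of their denominators and the order reverses: $D_\times > D_{FB} \iff p_\times < p_{FB}$. Combining this with the factorization delivers all three claims simultaneously.

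The one point requiring care — and the main (minor) obstacle — is the sign of $D_\times$: the reciprocal comparison flips cleanly only when $D_\times$ and $D_{FB}$ share a sign. When $\tfrac{\pi_s-\pi_l}{r} < R_l$ the relation $D_\times - D_{FB} < 0$ bounds $D_\times$ above by $D_{FB}$ but does not by itself keep $D_\times$ positive; if $D_\times$ falls to zero or below, $p_\times$ is no longer a finite positive belief, and I would invoke the convention $p_\times = \infty$ (as in the degenerate parallel case noted after \eqref{eqn:belief_cross}), under which $p_\times > p_{FB}$ still holds. This boundary bookkeeping is the only subtlety; the substance of the argument is the one-line cancellation that produces the factor $\tfrac{\pi_s-\pi_l}{r} - R_l$.
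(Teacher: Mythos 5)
Your proposal is correct and follows essentially the same route as the paper: both substitute the accounting identities $R = R_w + (N-1)R_l$ and $\Pi = \pi_w + (N-1)\pi_l$ to rewrite the denominator of $p_\times$ as the denominator of $p_{FB}$ plus the term $N\lambda\bigl(\tfrac{\pi_s-\pi_l}{r} - R_l\bigr)$, then read off the comparison from the sign of that term. Your explicit attention to the sign of $D_\times$ (handled via the $p_\times = \infty$ convention) is a small point of extra care that the paper's own proof leaves implicit.
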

\begin{proof}
Note that 
\[ \pi_w - \pi_l = \pi_w + (N-1)\pi_l - N\pi_s + N(\pi_s - \pi_l)\]
and
\[ R_w - R_l = R_w + (N-1)R_l - NR_l = R - NR_l \le R . \]
Using this, we can rewrite Equation \eqref{eqn:belief_cross} as
\begin{align*} p_\times &= \frac{\pi_s}{\lambda R - N\lambda R_l + \frac{\lambda}{r}\left(\pi_w + (N-1)\pi_l - N\pi_s + N(\pi_s - \pi_l)\right)}\\
&= \frac{\pi_s}{\lambda R  + \frac{\lambda}{r}\left(\Pi- N\pi_s\right) + N\lambda\left( \frac{\pi_s - \pi_l}{r} - R_l \right)}.
\end{align*}
Thus, whether $p_\times$ is larger or smaller than $p_{FB}$ exactly depends on the sign of the last term in the denominator, 
\[ N\lambda\left( \frac{\pi_s - \pi_l}{r} - R_l \right). \]
Checking each case of the parenthesized quantity thus yields the result.
\end{proof}

Another important benchmark to consider is the belief where experimentation would stop if $K_{-i} = 0$ (that is, no other agents were experimenting). If $K_{-i}$ was fixed to zero, then the best-response of a single agent is just an optimal control problem with HJB equation
\begin{equation}\label{eqn:indiv_hjb}
    u(p) = \pi_s + \max_{k_i} \left[k_i \left( p\frac{\lambda}{r}(\pi_w - u(p) - (1-p)u'(p))- c_I(p) \right)  \right],
\end{equation}  
or equivalent to the cooperative HJB equation with one agent, Equation \eqref{eqn:hjb_cooperative} with $N=1$ and total instantaneous payoff $R_w$. By Theorem \ref{thm:cooperative}, the best response is then a cutoff strategy with threshold belief 
\begin{equation}\label{eqn:belief_indiv}
    p_I = \frac{\pi_s}{\lambda R_w + \frac{\lambda}{r}(\pi_w - \pi_s)} 
\end{equation} 
Intuitively, this belief quantifies the individual incentive to exert effort on research; if $p_{I} < p_{FB}$, intuitively the first-best solution cannot be sustained because some agent has an incentive to continue working on research at $p_{FB}$ if every other agent stops.

Remarkably, if $p_I \ge p_{FB}$, then $p_\times \ge p_{FB}$, and if $p_I < p_{FB}$ then $p_\times < p_{FB}$. In fact, the following stronger characterization is true:
\begin{lemma}\label{lem:pI_characterization}
    The cutoff $p_I$ lies between $p_{FB}$ and $p_\times$.
\end{lemma}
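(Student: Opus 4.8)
The plan is to reduce the ``between'' claim to a single sign computation on denominators. First I would observe that $p_{FB}$, $p_I$, and $p_\times$ all share the numerator $\pi_s$ and differ only in their denominators; writing $p_{FB} = \pi_s/D_{FB}$, $p_I = \pi_s/D_I$, $p_\times = \pi_s/D_\times$ with
\[ D_{FB} = \lambda R + \tfrac{\lambda}{r}(\Pi - N\pi_s), \qquad D_I = \lambda R_w + \tfrac{\lambda}{r}(\pi_w - \pi_s), \qquad D_\times = \lambda(R_w - R_l) + \tfrac{\lambda}{r}(\pi_w - \pi_l), \]
all positive (the first by $R>0$ and $\Pi > N\pi_s$, the others as valid belief cutoffs). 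Since $x \mapsto \pi_s/x$ is monotone on the positive reals, the statement ``$p_I$ lies between $p_{FB}$ and $p_\times$'' is equivalent to ``$D_I$ lies between $D_{FB}$ and $D_\times$,'' that is, to the single inequality $(D_{FB}-D_I)(D_\times - D_I) \le 0$.

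Next I would compute the two differences directly, substituting $\Pi = \pi_w + (N-1)\pi_l$ and $R = R_w + (N-1)R_l$ exactly as in the proof of Lemma \ref{lem:ptimes_pFB_comparison}. The cancellation I expect is that both differences collapse onto the same scalar $\Delta := R_l - \frac{\pi_s - \pi_l}{r}$ that already governs Lemma \ref{lem:ptimes_pFB_comparison}: specifically I anticipate $D_{FB} - D_I = (N-1)\lambda\,\Delta$ and $D_\times - D_I = -\lambda\,\Delta$. Given this, the conclusion is immediate, since
\[ (D_{FB}-D_I)(D_\times - D_I) = -(N-1)\lambda^2 \Delta^2 \le 0, \]
so $D_I$ lies weakly between $D_{FB}$ and $D_\times$ and hence $p_I$ lies between $p_{FB}$ and $p_\times$. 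As a sanity check this is consistent with Lemma \ref{lem:ptimes_pFB_comparison}: when $\Delta < 0$ the ordering is $p_\times < p_I < p_{FB}$, when $\Delta > 0$ it is $p_{FB} < p_I < p_\times$, and when $\Delta = 0$ all three coincide.

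There is essentially no deep obstacle here; the work is bookkeeping, and the telescoping onto a single $\Delta$ is the crux. The only points requiring care are getting the reciprocal direction right (a larger denominator corresponds to a smaller belief) and the sign algebra in isolating $\Delta$. The one genuinely separate case is the degenerate ``collaborating'' configuration $R_w = R_l$, $\pi_w = \pi_l$, where $D_\times = 0$ and $p_\times = \infty$ by the convention of the footnote; there I would argue directly that $D_\times = 0$ forces $D_I = \lambda\Delta$, and since $D_I > 0$ (here $\pi_w > \pi_s$ follows from $\Pi > N\pi_s$) we obtain $\Delta > 0$, whence $D_{FB} > D_I$ and therefore $p_{FB} < p_I < \infty = p_\times$, as required. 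The edge $\pi_s = 0$ makes all three beliefs zero and the statement trivially holds.
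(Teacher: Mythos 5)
Your proof is correct and follows essentially the same route as the paper's: both rewrite the three denominators in terms of the common quantity $\frac{\pi_s-\pi_l}{r}-R_l$ entering with coefficients $0$, $N-1$, and $N$, so that the denominator of $p_I$ is sandwiched between those of $p_{FB}$ and $p_\times$ (the paper phrases this as a weighted average of denominators, you as the nonpositive product $(D_{FB}-D_I)(D_\times-D_I)\le 0$, which is the same computation). Your explicit handling of the degenerate case $R_w=R_l$, $\pi_w=\pi_l$ is a small bonus that the paper relegates to a footnote convention.
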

\begin{proof}
We can rewrite this similarly as we did with $p_\times$:
\begin{align*}
    p_I &= \frac{\pi_s}{\lambda R_w + \frac{\lambda}{r}(\pi_w - \pi_s)} \\
    &= \frac{\pi_s}{\lambda R + \frac{\lambda}{r}(\pi_w + (N-1)\pi_l - N\pi_s) + (N-1) \frac{\lambda}{r}\left( \pi_s - (rR_l + \pi_l) \right)}  \\
    &= \frac{\pi_s}{\lambda R + \frac{\lambda}{r}(\Pi - N\pi_s) + (N-1) \lambda \left(  \frac{\pi_s - \pi_l}{r} - R_l\right)} .
\end{align*}

Note that the denominator is almost exactly the same as $p_\times$, except that the last term in the denominator has a coefficient $(N-1)$ instead of $N$. 
In other words, the denominator of $p_I$ is the weighted average of the denominators of $p_\times$ and $p_{FB}$.
Hence, if $\frac{\pi_s - \pi_l}{r} > R_l$, $p_I \in [p_\times, p_{FB}]$, and if $\frac{\pi_s - \pi_l}{r} < R_l$, $p_I \in [p_\times, p_{FB}]$. If $\frac{\pi_s - \pi_l}{r} = R_l$, $p_I = p_\times = p_{FB}$. In any case, $p_I$ always lies between $p_\times$ and $p_{FB}$.
\end{proof}

Now, I formally show that $p_\times$ and $p_I$ provide bounds on the end of experimentation in noncooperative equilibria. 

First, $p_I$ provides an upper bound on the end of experimentation under certain conditions. That is, define a \textit{weakly monotonic} strategy as one where the strategy is weakly monotonic in the belief (and note that cutoff strategies are weakly monotonic). Then the following holds:
\begin{lemma}\label{lem:stopping_upper_bound}
    Suppose that in some MPE, all agents use weakly monotonic strategies. Then experimentation cannot stop at any $p > p_I$.
\end{lemma}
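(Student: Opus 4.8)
The plan is to argue by contradiction, reducing agent $i$'s incentives just below the stopping belief to the single-agent control problem whose cutoff is $p_I$. Suppose, contrary to the claim, that in the given MPE experimentation stops at some $p^* > p_I$; here ``stops at $p^*$'' means that total effort satisfies $K(p) = 0$ for all $p < p^*$. Since every agent plays a weakly monotonic (i.e., weakly increasing) strategy and $K = \sum_j \sigma_j$, the vanishing of the total below $p^*$ forces each individual effort to vanish there: $\sigma_j(p) = 0$ for every $j$ and every $p < p^*$. In particular, for the agent $i$ under consideration, $K_{-i}(p) = \sum_{j\neq i}\sigma_j(p) = 0$ for all $p < p^*$.

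First I would exploit the monotonicity of beliefs. By the belief law \eqref{eqn:belief_law}, along any no-breakthrough path the belief is non-increasing whenever $K \ge 0$, so starting from any $p < p^*$ the belief can never re-enter $(p^*,1]$; absent the game-ending breakthrough it only drifts downward (or stays put when no one experiments). Hence the set $[0,p^*)$ is forward-invariant for the belief process. Consequently, agent $i$'s continuation value at any $p < p^*$ depends only on play on $[0,p^*)$, where $K_{-i} \equiv 0$. Restricted to this region, agent $i$'s best-response problem is therefore exactly the single-agent optimal control problem with HJB equation \eqref{eqn:indiv_hjb} (equivalently, the cooperative problem of \eqref{eqn:hjb_cooperative} with $N=1$, instantaneous reward $R_w$, and continuation $\pi_w$).

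By Theorem \ref{thm:cooperative} applied to this single-agent problem, the unique optimal policy is a cutoff strategy with threshold $p_I$ given in \eqref{eqn:belief_indiv}: the agent experiments ($k_i = 1$) at every belief above $p_I$ and stops below it. Because the dynamics and payoffs on the forward-invariant set $[0,p^*)$ coincide with those of the single-agent problem, this optimal policy restricts to ``experiment on $(p_I, p^*)$.'' Since $p^* > p_I$, the interval $(p_I, p^*)$ is nonempty, and on it the uniquely optimal action is $k_i = 1$. This contradicts the equilibrium requirement $\sigma_i(p) = 0$ on $(p_I, p^*)$: agent $i$ has a strictly profitable deviation to experiment there. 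Hence no MPE in weakly monotonic strategies can have experimentation stop above $p_I$, i.e., $p^* \le p_I$, which is the claim.

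The main obstacle is making the reduction in the second paragraph fully rigorous, namely that agent $i$'s value function on $[0,p^*)$ genuinely equals the single-agent value function there and is unaffected by what happens at beliefs above $p^*$. This rests on the forward-invariance of $[0,p^*)$ under \eqref{eqn:belief_law} together with the left-continuity and piecewise-Lipschitz restrictions on strategies, which guarantee the belief process and the induced best-response MDP are well-defined on the restricted domain; the boundary behavior exactly at $p^*$ (and the left-continuity convention there) must be handled carefully so that the strict suboptimality of stopping on an open interval $(p_I,p^*)$ is preserved. Once the problem is correctly localized to $[0,p^*)$, invoking Theorem \ref{thm:cooperative} and the strictness of the single-agent cutoff delivers the profitable deviation.
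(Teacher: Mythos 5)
Your proposal is correct and follows essentially the same route as the paper: use weak monotonicity to conclude that no one experiments below the purported stopping belief, so that agent $i$'s continuation problem there is the single-agent problem with cutoff $p_I$, and derive a contradiction from the fact that experimenting is strictly optimal on $(p_I, p^*)$. The paper packages this as a direct violation of the HJB equation (the value $\pi_s$ cannot satisfy \eqref{eqn:indiv_hjb} at any $p \in (p_I, p_T)$ since the maximand is strictly positive there), whereas you exhibit the profitable deviation by citing Theorem \ref{thm:cooperative} with $N=1$, but the substance is identical.
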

\begin{proof}
    Suppose, for sake of contradiction, that experimentation stops at some $p_T > p_I$ for some MPE in weakly monotonic strategies. Since agents stopped experimenting at $p_T$, they exert zero effort at any $p < p_T$ since the equilibrium is in weakly monotonic strategies, and hence the value function $u(p) = \pi_s$ for $p < p_T$ by the boundary condition. Consider the best-response problem of an arbitrary agent $i$. By the boundary conditions and the HJB equation, at any point $p \in [p_I, p_T]$, the HJB equation indicates that
    \begin{align*} 
    \pi_s = &\pi_s + \max_{k_i} \left[k_i \left( p \frac{\lambda}{r}(\pi_w - \pi_s) - (\pi_s - p \lambda R_w) \right)  \right] \\
    0 =&  \max_{k_i} \left[k_i \left( p\frac{\pi_s}{p_I}  - \pi_s \right)  \right].
    \end{align*}
However, this is a contradiction; since $p > p_I$, the maximal $k_i$ is 1, and hence the RHS here cannot be zero, but $p_T > p$. Hence, experimentation cannot stop at $p_T > p_I$.

\end{proof}

Now, I show that $p_\times$ can provide a lower bound on the end of experimentation when $\frac{\pi_s - \pi_l}{r} \ge R_l $.
\begin{lemma}\label{lem:stopping_lower_bound}
    Suppose that $\frac{\pi_s - \pi_l}{r} \ge R_l $. Then in any MPE, experimentation must stop at some $p \ge p_\times$.
\end{lemma}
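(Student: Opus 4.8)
The plan is to squeeze out experimentation below $p_\times$ by confronting two opposing bounds on the equilibrium value function. On one side, Lemma~\ref{lem:baseline_br} forces an agent who exerts positive effort to have value weakly above the level curve $\mathcal{D}_{K_{-i}}$; writing $\beta(p) := \pi_s - p\lambda(R_w - R_l) - \frac{p\lambda}{r}(\pi_w - \pi_l)$ for the common slope term, that curve equals $\pi_s + K_{-i}(p)\beta(p)$. Since the winner's rewards weakly exceed the losers', $\beta$ is decreasing in $p$, and it vanishes at $p_\times$, so $\beta(p) > 0$ for every $p < p_\times$. On the other side, no equilibrium can beat the planner on average, and below $p_{FB}$ the planner simply abstains. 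Because the hypothesis $\frac{\pi_s - \pi_l}{r} \ge R_l$ places $p_\times \le p_{FB}$ by Lemma~\ref{lem:ptimes_pFB_comparison}, the two bounds meet head-on for beliefs in $(0,p_\times)$.

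Concretely, I would first record the welfare bound: by Theorem~\ref{thm:cooperative} the cooperative value satisfies $V_N(p) = \pi_s$ for all $p \le p_{FB}$, and since any MPE strategy profile is a feasible Markov policy for the planner (yielding average realized payoff equal to $\frac{1}{N}\sum_i u_i(p)$), optimality of $V_N$ gives $\frac{1}{N}\sum_i u_i(p) \le V_N(p) = \pi_s$ on $p \le p_{FB}$, hence on $p \le p_\times$. Next I would record the geometric bound from Lemma~\ref{lem:baseline_br}: any agent $i$ with $k_i(p) > 0$ satisfies $u_i(p) \ge \pi_s + K_{-i}(p)\beta(p)$, which for $p < p_\times$ is at least $\pi_s$, strictly so whenever a rival is active.

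For a symmetric MPE the contradiction is immediate: all values coincide with the average, so $u(p) \le \pi_s$, yet positive effort $\sigma(p) > 0$ would require $u(p) \ge \pi_s + (N-1)\sigma(p)\beta(p) > \pi_s$. Hence no agent experiments below $p_\times$, so experimentation terminates at some belief $p \ge p_\times$; note that monotonicity of strategies is not needed for this step. For a general profile I would aggregate: summing the experimenters' inequalities and using $\sum_i u_i(p) \le N\pi_s$ already closes the case where every agent experiments, and a lone experimenter is impossible because $K_{-i}(p) = 0$ collapses agent $i$'s problem to the single-agent control problem, whose cutoff $p_I$ lies weakly above $p_\times$ by Lemma~\ref{lem:pI_characterization}, making effort at $p < p_\times \le p_I$ suboptimal.

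The step I expect to be the main obstacle is the intermediate asymmetric case, in which some but not all agents experiment. There the passive agents' values can fall strictly below $\pi_s$—being a loser is worse than the status quo precisely when $\frac{\pi_s - \pi_l}{r} \ge R_l$—so the crude welfare aggregation no longer forces a contradiction, since the small passive values can offset the large experimenting ones while keeping the average at $\pi_s$. To handle this I would either rely on the paper's maintained focus on symmetric equilibria, or sharpen the argument by bounding each passive agent's continuation value from below through the loss arrival dynamics and combining this with the single-agent threshold comparison, thereby recovering the collision between the welfare and geometric bounds.
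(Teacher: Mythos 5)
Your route is genuinely different from the paper's: you pit a global welfare bound ($\tfrac{1}{N}\sum_i u_i(p)\le V_N(p)=\pi_s$ for $p\le p_{FB}$, hence for $p\le p_\times$ under the hypothesis) against the geometric lower bound from Lemma~\ref{lem:baseline_br}. For symmetric equilibria this collision is clean and arguably slicker than the paper's argument, and your all-experimenters aggregation also goes through. But the gap you flag at the end is real and cannot be waved away: the lemma is stated for \emph{any} MPE and is used that way downstream --- the uniqueness half of Theorem~\ref{thm:noncoop_efficient} invokes it before symmetry has been established, and Proposition~\ref{prop:overcompetitive_2} explicitly relies on it holding for asymmetric profiles. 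In the mixed case (some agents active, some passive at $p<p_\times$) the passive agents' values can indeed sit strictly below $\pi_s$ when $\tfrac{\pi_s-\pi_l}{r}\ge R_l$, so the average can stay at $\pi_s$ while every active agent's value exceeds it, and the welfare aggregation yields no contradiction. Your proposed repair via ``loss arrival dynamics'' is not carried out, and your lone-experimenter step is also loose: $K_{-i}(p)=0$ at a single belief does not collapse agent $i$'s problem to the single-agent control problem with cutoff $p_I$; that identification needs $K_{-i}\equiv 0$ on a neighborhood below the relevant belief.

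The paper sidesteps all of this by arguing locally at the stopping boundary with a \emph{single} experimenting agent's value function, never touching the welfare bound. If experimentation stops at $p_T<p_\times$, pick an agent active down to $p_T$, so $u(p_T)=\pi_s$; the first-order condition for positive effort along a sequence $p_n\downarrow p_T$ (where $u'$ exists, since $u$ is a viscosity solution) forces
\[
-\tfrac{\lambda}{r}\,p_T(1-p_T)\,u'_+(p_T) \;>\; \pi_s\Bigl(1-\tfrac{p_T}{p_I}\Bigr),
\]
and the right-hand side is positive precisely because $p_T<p_\times\le p_I$ (Lemma~\ref{lem:pI_characterization}), so $u'_+(p_T)<0$ and $u$ dips strictly below $\pi_s$ just above $p_T$. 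Since every level curve $\mathcal{D}_{K_{-i}}$ passes through $(p_\times,\pi_s)$ with nonpositive slope, that point lies below all of them, contradicting Lemma~\ref{lem:baseline_br} for an agent still exerting positive effort there. This single-agent local argument is exactly the ingredient your aggregate approach is missing; to complete your proof you would need either to adopt it for the mixed asymmetric case or to restrict the lemma's statement (and its downstream uses) to symmetric equilibria, which the paper does not do.
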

\begin{proof}
Suppose for sake of contradiction that experimentation stops at some $p_T < p_\times$. Then, some agent was exerting a positive amount of effort at beliefs down to $p_T$. Let $u$ be the value function of that agent. Since experimentation stops at $p_T$, $u(p_T) = \pi_s$. Since $u$ must be a viscosity solution to the best-response HJB equation for some $K_{-i}(p)$, we can take a sequence $p_n \to p_T$ such that $u'(p_n)$ is well defined, $p_n > p_T$. Since the agent was exerting a positive amount of effort, $k_i(p_n) > 0$, so it must have been the case that 
\[ p_n\frac{\lambda}{r} \left(\pi_w - u(p_n) - (1-p_n)u'(p_n)\right) > \pi_s - p_n \lambda R_w\]
\[ p_n \left(\lambda R_w + \frac{\lambda}{r} \left(\pi_w - u(p_n) - (1-p_n)u'(p_n)\right)\right) > \pi_s\]
\[ p_n \left(\frac{\pi_s}{p_I}+ \frac{\lambda}{r} \left(\pi_s - u(p_n) - (1-p_n)u'(p_n)\right)\right) > \pi_s\]
\[ p_n \left(\frac{\lambda}{r} \left(\pi_s - u(p_n) - (1-p_n)u'(p_n)\right)\right) > \pi_s\left(1 - \frac{p_n}{p_I} \right).\]
Taking the limit as $p_n \to p_T$,
\[ - \frac{\lambda}{r} p_T(1-p_T)u'_+(p_T) > \pi_s\left(1 - \frac{p_T}{p_I} \right),\]
where $u'_+$ denotes the right derivative, since $u$ need not be differentiable at $p_T$. The right-hand side is positive because $\frac{\pi_s - \pi_l}{r} \ge R_l $ implies that $p_T < p_\times \le p_I$. Therefore, it must be the case that $u'_+(p_T) < 0$. Hence, there must be some point $p \in [p_T, p_\times]$ such that $u(p) < \pi_s$. However, this implies that the point $p, u(p)$ lies below $\mathcal{D}_{K_{-i}}$ for all $K_{-i} \in [0, N-1]$ (since every $\mathcal{D}$ has a nonpositive slope passing through $(p_\times, \pi_s)$) and an equilibrium exists where some agent is exerting a positive measure of effort at that point, a contradiction of Lemma \ref{lem:baseline_br}. 
\end{proof}

Now, I present the proof intuition (the formal proof is left to the Appendix). 

\paragraph{Proof Intuition} To show that the efficient solution is an MPE if $\frac{\pi_s - \pi_l}{r} = R_l$, it suffices to check that the average value function solving the first-best solution HJB equation also solves the best-response HJB equation when all other agents play cutoff strategies at $p_T$. 
To show the converse, note that if $\frac{\pi_s - \pi_l}{r} < R_l$, Lemma \ref{lem:ptimes_pFB_comparison} implies that $p_\times > p_{FB}$, so the point $(p_{FB}, \pi_s)$ lies strictly interior in the half plane below $\mathcal{D}_{N-1}$. (See Figure \ref{fig:ptimes} for an illustration.) Thus, playing a cutoff strategy at $p_{FB}$ cannot be a best response by Lemma \ref{lem:baseline_br}, since it would imply $k_i = 1$ below $\mathcal{D}_{N-1}$. In the other case, if $\frac{\pi_s - \pi_l}{r} > R_l$, Lemmas \ref{lem:ptimes_pFB_comparison} and \ref{lem:pI_characterization} imply that $p_I < p_{FB}$. Thus, the efficient solution cannot be an MPE, since it would imply that there exists an MPE where agents use weakly monotonic strategies but experimentation stops at $p_{FB} > p_I$, a contradiction of Lemma \ref{lem:stopping_lower_bound}. Together, this implies that the efficient solution is an MPE iff $\frac{\pi_s - \pi_l}{r} = R_l$.
Finally, to show uniqueness, note that Lemma \ref{lem:stopping_upper_bound} shows that if $\frac{\pi_s - \pi_l}{r} = R_l$, experimentation must stop at or above $p_\times = p_{FB}$. I then show that regardless of what the other agents do, the best response for an agent is to use a cutoff strategy at $p_{FB}$, and hence the only MPE must be the efficient solution.

\begin{figure}
    \centering
         \begin{tikzpicture}[scale=0.8]
            \draw[->, thick] (-1,0) -- (5,0) node[anchor=west]{$p$};
            \draw[->, thick] (0,-1,0) -- (0,5) node[anchor=south]{$u$};
            \draw[dotted] (1,5) -- (1,0) node[anchor=north]{$p_{FB}$};
            \draw[teal] (2,1.5) -- (0,1.5) node[anchor=east]{$\pi_s$};
            \draw[teal, dashed] (1,1.5) .. controls (2,1.5) and (3,2) .. (5,5) node[anchor=south]{$V_{FB}$};
            \draw (0,1.5) -- (5,1.5) node[anchor=west]{$\mathcal{D}_0$};
            \draw (0,4) -- (4.5,0.7) node[anchor=west]{$\mathcal{D}_{N-1}$};
            \draw[dotted] (3.41,5) -- (3.41,0) node[anchor=north]{$p_\times$};
        \end{tikzpicture}
    \caption{Intuitive depiction for why the first-best solution cannot be an equilibrium of the noncooperative game if $p_\times > p_{FB}$. $V_{FB}$ is the value function corresponding to the first-best solution. Note that Lemma \ref{lem:baseline_br} implies that it cannot be an equilibrium best response to continue experimenting once $V_{FB}$ falls below $\mathcal{D}_{N-1}$.}
    \label{fig:ptimes}
\end{figure}
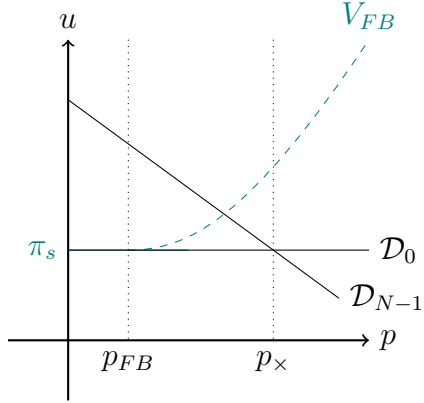

\section{Sharing Contracts}
\label{sec:contracts}
A logical way the agents might wish restore cooperative efficiency is if they can agree ex ante to a contract that specifies how to split the rewards from experimentation in the event of a breakthrough. Thus, in this section, I consider the problem of a regulator (or contest designer) who observes the outcome of the baseline experimentation game and decides how to award payoffs. I first formalize the broad mathematical definition of a sharing contract. I then show that within this broad class of contracts, efficiency can be restored by very simple contracts that only require the regulator to observe \textit{part} of the outcome. More concretely, a regulator can restore efficiency if the regulator observes the winner/losers, or if the regulator observes the profile of effort at the end of the game. Notably, for the regulator, either piece of information is sufficient to restore efficiency. That is, the regulator does not need to observe the full history of effort.

\subsection{Contract Formalism}
Recall that an outcome of the baseline experimentation game is a triple $(\tau, w, h_\tau)$, where $\tau\in \mathbb{R}_+ \cup \{ \infty \}$ is the stopping time corresponding to the arrival of the breakthrough, $w \in \{1, 2, ... N\}$ is the winner, and $h_\tau$ is the history of effort paths leading up to time $\tau$. Let $\mathcal{O}$ be the space of outcomes. Define a \textit{sharing contract} as a map $c : \mathcal{O} \to \mathbb{R}_+^{N} \times \mathbb{R}_+^N$ such that for $c(\tau, w, h_\tau) = ((R_1, ...R_N) , (\pi_1, ... \pi_N))$, then $\sum_{i}  R_i = R$ and $\sum_{i}  \pi_i = \Pi$. Intuitively, the sharing contracts map outcomes of the experimentation game into instantaneous and continuation payoffs for the agents, fixing the total instantaneous payoff at $R$ and the total continuation payoff at $\Pi$. 

The timing is then as follows: before the experimentation game is played, the regulator commits to a sharing contract $c$. The agents play the experimentation game in the baseline model. Given an outcome $(\tau, w, h_\tau)$ let $c(\tau, w, h_\tau) = ((R_1, ...R_N) , (\pi_1, ... \pi_N))$. Then agent $i$ receives $R_i$ instantaneous reward at the time of the breakthrough and $\pi_i$ of the continuation reward.

Note that although I define contracts as maps from the full outcome to payoffs, I show in the next two sections that simple contracts that map on much less information can restore efficiency. 

\subsection{Contracting on Winner/Losers}
First, I show that to restore efficiency, it is sufficient for sharing contracts to only condition on $w$; further, within this class of contracts, it suffices to focus on simple linear contracts. In particular, I consider linear sharing contracts that provide the winner a fraction $\alpha_I$ of the instantaneous reward and $\alpha_C$ of the continuation reward and split the remainder of the rewards evenly among all participants. That is, formally, 
\[ c^W_{\alpha_I, \alpha_C}(w) =  \left(\left\{ \left(\alpha_I \mathbbm{1}[w = i] + \frac{1-\alpha_I}{N}\right) \cdot R\right\}_i, \left\{ \left(\alpha_C \mathbbm{1}[w = i] + \frac{1-\alpha_C}{N}\right) \cdot \Pi \right\}_i\right) \]
where the superscript $W$ refers to the fact that these contracts condition on the observation of $w$. Note that depending on the values of $\alpha_I, \alpha_C$, the contract might ex post result in the losers making payments to the winner or the winner compensating the losers.

Recall that in the case $\frac{\pi_s - \pi_l}{r} = R_l$, Theorem \ref{thm:noncoop_efficient} shows that the unique competitive MPE outcome is the efficient solution. To that end, define the guarantee of a $c^W_{\alpha_I, \alpha_C}$ contract as

\begin{equation}\label{eqn:def_guarantee}
    G\left(c^W_{\alpha_I, \alpha_C}\right) := rR\left(\frac{1-\alpha_I}{N} \right) + \Pi\left(\frac{1-\alpha_C}{N}\right) .
\end{equation} 

Intuitively, this is the expected flow value of a loser upon a breakthrough. The following result follows immediately from Theorem  \ref{thm:noncoop_efficient} and characterizes efficient contracts in terms of the guarantee of the contract: 

\begin{theorem}\label{thm:guarantee}
The first-best solution is an MPE of the game under sharing contract $c^W_{\alpha_I, \alpha_C}$ if and only if $G\left(c^W_{\alpha_I, \alpha_C}\right)  = \pi_s$. Furthermore, if $G\left(c^W_{\alpha_I, \alpha_C}\right)  = \pi_s$, then the first-best solution is the unique MPE. 
\end{theorem}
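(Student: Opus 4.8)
The plan is to reduce Theorem \ref{thm:guarantee} entirely to Theorem \ref{thm:noncoop_efficient} by showing that the linear contract $c^W_{\alpha_I,\alpha_C}$ simply induces a specific set of the loser payoff parameters $(R_l,\pi_l)$ in the baseline game, and then translating the baseline efficiency condition $\frac{\pi_s-\pi_l}{r}=R_l$ into the guarantee condition $G\left(c^W_{\alpha_I,\alpha_C}\right)=\pi_s$. First I would read off from the definition of $c^W_{\alpha_I,\alpha_C}$ the payoffs that a loser ($w\neq i$) receives: the instantaneous loser reward is $R_l = \frac{1-\alpha_I}{N}R$ and the continuation loser payoff is $\pi_l = \frac{1-\alpha_C}{N}\Pi$. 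I would note that these are well-defined symmetric payoffs (every loser is treated identically), that the adding-up constraints $R=R_w+(N-1)R_l$ and $\Pi=\pi_w+(N-1)\pi_l$ are automatically satisfied by construction of the linear contract, and hence that the game played under $c^W_{\alpha_I,\alpha_C}$ is exactly an instance of the baseline game with these induced loser parameters. (One should also verify $\pi_w\ge\pi_l$, i.e. $\alpha_C\ge 0$, so the induced game lies in the admissible parameter range; this is where I would briefly confirm the linear contract stays inside the class for which Theorem \ref{thm:noncoop_efficient} applies.)

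Next I would substitute these induced values into the baseline efficiency condition. The condition $\frac{\pi_s-\pi_l}{r}=R_l$ becomes
\[ \frac{\pi_s - \frac{1-\alpha_C}{N}\Pi}{r} = \frac{1-\alpha_I}{N}R. \]
Multiplying through by $r$ and rearranging, this is equivalent to
\[ \pi_s = rR\left(\frac{1-\alpha_I}{N}\right) + \Pi\left(\frac{1-\alpha_C}{N}\right), \]
whose right-hand side is precisely $G\left(c^W_{\alpha_I,\alpha_C}\right)$ by the definition in \eqref{eqn:def_guarantee}. Thus the baseline knife-edge condition holds for the induced parameters if and only if $G\left(c^W_{\alpha_I,\alpha_C}\right)=\pi_s$, which is exactly the content of the theorem. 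This algebraic identification is the crux of the argument.

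With the equivalence of the two conditions established, both halves of the theorem follow directly from Theorem \ref{thm:noncoop_efficient}. For the first-best-is-an-MPE claim, the "if and only if" in Theorem \ref{thm:noncoop_efficient} transfers verbatim: the first-best solution is an MPE of the contracted game precisely when the induced parameters satisfy the baseline condition, i.e. when $G\left(c^W_{\alpha_I,\alpha_C}\right)=\pi_s$. For the uniqueness claim, I would invoke the second sentence of Theorem \ref{thm:noncoop_efficient}, which guarantees that under the knife-edge condition the efficient solution is the \emph{unique} MPE; since the contracted game is literally a baseline game with the induced parameters, its unique MPE when $G=\pi_s$ is again the first-best solution.

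The only genuine subtlety—and the step I would treat most carefully—is confirming that $c^W_{\alpha_I,\alpha_C}$ really does reduce to the baseline game rather than introducing new strategic features. The key observations making this clean are that the linear contract conditions only on $w$ (not on $h_\tau$), and that it treats all losers symmetrically, so the induced payoffs $(R_w,\pi_w,R_l,\pi_l)$ are exactly the four-parameter structure assumed in the baseline model; moreover, recalling the discussion following Theorem \ref{thm:noncoop_efficient} that the efficiency condition is independent of $R_w$ and $\pi_w$, I need not worry about the precise winner payoffs $\alpha_I R + \frac{1-\alpha_I}{N}R$ and $\alpha_C\Pi + \frac{1-\alpha_C}{N}\Pi$ beyond the admissibility check $\pi_w\ge\pi_l$. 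Once this reduction is justified, the proof is essentially a one-line substitution plus a citation of the prior theorem, so I expect no real obstacle beyond stating the reduction precisely.
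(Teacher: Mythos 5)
Your proposal is correct and follows essentially the same route as the paper's proof: identify the induced loser payoffs $\tilde{R}_l = (1-\alpha_I)\frac{R}{N}$ and $\tilde{\pi}_l = (1-\alpha_C)\frac{\Pi}{N}$, observe that $r\tilde{R}_l + \tilde{\pi}_l = G\left(c^W_{\alpha_I,\alpha_C}\right)$ so the guarantee condition is equivalent to the knife-edge condition of Theorem \ref{thm:noncoop_efficient}, and cite that theorem for both existence and uniqueness. Your extra care about the reduction being a genuine instance of the baseline game (and the admissibility check) is a harmless elaboration of what the paper states in one line.
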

\begin{proof}
    Note that contract $c^W_{\alpha_I, \alpha_C}$ induces an experimentation game with 
    \[ \tilde{R}_w = \alpha_I R + (1-\alpha_I) \frac{R}{N}, \quad \tilde{R}_l = (1-\alpha_I) \frac{R}{N} \]
    \[ \tilde{\pi}_w = \alpha_C \Pi  + (1-\alpha_C) \frac{ \pi_w + (N-1)\pi_l}{N}, \quad\tilde{\pi}_l = (1-\alpha_C) \frac{ \Pi }{N} \]
    Then,
    \[ \tilde{\pi}_l + r\tilde{R}_l = G\left(c^W_{\alpha_I, \alpha_C}\right) \]
    Thus, 
    \[ G\left(c^W_{\alpha_I, \alpha_C}\right)  = \pi_s \iff  \frac{\pi_s - \tilde{\pi}_l}{r} = \tilde{R}_l  \]
    so by Theorem \ref{thm:noncoop_efficient}, the result follows.
\end{proof}

Note that $G\left(c^W_{\alpha_I, \alpha_C}\right) = \pi_s$ is satisfied by many sharing contracts, and $\alpha_I$ and $\alpha_C$ are substitutable parameters. These parameters control for how competitive each part of the reward is, and hence it makes sense that they alter incentives in the same way. In particular, one particular contract does not even require the regulator to know $\lambda$ or $r$: specifically, $\alpha_I = 1$ and $\alpha_C = 1 - \frac{N\pi_s}{\Pi}$ induces a contract with guarantee equal to $\pi_s$ and has no $r$ dependence.

From examining the expression for the guarantee, it is immediate that sharing the instantaneous reward and sharing the continuation benefits are complementary instruments to restore efficiency. 

\subsubsection{Unobservable Actions}
Suppose, in the baseline game, that agents cannot observe each others' actions (effort decisions) but the identity of a winner is still observable. In this setting, there is no longer a public belief, and so the strategies of the agents are functions of time only. I show that the same efficiency condition as Theorem \ref{thm:noncoop_efficient} still holds. That is, recall the first-best cutoff belief defined in \eqref{eqn:belief_fb} and the belief law of motion \eqref{eqn:belief_law}. Given that the first-best solution has all agents exerting full effort on research until the belief reaches $p_{FB}$, let $t_{FB}$ denote the time when (in the absence of a breakthrough) all agents stop working on research in the first-best solution. I can explicitly solve for the path of the belief in the first-best, and the characterization of $t_{FB}$ satisfies 
\begin{corollary}\label{corr:unobservable_fb}
In the first-best with unobservable actions, all agents exert full effort until $t_{FB}$ if no breakthrough occurs and stop after, where $t_{FB}$ is given by
\begin{equation}
    \label{eqn:fb_time}
    t_{FB} = \frac{1}{N\lambda} \ln \left( \frac{\Omega(p_{FB}) }{\Omega(p(0))} \right) ,
\end{equation}
where $\Omega$ defines the odds ratio function: 
\begin{equation}\label{eqn:odds_ratio} 
\Omega(p) := \frac{1-p}{p}  .
\end{equation}
\end{corollary}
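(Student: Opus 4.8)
The plan is to reduce the corollary to a one-line ODE integration. First I would note that the observability of actions is irrelevant for the cooperative benchmark: since the planner dictates the effort of every agent, it can reconstruct the deterministic (conditional-on-no-breakthrough) belief internally, even though the individual agents no longer share a public belief. Consequently Theorem \ref{thm:cooperative} continues to describe the first-best, which sets every agent to full effort, so that total effort is $K(t) = N$, for as long as the no-breakthrough belief exceeds $p_{FB}$, and to zero effort thereafter. The only substantive task is therefore to translate the belief cutoff $p_{FB}$ into the calendar time $t_{FB}$ at which the belief first reaches it.

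To do so I would integrate the belief law of motion \eqref{eqn:belief_law} with $K(t) \equiv N$, i.e.\ $\dot p(t) = -N\lambda\, p(t)(1-p(t))$. Rather than integrate $p$ directly, I would pass to the odds-ratio variable $\Omega(p) = (1-p)/p$ from \eqref{eqn:odds_ratio}. A short computation using $\Omega'(p) = -1/p^2$ gives $\dot\Omega = \Omega'(p)\,\dot p = N\lambda\,(1-p)/p = N\lambda\,\Omega$, so the logistic dynamics in $p$ become the linear dynamics $\dot\Omega = N\lambda\,\Omega$ in $\Omega$. Hence $\Omega(p(t)) = \Omega(p(0))\, e^{N\lambda t}$: under full effort the odds ratio grows at the constant exponential rate $N\lambda$.

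Finally, since $p$ is strictly decreasing under full effort, there is a unique time at which $p(t) = p_{FB}$, and because $\Omega$ is strictly monotone this is equivalent to $\Omega(p(t)) = \Omega(p_{FB})$. Substituting into the closed form and solving for the time yields $e^{N\lambda t_{FB}} = \Omega(p_{FB})/\Omega(p(0))$, i.e.\ the stated expression $t_{FB} = \frac{1}{N\lambda}\ln\!\left(\Omega(p_{FB})/\Omega(p(0))\right)$. The calculation is routine; the only steps worth flagging are the first --- arguing that the cooperative problem is insensitive to the information structure, so that $K \equiv N$ holds up to the cutoff --- and the observation that the odds-ratio change of variable linearizes the otherwise nonlinear belief ODE, which is exactly what produces the clean logarithmic formula.
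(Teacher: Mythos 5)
Your proposal is correct and follows essentially the same route as the paper: both reduce the corollary to Theorem \ref{thm:cooperative} by noting that the information structure is irrelevant to the planner, and both integrate $\dot p = -N\lambda p(1-p)$ and express the answer via the odds ratio. The only cosmetic difference is that you linearize the ODE by the change of variables $\Omega(p)$ up front, whereas the paper guesses and verifies the explicit path $p(t)$ and converts to odds ratios afterward; the computations are identical in substance.
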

The proof is left to the Appendix; it entails solving the belief law of motion in the first-best solution characterized by Theorem \ref{thm:cooperative}. 

\begin{proposition}\label{prop:unobservable}
    When actions are unobservable, all firms using individual cutoff strategies until time $t_{FB}$ is an equilibrium of the noncooperative game if and only if $\frac{\pi_s - \pi_l}{r} = R_l$. If $\frac{\pi_s - \pi_l}{r} = R_l$, then the first-best equilibrium is the unique equilibrium.
\end{proposition}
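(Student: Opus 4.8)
The plan is to verify the candidate profile directly as a Nash equilibrium of the deterministic game that arises once the public belief is removed. With unobservable actions a (symmetric) strategy is a map $t \mapsto k_i(t) \in [0,1]$, and the candidate profile has every agent exert $k_i(t) = 1$ for $t < t_{FB}$ and $k_i(t)=0$ for $t \ge t_{FB}$, where $t_{FB}$ is the time in Corollary \ref{corr:unobservable_fb}; on this path the common belief traces exactly the first-best trajectory of Theorem \ref{thm:cooperative} and reaches $p_{FB}$ at $t_{FB}$. To check best responses I would fix the others at the candidate, so that the conjectured aggregate effort is $\hat K_{-i}(t) = (N-1)\mathbbm{1}[t < t_{FB}]$, and write agent $i$'s problem as a deterministic optimal control problem: the control $k_i(\cdot)$ drives the agent's private posterior $p_i(t)$ (equivalently the unnormalized ``good-and-surviving'' mass $g(t) = p(0)\exp(-\lambda\int_0^t (k_i + \hat K_{-i})\,\mathrm{d}s)$) through the law of motion, and the agent maximizes expected discounted payoff over $k_i(\cdot)$.

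The key simplification comes from the efficiency condition. First I would observe that $\frac{\pi_s - \pi_l}{r} = R_l$ is equivalent to $rR_l + \pi_l = \pi_s$, i.e.\ the flow value a loser obtains at a breakthrough exactly equals the status-quo flow. Decomposing agent $i$'s payoff relative to the benchmark of never experimenting (which yields $\pi_s$), the contribution of every ``another agent wins'' event cancels, because the loser's realized flow value $rR_l + \pi_l$ coincides with the $\pi_s$ the agent would otherwise collect. What remains is an objective that depends on the agent's path only through (i) the foregone status-quo flow $\pi_s k_i(t)$ while experimenting and (ii) the discounted probability that agent $i$ personally makes the breakthrough, which pays the net bonus $rR_w + \pi_w - \pi_s$. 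In other words, under the efficiency condition the loss externality is payoff-neutral and the best-response problem collapses to a single-agent experimentation problem in which only the agent's own discovery matters; the conjectured effort $\hat K_{-i}$ survives only as an exogenous, uncontrolled acceleration of the belief's decay.

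I would then compute the marginal value of effort. Because the objective is linear in $k_i(t)$ the best response is bang-bang, and the switching function---after accounting for the indirect effect of current effort on the agent's own future posterior (the belief-erosion term, which in the observable model appears as the $(1-p)u'(p)$ term in \eqref{eqn:def_bI})---is positive exactly when the private belief exceeds the single-agent threshold $p_I$ of \eqref{eqn:belief_indiv}. Hence the best response is the cutoff at private belief $p_I$. Invoking Lemmas \ref{lem:ptimes_pFB_comparison} and \ref{lem:pI_characterization}, the efficiency condition gives $p_I = p_{FB}$, so this cutoff is reached precisely at $t_{FB}$ on the equilibrium path: the candidate is a best response, establishing the ``if'' direction. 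For the converse, a cutoff at an interior time $t_{FB}$ requires the switching function to vanish there; since both $k_i$ and $\hat K_{-i}$ are zero after $t_{FB}$, all indirect terms drop out and the indifference condition at the switch reduces to $\lambda p_{FB}(rR_w + \pi_w - \pi_s) = r\pi_s$, i.e.\ $p_{FB} = p_I$, which by the same two lemmas holds only when $\frac{\pi_s - \pi_l}{r} = R_l$. This also drives uniqueness: under the efficiency condition every agent's unique best response to any symmetric profile of the others is the cutoff at private belief $p_I = p_{FB}$, and since symmetric play makes the posterior common across agents, the only equilibrium has all agents stop when the common belief hits $p_{FB}$, that is, at $t_{FB}$.

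The step I expect to be the main obstacle is the marginal-value computation that pins the switch at $p_I$: unlike the baseline model there is no common belief to serve as a clean Markov state, so I must control the indirect (costate) effect of an agent's own effort on its own future posterior while treating the others' effort as an exogenous time-path, and verify that the switching function is single-crossing so that the cutoff is the global optimum and the best response is unique. The delicate point is precisely that this argument must go through for arbitrary conjectured $\hat K_{-i}(\cdot)$---not merely the candidate---since that is what the uniqueness claim requires; the efficiency condition is what makes it tractable, because it removes all dependence of the agent's payoff on the timing of the others' discoveries and leaves only the own-discovery trade-off that the threshold $p_I$ already summarizes.
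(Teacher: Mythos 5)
Your proposal is correct and reaches the result by a genuinely different route from the paper. The paper's proof splits the horizon at $t_{FB}$, solves the continuation after $t_{FB}$ as a single-agent problem via Theorem \ref{thm:cooperative}, verifies that $V_{FB}(p)$ solves the time-dependent HJB \eqref{eqn:time_hjb} on $[0,t_{FB})$ with $k=1$ strictly optimal, and obtains uniqueness by arguing that the incentive to work is monotone in the belief and then ruling out stopping after $t_{FB}$ by a limit argument on $u_p$ that would push the value below the guaranteed $\pi_s$. Your payoff decomposition---subtracting the all-safe benchmark and using $rR_l+\pi_l=\pi_s$ to make every ``someone else wins'' event payoff-neutral, so that $\hat K_{-i}$ enters only through the decay of the good-and-surviving mass $g$---is present in the paper only implicitly (it is why each agent ``can guarantee $\pi_s$'' in the uniqueness step); making it explicit is what lets you assert that the best response to an \emph{arbitrary} conjecture about the others is the cutoff at private belief $p_I=p_{FB}$, delivering existence and uniqueness in one stroke, which is cleaner than the paper's two-step uniqueness argument. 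The step you flag as the obstacle is indeed where the analysis lives, but it does go through: writing $A=\lambda(rR_w+\pi_w-\pi_s)$ and $S=1-p(0)+g$, the switching function along the candidate satisfies $\dot\sigma(t) = -re^{-rt}\left(g(t)A - S(t)r\pi_s\right) - \lambda K_{-i}(t)g(t)e^{-rt}(A-r\pi_s)$, which is negative precisely while $p>p_I$, so $\sigma>0$ before the switch and $\sigma<0$ after for any $\hat K_{-i}$; however, since the maximized Hamiltonian is convex rather than concave in $g$, sufficiency of these necessary conditions still requires a verification equivalent to the paper's HJB computation, so you do not fully escape that work. Two small imprecisions to fix: in the ``only if'' step the indirect terms do not literally drop out after $t_{FB}$---the costate there equals the discounted safe flow $e^{-rt}\pi_s$, and it is exactly this value that turns $rR_w+\pi_w$ into $rR_w+\pi_w-\pi_s$ in your indifference condition (the condition itself is right); and the decomposition cannot be invoked in the ``only if'' direction since it presupposes the efficiency condition, though your conclusion survives because the losing payoffs never enter the coefficient of $k_i$ and the post-$t_{FB}$ costate is pinned down by the safe flow alone.
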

The proof technique is similar to the proof of Theorem \ref{thm:noncoop_efficient}. The difference here is that instead of working with the HJB equation and the public belief as a state variable, I instead characterize the time-dependent optimal control solution to the best-response problem. First, if the condition is violated, the same deviations from the observable case are still profitable without observability (either some agent has an incentive to lower effort earlier or stop effort later). If the condition holds, then a verification argument checks that the cutoff at time $t_{FB}$ is a best-response. To show uniqueness, I show that full effort is a strict best-response prior to time $t_{FB}$ regardless of whatever the other agents are doing and that stopping at $t_{FB}$ is the only best response if the other agents exert full effort up to time $t_{FB}$.

Given Proposition \ref{prop:unobservable} and Theorem \ref{thm:guarantee}, it then follows that the same sharing contract conditioning on winner/loser uniquely implements the efficient solution even when actions are unobservable by everyone: 
\begin{corollary}\label{corr:unobservable_guarantee}
    Even when actions are unobservable, the first-best solution is the unique equilibrium under sharing contract $c^W_{\alpha_I,\alpha_C}$ if and only if $G\left(c^W_{\alpha_I,\alpha_C} \right) = \pi_s$. Further, if $G\left(c^W_{\alpha_I,\alpha_C} \right) = \pi_s$, the first-best solution is the unique outcome.
\end{corollary}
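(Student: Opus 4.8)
The plan is to reduce the claim to a direct composition of the parameter computation in Theorem \ref{thm:guarantee} with Proposition \ref{prop:unobservable}, exploiting the fact that the sharing contract $c^W_{\alpha_I, \alpha_C}$ conditions on the outcome only through the winner's identity $w$, which remains observable even when effort is hidden.

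First I would observe that applying $c^W_{\alpha_I, \alpha_C}$ to the baseline game yields an induced game whose payoff parameters are exactly the $\tilde{R}_w, \tilde{R}_l, \tilde{\pi}_w, \tilde{\pi}_l$ computed in the proof of Theorem \ref{thm:guarantee}. Crucially, because this contract depends on the outcome only through $w$, the induced game is itself a well-defined unobservable-actions game of the baseline form: each agent observes the winner's identity at breakthrough (hence its instantaneous and continuation rewards), and no agent needs to observe any effort choice to compute its payoff. The total rewards $R$ and $\Pi$ are preserved by the sharing-contract constraint, so by \eqref{eqn:belief_fb} the first-best cutoff $p_{FB}$ of the induced game coincides with that of the baseline, and hence so does the first-best solution itself.

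Next I would invoke the algebraic identity $\tilde{\pi}_l + r\tilde{R}_l = G(c^W_{\alpha_I, \alpha_C})$ established in the proof of Theorem \ref{thm:guarantee}, which gives the equivalence $G(c^W_{\alpha_I, \alpha_C}) = \pi_s \iff \frac{\pi_s - \tilde{\pi}_l}{r} = \tilde{R}_l$. I would then apply Proposition \ref{prop:unobservable} to the induced unobservable-actions game: the efficient solution (all agents using the individual cutoff at $p_{FB}$, equivalently stopping at time $t_{FB}$) is an equilibrium if and only if the induced game satisfies $\frac{\pi_s - \tilde{\pi}_l}{r} = \tilde{R}_l$, and in that case it is the unique equilibrium. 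Chaining the two equivalences delivers both the if-and-only-if statement and, under $G = \pi_s$, the uniqueness of the first-best as the equilibrium outcome.

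The main point to verify carefully — rather than a genuine obstacle — is that the contract is measurable with respect to the information available in the unobservable-actions environment, i.e.\ that conditioning on $w$ alone is legitimate when actions are hidden. This is immediate, since the winner's identity is by hypothesis still observed at the moment of breakthrough, so the induced rewards are well-defined and the game falls squarely within the scope of Proposition \ref{prop:unobservable}. One should also note that the substitutability of $\alpha_I$ and $\alpha_C$ observed after Theorem \ref{thm:guarantee} carries over verbatim, since the equivalence depends on the contract only through the scalar $G(c^W_{\alpha_I, \alpha_C})$.
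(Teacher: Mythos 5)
Your proposal is correct and matches the paper's own argument: the paper derives this corollary exactly by composing the parameter computation and guarantee identity from Theorem \ref{thm:guarantee} with Proposition \ref{prop:unobservable} applied to the induced game. Your additional checks (that the contract is measurable with respect to the winner's identity alone, and that the induced game's first-best coincides with the baseline's because $R$ and $\Pi$ are preserved) are sound and only make explicit what the paper leaves implicit.
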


\subsubsection{Examples}
I now present two examples to illustrate the insights from Theorem \ref{thm:guarantee}.

\paragraph{Drug Discovery}
Consider a regulator overseeing pharmaceutical research in a competitive industry. That is, effort here is analogous to investment in research. Suppose that in the absence of regulation, the winner receives only a flow continuation reward $\pi_w$ (the future profit from sales of the new drug relative to the current drug). The losing firms experience a \textit{reduced} flow continuation reward $\pi_l < \pi_s$ from imperfect competition against a superior drug, and there are no instantaneous rewards $(R = 0)$. The firms are secretive about their research, and hence their research investment choices are unobservable to their competitors and to the regulator. Hence, the regulator can only contract on the observation of winner/losers (for example, by approving a successful drug). The regulator in particular does not know anything about the timing of the game (i.e., when the research started) or the actions of the players. 

Theorem \ref{thm:guarantee} and Corollary \ref{corr:unobservable_guarantee} suggest that there is a simple way the regulator can restore efficiency by only enforcing ex post transfers that occur after a breakthrough. Specifically, the regulator can optimally offer ``participation trophies,'' or payments to the losers, to restore efficiency. In particular, there is a unique transfer $T_l$ paid to each loser that restores efficiency. In particular, $T_l$ is pinned down by
\begin{equation} \label{eqn:efficient_transfers}
T_l = \frac{\pi_s - \pi_l}{r} .
\end{equation}
If the regulator can commit to enforcing this transfer being paid to each loser, the regulator can restore efficiency by charging the winner the transfer that restores budget balance:
$ T_w = -(N-1)T_l $.

In particular, the regulation scheme above does not require the regulator to know anything about when the research started, how long the research has been going on, or anything about the history of research investment. Thus, in a scenario where it is at all costly or impossible for the regulator to monitor and contract on the actions of the firms, the regulator can still achieve the first-best.

\paragraph{Natural Resource Exploitation} 

Consider multiple agents involved in extracting a resource from a common area; for the sake of this example, imagine fishermen fishing in a lake. Fishing comes at an opportunity cost $\pi_s$, and the presence of fish is initially unknown. Fishing catches fish at a flow rate $\lambda$ if there are fish in the lake, and fishermen value each fish at $R$. The first fish that is caught corresponds to the breakthrough, so $R_w = R$ and $R_l = 0$. The continuation payoff to each fisherman after the first fish is caught is $\pi_w = \pi_l = \lambda R$, as they now know there are fish in the lake.

In this scenario, the efficiency-restoring transfers as characterized by \eqref{eqn:efficient_transfers} are negative for the losers, since $\pi_l = \lambda R > \pi_s$. That is, the efficient transfer scheme that conditions only on winning/losing requires the losers to pay a small amount to the winner to compensate for the net positive informational externality provided by the winner. In practice (and perhaps befitting a recreational setting), such a transfer scheme could be implemented by a betting pool for the first fisherman to catch a fish; that is, the fishermen each initially contribute $\frac{\pi_s - \pi_l}{r}$ into a pot, and the first person to catch a fish wins the pot (with the pot being redistributed back to each agent if everyone goes home after not catching fish). Such a scheme provides enough incentive to ensure that all fishermen continue fishing, even when in the original noncooperative setting they would rather go home.

\subsection{Contracting on Effort}
In some research environments, it might be infeasible, costly, or even taboo to contract on winning/losing. For example, consider two mathematicians collaborating on proving a theorem in the hopes of claiming a monetary prize for doing so. Provided that they are collaborating, it may seem underhanded for the mathematician who figures out the main proof idea to take sole credit and take the prize. 

In the baseline model of the experimentation game, the breakthroughs were attributable, meaning that breakthrough occurred for a publicly identifiable agent that I called the winner. In other research environments, it may be the case that breakthroughs are nonattributable, intentional or not, that is, a breakthrough occurs but it cannot be directly assigned to a single agent. A naive first approach might be that all participants share the experimentation rewards fairly; however, the following result (a consequence of Theorem \ref{thm:noncoop_efficient}) shows that this is inefficient. 

\begin{corollary}\label{corr:equal_sharing}
    Suppose that all agents agree to split the rewards evenly in the event of a breakthrough: that is, the contract takes the form
    \[ c(\cdot) = \left(\left\{\frac{R}{N}\right\}_i, \left\{\frac{\Pi}{N}\right\}_i \right). \]
 Then, any MPE of the respective game is inefficient. 
\end{corollary}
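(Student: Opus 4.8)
The plan is to derive this directly from Theorem \ref{thm:noncoop_efficient} by showing that equal sharing forces the knife-edge efficiency condition to fail. Under the equal-sharing contract, the winner and losers are treated identically, so the induced game has parameters $\pi_w = \pi_l = \Pi/N$ and $R_w = R_l = R/N$. It then suffices to evaluate the condition $\frac{\pi_s - \pi_l}{r} = R_l$ at these values.

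First I would substitute $\pi_l = \Pi/N$ and $R_l = R/N$, so that the left-hand side becomes $\frac{\pi_s - \Pi/N}{r}$ and the right-hand side becomes $R/N$. The key observation is the standing model assumption $\Pi > N\pi_s$, which gives $\Pi/N > \pi_s$ and hence $\pi_s - \Pi/N < 0$. Since $r > 0$, the left-hand side is strictly negative, while the right-hand side $R/N$ is strictly positive because $R > 0$. Therefore $\frac{\pi_s - \pi_l}{r} < 0 < R_l$, so the efficiency condition fails, and it fails in the direction $\frac{\pi_s - \pi_l}{r} < R_l$, which is exactly the free-riding/underexperimentation regime flagged in the discussion after Theorem \ref{thm:noncoop_efficient}.

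With the condition violated, Theorem \ref{thm:noncoop_efficient} immediately implies that the first-best solution is not an MPE of the induced game. To finish, I would invoke Theorem \ref{thm:cooperative}, which shows the efficient policy is unique, namely the cutoff strategy at $p_{FB}$; any policy differing from this attains strictly lower total welfare and is thus inefficient. Hence no MPE can coincide with the first-best, so every MPE of the equal-sharing game is inefficient.

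I do not anticipate a substantive obstacle: the entire content is the sign computation together with the reduction to Theorem \ref{thm:noncoop_efficient}. The only point requiring care is the logical step from ``the first-best is not an MPE'' to ``every MPE is inefficient,'' which relies on the uniqueness of the first-best policy established in Theorem \ref{thm:cooperative}; once that uniqueness is noted, the conclusion is immediate.
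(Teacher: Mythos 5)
Your proposal is correct and follows essentially the same route as the paper: both reduce the equal-sharing game to the baseline game with $R_w=R_l=R/N$, $\pi_w=\pi_l=\Pi/N>\pi_s$, observe that this violates the knife-edge condition $\frac{\pi_s-\pi_l}{r}=R_l$ of Theorem \ref{thm:noncoop_efficient}, and conclude. Your explicit remark that passing from ``the first-best is not an MPE'' to ``every MPE is inefficient'' uses the (essential) uniqueness of the efficient policy is a point the paper leaves implicit, but it is not a substantive difference in approach.
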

\begin{proof}
    Note that an even split of everything implies that this game is equivalent to a specific instance of the baseline experimentation game where $R_w = R_l = R/N$, and $\pi_w = \pi_l = \Pi/N > \pi_s$. Hence, $\pi_l > \pi_s$, so $\pi_l + r R_l > \pi_s$ and the corollary follows from Theorem \ref{thm:noncoop_efficient}.
\end{proof}

The corollary shows that an ex ante fair split of the rewards cannot result in an efficient outcome, so an efficient contract must still condition on some part of the outcome of the experimentation game. Recall that the previous subsection showed that conditioning the contract on the observation of winner/loser was sufficient to restore efficiency. An important insight from the baseline game is that conditional on a breakthrough occurring in an infinitesimal time interval $[t, t+dt)$, the probability of agent $i$ being the winner is $\frac{k_i(t)}{K(t)}$, where $K$ is instantaneous total flow effort and $k_i$ is the instantaneous flow effort of agent $i$. Hence, the intuitive extension to restore efficiency when outcomes are nonattributable is to condition the sharing contract on $k_i(\tau) / K(\tau)$, the instantaneous share of total flow effort at time of breakthrough.\footnote{One could condition contracts on much stronger instruments, such as the full history of effort. However, I show an efficiency result, and hence I derive contracts to restore efficiency using as weak of a contract as possible.} Using this, we can define an analogue of the sharing contracts. 

Formally, let $k(\tau) = \{ k_i (\tau) \}_i$ denote the vector endpoint of the effort profile path, and let $K(\tau) = \sum_i k_i(\tau)$. Consider the family of contracts that splits an $\alpha_I$ share of the instantaneous reward and $\alpha_C$ share of the continuation reward based on the effort profile at time of breakthrough:
\[ c^K_{\alpha_I, \alpha_C}(k(\tau)) =  \left(\left\{ \left(\alpha_I \frac{k_i(\tau)}{K(\tau)} + \frac{1-\alpha_I}{N}\right) \cdot R\right\}_i, \left\{ \left(\alpha_C \frac{k_i(\tau)}{K(\tau)} + \frac{1-\alpha_C}{N}\right) \cdot \Pi \right\}_i\right). \]
I define a guarantee of the sharing contract analogously to \eqref{eqn:def_guarantee} (and slightly abuse the notation $G$):
\begin{equation}\label{eqn:def_guarantee2}
    G(c^K_{\alpha_I, \alpha_C}) = rR\left(\frac{1-\alpha_I}{N} \right) + \Pi\left(\frac{1-\alpha_C}{N} \right).
\end{equation} 

The following result is the analogue of Theorem \ref{thm:guarantee} for nonattributable breakthroughs. 

\begin{theorem}\label{thm:guarantee_2}
    The first-best solution is an MPE of the game under sharing contract $c^K_{\alpha_I, \alpha_C}$ if and only if $G\left(c^K_{\alpha_I, \alpha_C}\right)  = \pi_s$. Furthermore, if $G\left(c^K_{\alpha_I, \alpha_C}\right)  = \pi_s$, then the first-best solution is the unique MPE. 
\end{theorem}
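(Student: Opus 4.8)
The plan is to reduce Theorem \ref{thm:guarantee_2} to Theorem \ref{thm:guarantee} (equivalently, to Theorem \ref{thm:noncoop_efficient}) by showing that the effort-based contract $c^K_{\alpha_I,\alpha_C}$ and the winner-based contract $c^W_{\alpha_I,\alpha_C}$ are payoff-equivalent for every agent under every strategy profile. The two guarantees are defined by the identical formula (compare \eqref{eqn:def_guarantee} and \eqref{eqn:def_guarantee2}), so once this equivalence is established the entire theorem, including the uniqueness claim, is immediate.

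The key lemma I would prove is: for any measurable strategy profile and any agent $i$, the expected discounted payoff to $i$ is the same under $c^K_{\alpha_I,\alpha_C}$ as under $c^W_{\alpha_I,\alpha_C}$. The two contracts differ only in how the breakthrough rewards are allocated, so it suffices to match the expected reward conditional on a breakthrough at time $\tau$ with effort profile $k(\tau)$. As already noted in the text, conditional on a breakthrough in $[\tau, \tau + \dd{\tau})$, agent $i$ is the winner with probability $k_i(\tau)/K(\tau)$. Under $c^W$, $i$'s expected flow-equivalent reward is therefore $\frac{k_i(\tau)}{K(\tau)}(r\tilde R_w + \tilde\pi_w) + (1 - \frac{k_i(\tau)}{K(\tau)})(r\tilde R_l + \tilde\pi_l)$, using the induced parameters $\tilde R_w, \tilde R_l, \tilde\pi_w, \tilde\pi_l$ from the proof of Theorem \ref{thm:guarantee}; under $c^K$ it is the deterministic quantity $(\alpha_I \frac{k_i(\tau)}{K(\tau)} + \frac{1-\alpha_I}{N})rR + (\alpha_C \frac{k_i(\tau)}{K(\tau)} + \frac{1-\alpha_C}{N})\Pi$. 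Expanding the $c^W$ expression and collecting the $\frac{1-\alpha_I}{N}rR$ and $\frac{1-\alpha_C}{N}\Pi$ pieces, the winner/loser probabilities sum to one and the two expressions coincide for every $(k_i(\tau), K(\tau))$. Since the flow status-quo payoffs, the breakthrough rate $p\lambda K$, and the belief law \eqref{eqn:belief_law} depend only on total effort and are identical across the two contracts, and since agents are risk-neutral, the expected payoffs coincide over the whole game.

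With the lemma, the conclusion follows quickly. Because the two contracts yield identical value functions for every profile, the best-response correspondences, and hence the sets of Markov perfect equilibria, coincide. Equivalently, one can verify directly that $c^K$ induces the best-response HJB \eqref{eqn:hjb_baseline} with these same induced parameters: the only apparent nonlinearity, the effort share $k_i/K$ multiplying the breakthrough rate $p\lambda K$, cancels via $(k_i + K_{-i})\cdot \frac{k_i}{k_i+K_{-i}} = k_i$, restoring linearity in $k_i$ and reproducing exactly the baseline problem. These induced parameters satisfy $\tilde\pi_w + (N-1)\tilde\pi_l = \Pi$ and $\tilde R_w + (N-1)\tilde R_l = R$, so they define a valid baseline game, and $\tilde\pi_l + r\tilde R_l = G(c^K_{\alpha_I,\alpha_C})$. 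Hence $G(c^K_{\alpha_I,\alpha_C}) = \pi_s$ is equivalent to the efficiency condition $\frac{\pi_s - \tilde\pi_l}{r} = \tilde R_l$, and both the ``if and only if'' characterization and the uniqueness of the first-best MPE follow from Theorem \ref{thm:noncoop_efficient} applied to the induced game.

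The step I expect to be the main obstacle is ensuring the payoff equivalence holds for \emph{all} strategy profiles, not merely the symmetric cutoff profile on the equilibrium path; this is precisely what the uniqueness half needs, since it must rule out every other (possibly asymmetric or non-monotone) MPE. The matching computation is pointwise in $(k_i(\tau), K(\tau))$, so it is robust to arbitrary profiles, and the only degenerate point, $K(\tau) = 0$, is immaterial because no breakthrough can arrive there and the reward allocation never binds. I would also confirm that the strategy space and solution concept (Markov in the public belief, finite piecewise Lipschitz, left-continuous) are literally the same under both contracts, so that the two equilibrium sets can be identified without caveat.
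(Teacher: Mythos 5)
Your proposal is correct and is essentially the paper's own argument: the paper likewise reduces the game under $c^K_{\alpha_I,\alpha_C}$ to the baseline model with induced parameters $\tilde R_w,\tilde R_l,\tilde\pi_w,\tilde\pi_l$, using exactly the cancellation $pK\lambda\cdot\frac{k_i}{K}=p\lambda k_i$ to recover the HJB \eqref{eqn:hjb_baseline} and then invoking Theorem \ref{thm:noncoop_efficient} via $G(c^K_{\alpha_I,\alpha_C})=r\tilde R_l+\tilde\pi_l$. Your framing via profile-by-profile payoff equivalence with $c^W_{\alpha_I,\alpha_C}$ under risk neutrality is a cosmetic repackaging of the same computation, and your attention to arbitrary (asymmetric, non-monotone) profiles and the $K(\tau)=0$ degeneracy correctly covers what the uniqueness claim requires.
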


The nontrivial observation is that in the nonattributable breakthrough model here, conditioning on the instantaneous share of total effort $(k_i(\tau) / K(\tau))$ has the same impact on incentives as conditioning on winning in the baseline experimentation model. The proof is provided in the Appendix.

Although much of the analysis seems similar to the previous subsection with contracts conditional on winner/losers, in this setting, the payoffs of the game on-path are ex post symmetric; that is, on the equilibrium path, the payoffs to all agents are split evenly ex post, which is not true for equilibria induced by contracts conditional on winner/loser. In the previous subsection, the winning agent obtained a different payoff than all the losing agents. Under contracts conditioning on effort however, since $k_i(\tau) = k_j(\tau)$ on-path, all agents earn the same payoffs ex post. Thus, another reason why a regulator or contest designer might want to condition payoffs on effort is to ensure that payoffs after the experimentation game are symmetric.

It is important to note that despite the fact that payoffs are symmetric ex post on-path, Corollary \ref{corr:equal_sharing} shows that the regulator or contest designer cannot promise symmetric payoffs ex ante. Under contracts conditioning on effort, since $\Pi/N > \pi_s$, by applying the efficiency condition, we have
\[ 0 > \pi_s - \frac{\Pi}{N} = G\left(c^K_{\alpha_I, \alpha_C}\right)  - \frac{\Pi}{N} =  r\left(1-\alpha_I\right)\frac{R}{N}  - \alpha_C \frac{\Pi}{N}.  \]
Since the first term $r\left(1-\alpha_I\right)\frac{R}{N} $ is nonnegative, this implies that $\alpha_C$ must be strictly positive. Hence, in any efficient sharing contract, it is \textit{necessary} to condition at least some of the flow continuation payoff on the effort share, even if all agents split this evenly on-path. 

\subsubsection{Examples}
Once again, I provide two examples to illustrate the insights of Theorem \ref{thm:guarantee_2}.

\paragraph{Academic Collaboration} Return to the mathematician scenario; that is, a group of mathematicians are working on proving a theorem, with a monetary prize. Given that it might be taboo amongst collaborators to claim credit for a discovery, none of the mathematicians can claim to be the winner and take the prize. The insights of Theorem \ref{thm:guarantee_2} suggest that agreeing ex ante to split the prize if a breakthrough occurs is not efficient and results in underexertion of effort; rather, the mathematicians should agree to split part of the reward based on their share of total effort when the breakthrough is discovered and split the rest of the reward fairly.

\paragraph{Delegated Research} Consider a group of computer manufacturing firms that have decided to invest in a semiconductor startup that may or may not produce a successful chip. Here, the firms are not conducting the research themselves, but have delegated the research to the startup. In particular, it makes no sense to have winners/losers here, since a breakthrough for the startup is a breakthrough for all the investing firms. Abstracting from agency concerns, suppose that each unit of investment increases the flow rate of a chip breakthrough by $\lambda$ and that a breakthrough would increase total industry profits to $\Pi$. Investing in the startup is costly, coming at an opportunity cost $\pi_s$ to the computer manufacturers. As before, the computer manufacturers observe each others' investment decisions. The startup, on the other hand, is focused on research and finds it costly to pay attention to the investment decisions of the firms. Theorem \ref{thm:guarantee_2} shows that the startup can agree to only examine the investment profile at the time of breakthrough and award the rights to its new chip based on the investment profile at that time. By doing so, the noncooperative equilibrium is self-policing; the computer manufacturing firms invest the first-best amounts of their own volition.

\paragraph{Remark on the guarantee:} Theorems \ref{thm:guarantee} and \ref{thm:guarantee_2} have shown that when the guarantee exactly equals $\pi_s$, the contract restored efficiency. However, the guarantee naturally corresponds to the loser's value in the general game considered in Section \ref{sec:noncoop}. Intuitively, an inefficient contract will induce overexperimentation or underexperimentation if the guarantee is too small or too large, respectively (see Appendix \ref*{sec:inefficient} for formal results). 
\section{Extension: Resource Heterogeneity}
\label{sec:extensions}

In the baseline model, all agents have a unit measure of a resource that can be allocated between the research project and the status quo. The main insights extend to the case where agents have heterogeneous resources to invest between the research project and the status quo. In this section, suppose that instead, agent $i$ has a total measure $\mu_i$ of effort resources to be allocated between the status quo and research project. Thus, the action choice of agent $i$ is selecting $k_i(t) \in [0, \mu_i]$ (as opposed to $[0,1]$). Thus, the flow payoff from the status quo for agent $i$ becomes $\pi_s(\mu_i - k_i(t))$, and the flow rate of breakthrough arrival is still $\lambda k_i(t)$. Let $M = \sum_i \mu_i$ be the total measure of resource available to all agents. Since there is heterogeneity among agents, it is natural to allow payoffs to be heterogeneous; that is, each agent $i$ has a $\pi_{w, i}$ and $\pi_{l, i}$ continuation payoff from winning and losing, respectively. Similarly, I define heterogeneous instantaneous lump-sum payoffs $R_{w,i}$ and $R_{l,i}$. For the cooperative problem to remain agnostic to which agent wins, fix the total instant and continuation payoffs after a breakthrough to a constant:
\[ \Pi = \pi_{w,i} + \sum_{j \neq i} \pi_{l, j} \]
and
\[ R = R_{w,i} + \sum_{j \neq i} R_{l, j} \]
for all $i$. 

Once again, I assume that a breakthrough is overall welfare-improving; so $M \pi_s < \Pi$. I also still assume that every agent benefits from winning; that is, the flow continuation reward from winning is $\pi_{w,i} > \mu_i \pi_s$ for all $i$. 

Since in this extension the reasoning follows the same process as in the previous sections of the paper, I relegate formal proofs to the Appendix and state the analogues of the main results I established in the baseline case. The analogue of the first-best solution in Theorem \ref{thm:cooperative} is as follows:

\begin{proposition}\label{prop:hetero_cooperative}
    In the first-best solution with resource heterogeneity, all agents exert full effort $k_i(p) = \mu_i$ for $p > p^H_{FB}$, where the threshold is defined by 
    \begin{equation}\label{eqn:het_pFB}
        p^H_{FB} = \frac{\pi_s}{\lambda R + \frac{\lambda}{r}(\Pi - M\pi_s)}.
    \end{equation}
    For $p < p^H_{FB}$, $k_i(p) = 0$. 
\end{proposition}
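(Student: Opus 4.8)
The plan is to mirror the proof of Theorem \ref{thm:cooperative} exactly, since the cooperative problem with resource heterogeneity is still a continuous-time Markov decision process over the single state variable $p$. First I would write down the appropriate HJB equation for the average (or total) value function of the planner. The total effort the planner can deploy at belief $p$ is now bounded by $M = \sum_i \mu_i$ rather than $N$, and since the breakthrough arrival rate and the status-quo opportunity cost are both linear in each $k_i$, the planner's per-agent contributions are perfectly substitutable: only the aggregate $K = \sum_i k_i \in [0, M]$ matters for both the learning dynamics in \eqref{eqn:belief_law} and the flow payoffs. Thus the planner's problem collapses to choosing aggregate effort $K \in [0,M]$, and the HJB equation becomes the analogue of \eqref{eqn:hjb_cooperative} with $N$ replaced by $M$ in the relevant places (the coefficient on $K$ being $p\frac{\lambda}{r}(\Pi - V - (1-p)V') - c(p)$, where $c(p) = \pi_s - p\lambda R$ as in \eqref{eqn:def_c}).

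Next I would observe that, exactly as in the homogeneous case, the maximand is linear in $K$, so the optimal policy is bang-bang: the planner sets $K = M$ when the bracketed marginal benefit is positive and $K = 0$ when it is negative. Because any split of a given aggregate $K$ across agents yields identical payoffs and identical belief dynamics, it is without loss to implement $K = M$ by having every agent exert full effort $k_i = \mu_i$; this records the claim in the proposition that each agent plays $k_i(p) = \mu_i$ above the threshold and $k_i(p) = 0$ below it. The threshold itself is the belief at which the marginal benefit of aggregate experimentation crosses zero, and solving that indifference condition (the value-matching/smooth-pasting computation from Theorem \ref{thm:cooperative}, now with total resource $M$) yields $p^H_{FB} = \frac{\pi_s}{\lambda R + \frac{\lambda}{r}(\Pi - M\pi_s)}$ as in \eqref{eqn:het_pFB}. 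The assumption $M\pi_s < \Pi$ guarantees the denominator is positive so that $p^H_{FB} \in (0,1)$, making the cutoff well-defined and interior.

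The substantive step that needs care, rather than routine verification, is justifying that the planner's problem genuinely reduces to a one-dimensional control over aggregate effort $K$ despite the heterogeneous action sets $[0,\mu_i]$ and the heterogeneous payoff labels $\pi_{w,i}, \pi_{l,i}, R_{w,i}, R_{l,i}$. The key here is the constraint that $\Pi = \pi_{w,i} + \sum_{j\neq i}\pi_{l,j}$ and $R = R_{w,i} + \sum_{j\neq i} R_{l,j}$ are fixed constants independent of the winner's identity; this is precisely what makes the planner's continuation value after a breakthrough equal to the constant $\Pi$ (in present-value terms, $\Pi + rR$ appropriately discounted) regardless of which agent is the discoverer. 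Consequently the heterogeneous payoff split washes out of the cooperative objective, and the planner cares only about the total, so the per-agent heterogeneity enters solely through the resource bound $M$. Once this reduction is established, the remaining verification—that the guess-and-verify value function is the unique viscosity solution and that the bang-bang cutoff policy is optimal—follows the identical argument as Theorem \ref{thm:cooperative} and Lemma \ref{lem:viscosity}, with $N$ replaced by $M$ throughout.
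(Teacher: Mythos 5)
Your proposal is correct and follows essentially the same route as the paper: the paper likewise notes that because the post-breakthrough totals $R$ and $\Pi$ are fixed independently of the winner's identity, the planner's objective depends only on aggregate effort $K \in [0,M]$, so the HJB and its solution are those of Theorem \ref{thm:cooperative} with $N$ replaced by $M$. Your explicit justification of the reduction to a one-dimensional control (substitutability of individual efforts and the washing-out of the heterogeneous payoff labels) is exactly the observation the paper's proof leads with, just spelled out in more detail.
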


Note that the belief threshold in \eqref{eqn:het_pFB} compared to \eqref{eqn:belief_fb} has an $M \pi_s$ in the denominator rather than $N \pi_s$. The analogue of Theorem \ref{thm:noncoop_efficient} is then as follows:
\begin{proposition}\label{prop:hetero_noncoop}
    In the experimentation game with resource heterogeneity, the first-best solution is an MPE of the noncooperative game if and only if 
    \[ \frac{\mu_i \pi_s - \pi_{l,i}}{r} = R_{l,i} \]
    for all $i$. If the condition holds, then the first-best solution is the unique MPE. 
\end{proposition}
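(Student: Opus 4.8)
The plan is to reduce Proposition~\ref{prop:hetero_noncoop} to the baseline analysis by showing that, under the balance constraints $\Pi = \pi_{w,i} + \sum_{j\neq i}\pi_{l,j}$ and $R = R_{w,i} + \sum_{j\neq i}R_{l,j}$, the heterogeneous game decomposes into a family of individual best-response control problems that each look exactly like the baseline problem after a change of units. First I would write down the best-response HJB equation for agent $i$ facing a fixed Markov profile $K_{-i}(p)=\sum_{j\neq i}\sigma_j(p)$; because agent $i$'s action set is $[0,\mu_i]$ and breakthroughs still arrive at rate $\lambda k_i$, the structure of the equation is identical to \eqref{eqn:hjb_baseline} except that the instantaneous-payoff and externality coefficients carry the agent-specific rewards $R_{w,i},R_{l,i},\pi_{w,i},\pi_{l,i}$ and the maximization runs over $k_i\in[0,\mu_i]$. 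Since the objective is still linear in $k_i$, the maximizer is bang-bang at $\mu_i$ or $0$ depending on the sign of $b_I(p,u,u')-c_{I,i}(p)$, so the analogue of Lemma~\ref{lem:baseline_br} holds agent-by-agent with an agent-specific crossing belief.

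Next I would verify the ``if'' direction by guess-and-verify. Supposing $\frac{\mu_i\pi_s-\pi_{l,i}}{r}=R_{l,i}$ for all $i$, I would conjecture that every agent plays the first-best cutoff $p^H_{FB}$ from Proposition~\ref{prop:hetero_cooperative} (i.e.\ $k_i=\mu_i$ above threshold, $0$ below) and check that the associated value function is a viscosity solution of agent $i$'s best-response HJB. The key computation mirrors Lemma~\ref{lem:ptimes_pFB_comparison}: one rewrites agent $i$'s crossing belief $p_{\times,i}$ using the balance identities $\pi_{w,i}-\pi_{l,i}=\Pi - M\pi_s + (\text{agent-$i$ terms})$ and $R_{w,i}-R_{l,i}=R-\big(\sum_j R_{l,j}\big)$, and the efficiency condition for \emph{all} $i$ forces the correction term to vanish so that $p_{\times,i}=p^H_{FB}$ simultaneously for every agent. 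The subtlety here, relative to the baseline, is that the individual cutoff $p_{I,i}$ and the crossing $p_{\times,i}$ now vary across agents off the efficient parameter locus, so the argument must show that when the $N$ scalar conditions hold jointly, all these agent-specific thresholds collapse to the common value $p^H_{FB}$.

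For the ``only if'' and uniqueness directions I would port over the bounding lemmas (Lemmas~\ref{lem:stopping_upper_bound} and~\ref{lem:stopping_lower_bound}) in their agent-indexed form. If some agent's condition fails, that agent's $p_{\times,i}$ or $p_{I,i}$ separates from $p^H_{FB}$ exactly as in Lemma~\ref{lem:ptimes_pFB_comparison}, so either that agent wants to stop before $p^H_{FB}$ (when $\frac{\mu_i\pi_s-\pi_{l,i}}{r}<R_{l,i}$, giving $p_{\times,i}>p^H_{FB}$ and a point strictly below the relevant level curve, contradicting the agent's best response) or wants to continue past $p^H_{FB}$ (when the inequality reverses, so $p_{I,i}<p^H_{FB}$ and the stopping lower bound is violated). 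Uniqueness then follows as in the baseline proof intuition: the upper bound forces stopping at or above the common crossing, and a best-response verification shows each agent's unique optimal reply is the cutoff at $p^H_{FB}$ regardless of the others' play.

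I expect the main obstacle to be the joint versus pointwise nature of the condition. In the baseline, symmetry meant a single scalar condition governed a single crossing belief; here the correct bookkeeping must show that the $N$ efficiency equalities are exactly what is needed for \emph{every} agent's crossing belief to equal $p^H_{FB}$ at once, and that failure of even one equality breaks the equilibrium via that single agent. Making the balance-identity rewrite clean enough that the ``$\sum_{j\neq i}$'' terms telescope into $\Pi - M\pi_s$ (rather than leaving residual cross-agent terms) is the delicate algebraic step, and I would carry it out carefully before invoking the baseline machinery agent-by-agent.
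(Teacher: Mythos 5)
Your ``if'' direction and your uniqueness sketch track the paper's proof (agent-indexed analogue of Lemma \ref{lem:baseline_br}, stopping-bound lemmas, guess-and-verify at $p^H_{FB}$; the paper's verification yields $u_i = \tfrac{\mu_i}{M}V^H_{FB}$). But your ``only if'' direction rests on a false premise, and this is exactly where the heterogeneous case departs from the baseline. Writing $\delta_i = \tfrac{1}{r}(\mu_i\pi_s - \pi_{l,i}) - R_{l,i}$, the balance constraints $R_{w,i} = R - \sum_{j\neq i}R_{l,j}$ and $\pi_{w,i} = \Pi - \sum_{j\neq i}\pi_{l,j}$ give
\[ p^H_{\times} = \frac{\pi_s/\lambda}{R + \frac{1}{r}(\Pi - M\pi_s) + \sum_j \delta_j }, \qquad p^H_{I,i} = \frac{\pi_s/\lambda}{R + \frac{1}{r}(\Pi - M\pi_s) + \sum_{j\neq i}\delta_j }. \]
So the crossing belief is \emph{common to all agents} (it does not vary across agents even off the efficient locus, contrary to what you assert), it depends on the \emph{sum} of all the $\delta_j$, and agent $i$'s individual threshold depends on the \emph{other} agents' $\delta_j$ rather than on $\delta_i$. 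Your claim that a failed condition for agent $i$ separates ``that agent's $p_{\times,i}$ or $p_{I,i}$'' from $p^H_{FB}$ and ``breaks the equilibrium via that single agent'' is therefore wrong: if $\delta_1 > 0$ and $\delta_j = 0$ for $j\neq 1$, then $p^H_{I,1} = p^H_{FB}$ exactly and $p^H_\times < p^H_{FB}$, so neither of your two cases fires for agent $1$; the efficient profile fails because every \emph{other} agent has $p^H_{I,i} < p^H_{FB}$ and refuses to stop. An agent-by-agent case split cannot detect this.

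The correct argument, and the one the paper uses, is a joint system of inequalities. If the efficient solution is an MPE, then (i) no agent can find $(p^H_{FB},\mu_i\pi_s)$ strictly below $\mathcal{D}_{M-\mu_i,i}$, which forces $p^H_\times \le p^H_{FB}$, i.e.\ $\sum_j \delta_j \ge 0$; and (ii) each agent must be willing to stop at $p^H_{FB}$ once the others have stopped, which forces $p^H_{I,i}\ge p^H_{FB}$, i.e.\ $\sum_{j\neq i}\delta_j \le 0$ for every $i$. Subtracting gives $\delta_i = \sum_j\delta_j - \sum_{j\neq i}\delta_j \ge 0$ for each $i$, and then $\sum_{j\neq i}\delta_j\le 0$ with all terms nonnegative forces $\delta_j = 0$ for all $j$. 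Replace your case analysis with this linear-inequality step; the rest of your outline stands.
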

Note that one implication of Proposition \ref{prop:hetero_noncoop} is that the payoffs to each agent that loses need to be generically different; that is, heterogeneity in the resources available for research implies that for efficiency to be an MPE of the noncooperative game, agents need identity-dependent continuation values upon losing. However, there is a straightforward extension of the contracts discussed in Section \ref{sec:conclusion} that incorporates the necessary heterogeneity to restore efficiency. Consider sharing contracts conditional on winner/losers that now take the form
\[ c^{W,H}_{\alpha_I, \alpha_C}(w) =  \left(\left\{ \left(\alpha_I \mathbbm{1}[w = i] + (1-\alpha_I) \frac{\mu_i}{M}\right) \cdot R\right\}_i, \left\{ \left(\alpha_C \mathbbm{1}[w = i] + (1-\alpha_C)\frac{\mu_i}{M}\right) \cdot \Pi \right\}_i\right) \]
and sharing contracts conditional on effort that take the form
\[ c^{K,H}_{\alpha_I, \alpha_C}(k(\tau)) =  \left(\left\{ \left(\alpha_I \frac{k_i(\tau)}{K(\tau)} +  (1-\alpha_I) \frac{\mu_i}{M}\right) \cdot R\right\}_i, \left\{ \left(\alpha_C \frac{k_i(\tau)}{K(\tau)} +  (1-\alpha_C) \frac{\mu_i}{M}\right) \cdot \Pi \right\}_i\right). \]
Note that now, the key change is that the guaranteed reward has a fraction $\mu_i/M$ rather than $1/N$. The analogous definition of a guarantee (for both types of contracts) then gives each agent a continuation flow reward that is proportional to $\mu_i$, the effort resource available to the agent:
\[ G_i\left(c^{\cdot, H}_{\alpha_I, \alpha_C} \right) = rR\left(1-\alpha_I\right)\frac{\mu_i}{M} + \Pi\left(1-\alpha_C\right)\frac{\mu_i}{M} . \]
It will be useful to define the normalized guarantee per unit resource as independent of $i$:
\[ g\left(c^{\cdot, H}_{\alpha_I, \alpha_C} \right) = \frac{G_i\left(c^{\cdot, H}_{\alpha_I, \alpha_C} \right)}{\mu_i} = rR\left(1-\alpha_I\right)\frac{1}{M} + \Pi\left(1-\alpha_C\right)\frac{1}{M}.  \]
\begin{proposition}\label{prop:hetero_contracts}
    In the experimentation game with resource heterogeneity, the first-best solution is an MPE of the game under sharing contract $c^{\cdot, H}_{\alpha_I, \alpha_C}$ if and only if $g\left(c^{\cdot, H}_{\alpha_I, \alpha_C}\right)  = \pi_s$. Furthermore, if $g\left(c^{\cdot, H}_{\alpha_I, \alpha_C}\right)  = \pi_s$, then the first-best solution is the unique MPE. 
\end{proposition}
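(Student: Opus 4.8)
The plan is to reduce Proposition \ref{prop:hetero_contracts} to Proposition \ref{prop:hetero_noncoop} exactly as the proofs of Theorems \ref{thm:guarantee} and \ref{thm:guarantee_2} reduce to Theorem \ref{thm:noncoop_efficient}. First I would compute the loser payoffs that each contract induces in the underlying heterogeneous experimentation game. For $c^{W,H}_{\alpha_I,\alpha_C}$, when agent $i$ is a loser ($w\neq i$) the induced instantaneous and continuation rewards are
\[ \tilde{R}_{l,i} = (1-\alpha_I)\frac{\mu_i}{M}\,R, \qquad \tilde{\pi}_{l,i} = (1-\alpha_C)\frac{\mu_i}{M}\,\Pi, \]
so that the per-agent guarantee satisfies $G_i\left(c^{W,H}_{\alpha_I,\alpha_C}\right) = r\tilde{R}_{l,i} + \tilde{\pi}_{l,i}$, matching the interpretation of the guarantee as the loser's flow value upon a breakthrough.

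The crux of the argument is then to observe that the heterogeneous efficiency condition from Proposition \ref{prop:hetero_noncoop}, namely $\frac{\mu_i\pi_s - \tilde{\pi}_{l,i}}{r} = \tilde{R}_{l,i}$, rearranges to $\mu_i\pi_s = \tilde{\pi}_{l,i} + r\tilde{R}_{l,i} = G_i$. Since the contract makes the guarantee proportional to the resource endowment, $G_i = \mu_i\, g\left(c^{\cdot,H}_{\alpha_I,\alpha_C}\right)$, this per-agent condition becomes $\mu_i\, g = \mu_i\pi_s$, i.e. $g = \pi_s$, independently of $i$. Hence the single scalar condition $g = \pi_s$ is equivalent to the full family of $N$ agent-specific efficiency conditions holding simultaneously, and Proposition \ref{prop:hetero_noncoop} delivers both the ``if and only if'' claim and unique implementation.

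For the effort-based contract $c^{K,H}_{\alpha_I,\alpha_C}$ I would invoke the incentive-equivalence observation underlying Theorem \ref{thm:guarantee_2}: conditional on a breakthrough in an infinitesimal interval $[t,t+\dd t)$, the probability that agent $i$ is the winner is $k_i(t)/K(t)$, and this identity is unaffected by resource heterogeneity. Therefore conditioning the split on the effort share $k_i(\tau)/K(\tau)$ has the same effect on each agent's marginal incentives as conditioning on the winner's identity, and the induced loser payoffs $\tilde{R}_{l,i},\tilde{\pi}_{l,i}$ agree with those computed above once one notes that in the first-best every agent exerts full effort $\mu_i$, so the on-path share is $k_i(\tau)/K(\tau) = \mu_i/M$. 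The same guarantee computation and the same appeal to Proposition \ref{prop:hetero_noncoop} then close this case.

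The step I expect to be the main obstacle — or rather the main point to make transparent — is the collapse of the $N$ heterogeneous efficiency conditions into the single condition $g = \pi_s$. This is not automatic: Proposition \ref{prop:hetero_noncoop} requires a \emph{separate} equality for each agent, and a generic contract cannot satisfy all of them at once. It works here precisely because the contract was designed with guaranteed shares proportional to $\mu_i/M$ rather than $1/N$, so that $G_i$ inherits the same $\mu_i$-proportionality as the right-hand side $\mu_i\pi_s$ of each agent's condition; dividing out $\mu_i$ yields the $i$-independent normalized guarantee $g$. Making this structural matching explicit is the heart of the proof, after which everything else is the routine reduction already carried out in the homogeneous case.
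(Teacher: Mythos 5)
Your proposal is correct and follows essentially the same route as the paper: compute the loser payoffs $\tilde{R}_{l,i},\tilde{\pi}_{l,i}$ induced by each contract, observe that the scalar condition $g=\pi_s$ is equivalent to the full family of per-agent conditions $\frac{\mu_i\pi_s-\tilde{\pi}_{l,i}}{r}=\tilde{R}_{l,i}$ precisely because the guaranteed shares are proportional to $\mu_i/M$, and then invoke Proposition \ref{prop:hetero_noncoop} (together with the incentive-equivalence of effort-share and winner-identity conditioning from Theorem \ref{thm:guarantee_2}). Your explicit remark on why the $N$ heterogeneous conditions collapse to one is exactly the point the paper's terse chain of equivalences relies on implicitly.
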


Note that in the extension, the efficiency condition requires that the normalized guarantee $g$ equals $\pi_s$; this implies that the individual agent guarantee $G_i$ must be equal to $\mu_i \pi_s$, or that the individual agents are guaranteed an amount proportional to the measure of effort resource they have available.

\section{Conclusion}
\label{sec:conclusion}
In conclusion, this paper shows that very little information is necessary to remedy inefficiencies in strategic experimentation. In particular, it is sufficient for sharing contracts to condition on winning/losing or effort at time of breakthrough (but it is not necessary to condition on both or more). 

While the formal analysis was constrained to a specific model, this theoretical work offers important insights for thinking about research. First, the condition for efficiency when there are breakthrough payoff externalities is that breakthroughs must have a neutral impact on the losers. As much of the contest literature has focused on thinking about how to award winners, the analysis in this paper suggests that the key to understanding whether the amount of research conducted in such an environment is socially efficient is to consider how the \textit{losers} weigh the arrival of the breakthrough against the status quo. Second, the existence of simple contracts that restore efficiency suggests a method for sharing rewards for joint projects. The main insight is that the guarantee (or what agents are promised independent of their effort choices) must match their status quo opportunity cost of research effort. Indeed, these sharing contracts restore efficiency in a self-enforcing way; provided a contract that awards winners and losers in the right way, it becomes unnecessary to observe or contract on the actions of the other agents. On the other hand, if it is impractical or infeasible to identify the winner/losers, it is also sufficient for contracts to condition on effort shares at the time of breakthrough.

Finally, this model is a step in extending strategic experimentation models toward capturing the reality of research. Future work could incorporate other features of research that this model does not address. For example, collaboration sometimes arises because of complementarities between agents' effort. While this paper demonstrates that collaborating alleviates inefficiency in the absence of complementarities, a potential direction for further study is the interplay among technological complementarities, informational externalities, and payoff externalities. Another feature of research that this model does not capture is technological dependence on historical effort. Future work could extend the breakthrough-generating technology to also account for factors such as human capital accumulation.

\newpage

\bibliographystyle{agsm}
\bibliography{biblio}

\newpage

\appendix

\section{Omitted Proofs}
\subsection*{Derivation of the Cooperative HJB Equation \eqrefb{eqn:hjb_cooperative}}
Define the belief $p$ as the state of the system, and $K$ as the control variable of the cooperative planner; equation \eqrefb{eqn:belief_law} generates a control constraint:
\[ \dot{p} = - K \lambda p(1-p) \]
The first-best solution maximizes the total payoff of all agents:
\[ \int^\tau re^{-rt}\pi_s(N - K) \ dt + re^{-r\tau}R + e^{-r\tau}\Pi \]
Since the total payoff only depends on the total effort $K$ and not the individual $k_i$, the problem takes the form of an optimal control problem with a one-dimensional control variable and state. We follow standard methods to derive the Hamilton-Jacobi-Bellman equation. Let $u(p)$ denote the maximized value function under belief $p$, or 
\[ u(p) = \max_{K} \mathbb{E}\left[ \left. \int^\tau re^{-rt}\pi_s(N - K) \ dt + re^{-r\tau}R + e^{-r\tau}\Pi \right\vert p \right] \]
Note that the distribution of $\tau$ depends on $p$. Consider an infinitesimal time increment $\Delta$, and suppose we fix a constant policy $K$ over the time increment $\Delta$. The payoff from setting $K$ over this increment is
\begin{align*}
    &\mathbb{P}\left[\tau \in [0, \Delta)\right] \mathbb{E}\left[ \left. \int^\tau re^{-rt}\pi_s(N - K) \ dt + re^{-r\tau}R + e^{-r\tau}\Pi\right\vert \tau \in [0,\Delta)  \right]  \\
    + &\mathbb{P}\left[\tau \not \in [0, \Delta)\right]\left( \left(1 - e^{-r \Delta}\right)\pi_s(N - K) + e^{-r \Delta} u(p + dp)\right)
\end{align*}  
where $p + dp$ is the evolution of the state belief according to the control constraint (belief law of motion).
Given that this payoff results from setting policy $K$ over the increment, the probability $\tau$ is in $[0, \Delta)$ is  $pK\lambda \Delta$. So the payoff from setting $K$ over this increment (and then optimizing the choice of $K$) is given by 
\begin{align*}
    M_\Delta(p) = \max_K & \begin{Bmatrix}  pK \lambda \Delta \mathbb{E}\left[ \left.\int^\tau re^{-rt}\pi_s(N - K) \ dt + re^{-r\tau}R + e^{-r\tau}\Pi \right\vert \tau \in [0,\Delta)  \right]  \\
    \left.+(1 - pK\lambda \Delta)\left( \left(1 - e^{-r \Delta}\right)\pi_s(N - K) + e^{-r \Delta} u(p + dp)\right) \right\} \end{Bmatrix}
\end{align*} 
As we take the increment $\Delta \to 0$, the value of the above expression should approach the optimal value of the problem $u(p)$. By variational calculus, since $u$ was the functional optimum, the variational derivative $\frac{1}{\Delta}(u - M_\Delta) \to 0$ as $\Delta \to 0$. Consider exactly the variational difference
\begin{align*}
    \frac{1}{\Delta}\left( u(p) -\max_K \left\{ pK \lambda \Delta E +(1 - pK\lambda \Delta)\left( \left(1 - e^{-r \Delta}\right)\pi_s(N - K) + e^{-r \Delta} u(p + dp)\right) \right\}  \right)
\end{align*} 
where 
\begin{equation*}
   E(\Delta)  =  \mathbb{E}\left[ \left.\int^\tau re^{-rt}\pi_s(N - K) \ dt + re^{-r\tau}R + e^{-r\tau}\Pi \right\vert \tau \in [0,\Delta)  \right] 
\end{equation*}
As argued, as $\Delta \to 0$ this expression should converge to 0: 
\begin{align*}
    \lim_{\Delta \to 0} \frac{1}{\Delta}\left(u(p) - \max_K \left\{ pK \lambda \Delta E(\Delta) +(1 - pK\lambda \Delta)\left( \left(1 - e^{-r \Delta}\right)\pi_s(N - K) + e^{-r \Delta} u(p + dp)\right) \right\} \right) = 0
\end{align*}  
Pulling out a $u(p + dp)$, and distributing the $1/\Delta$ term,
\begin{align*}
    \lim_{\Delta \to 0} \max_K &  \left\{ \left( \frac{u(p) - u(p + dp)}{\Delta} \right) \right.- \frac{1}{\Delta}pK \lambda \Delta E(\Delta) \\
    &\left. +\frac{1}{\Delta}(1 - pK\lambda \Delta)\left( \left(1 - e^{-r \Delta}\right)\pi_s(N - K) +  e^{-r \Delta} u(p + dp)\right) - \frac{1}{\Delta}u(p + dp) \right\} = 0
\end{align*}  
Rearranging terms, 
\begin{align*}
    \lim_{\Delta \to 0} \max_K   \begin{Bmatrix} \left( \frac{u(p) - u(p + dp)}{\Delta} \right)  -  pK \lambda E(\Delta) \\
     +(1 - pK\lambda \Delta)\left(\frac{1 - e^{-r \Delta}}{\Delta}\pi_s(N - K) - \frac{1- e^{-r \Delta}}{\Delta} u(p + dp)\right) - pK\lambda u(p + dp) \end{Bmatrix} = 0
\end{align*}  
Note that $\lim_{\Delta \to 0} \frac{1 - e^{-r\Delta}}{\Delta} = r$, $\tau\mid_{\tau < \Delta} \to 0$ as $\Delta \to 0$, and from the belief law of motion, $u(p) - u(p + dp) \to K\lambda p (1-p) u'(p) \Delta $ uniformly in $K$. Thus, the maximizing expression converges uniformly in $K$ as $\Delta \to 0$, and so we can interchange the maximization and the limit. Using this, interchange the max and the limit and the expression simplifies considerably, as when $\Delta \to 0$, $E(\Delta) \to 0$, and $1 - pK\lambda \Delta \to 1$: 
\begin{align*}
    \max_K  &\left\{ - u'(p) K\lambda p(1-p) +pK \lambda \left( rR + \Pi  \right)+  \left( r\pi_s(N - K) - ru(p)\right)  - pK\lambda u(p) \right\}=0
\end{align*}  
\begin{align*}
    r u(p) &= \max_K  \left\{r\pi_s(N - K)  + pK\lambda \left(r R + \Pi  - u(p) - u'(p) (1-p)\right) \right\} \\
    u(p) &= \max_K  \left\{ \pi_s(N - K) + pK\lambda R + pK\frac{\lambda}{r} \left( \Pi  - u(p) - u'(p) (1-p)\right) \right\} 
\end{align*}  
Dividing through by $N$ to get the average Bellman value of an agent in the first-best solution:
\begin{align*}
    \frac{u(p)}{N} &= \left(\pi_s + \max_K \left[K\left(p\frac{\lambda}{r}\left(\frac{\Pi}{N} - \frac{u(p)}{N} - (1-p)\frac{u'(p)}{N} \right) - \frac{c(p)}{N} \right) \right] \right)  \\
    V_N(p) &= \pi_s + \max_K \left[K\left(p\frac{\lambda}{r}\left(\frac{\Pi}{N} - V_N(p) - (1-p)V_N'(p) \right) - \frac{c(p)}{N} \right) \right] 
\end{align*}
\subsection*{Proof of Theorem \ref*{thm:cooperative}}
I solve the HJB in Equation \eqrefb{eqn:hjb_cooperative}. It is immediate from the HJB that total effort $K$ depends on whether 
\[ p\frac{\lambda}{r}\left(\frac{\Pi}{N} - V_N(p) - (1-p)V_N'(p)  \right) \]
is larger than or smaller than $c(p)/N$. If it is larger, then optimally $K = N$, and if it is smaller, $K = 0$. When $K = N$, I obtain the following differential equation for $V_N$:
\[  V_N(p) = \pi_s + N\left(p\frac{\lambda}{r}\left(\frac{\Pi}{N} - V_N(p) - (1-p)V_N'(p) \right) - \frac{c(p)}{N} \right)   \]
\[  \left(1 + \frac{Np \lambda}{r}\right)V_N(p) + \frac{Np(1-p)\lambda}{r}V_N'(p) = \pi_s + \left(Np\frac{\lambda}{r}\left(\frac{\Pi}{N}\right) - \left(\pi_s - p\lambda R\right) \right)   \]
\begin{equation}\label{eqn:coop_effort_diffeq}
\left(1 + \frac{Np \lambda}{r}\right)V_N(p) + \frac{Np(1-p)\lambda}{r}V_N'(p) = p\lambda \left( \frac{\Pi}{r} + R \right)   
\end{equation}
I explicitly solve this differential equation. The definition of odds ratio is
\begin{equation*}
\Omega(p) := \frac{1-p}{p}  
\end{equation*}
Consider the function:
\[ \phi(p) = (1-p)\Omega(p)^{\frac{r}{N\lambda}} \]
Its derivative is 
\begin{align*}
\phi'(p) &= - \Omega(p)^{\frac{r}{N\lambda}} + \frac{r}{N\lambda}(1-p)\Omega(p)^{\frac{r}{N\lambda} - 1}\left(\frac{-1}{p^2} \right) \\
&= -\Omega(p)^{\frac{r}{N\lambda}}\left(1 + \frac{r}{N\lambda p}\right)
\end{align*}
Note that 
\begin{align*}
 &s\left(1 + \frac{Np \lambda}{r}\right)\phi(p) + \frac{Np(1-p)\lambda}{r}\phi'(p) \\
 &= \left(1 + \frac{Np \lambda}{r}\right)(1-p)\Omega(p)^{\frac{r}{N\lambda}} - \frac{Np(1-p)\lambda}{r}\Omega(p)^{\frac{r}{N\lambda}}\left(1 + \frac{r}{N\lambda p}\right) \\
 &=  \left(1 + \frac{Np \lambda}{r}\right)(1-p)\Omega(p)^{\frac{r}{N\lambda}} - (1-p)\Omega(p)^{\frac{r}{N\lambda}}\left(1 + \frac{N\lambda p}{r}\right) = 0
\end{align*}
So the solutions to the differential equation \eqref{eqn:coop_effort_diffeq} are 
\begin{equation}\label{eqn:coop_value_function_effort}
V_N(p) = p \frac{\lambda \left( \frac{\Pi}{r} + R \right)}{1 + \frac{N\lambda}{r}} + C\phi(p) 
\end{equation}
for some constant $C$. This characterizes the behavior of the value function when full effort occurs. Note that the first term denotes the expected payoff from committing to research for the rest of time, and the second term is the option value of being able to abandon research. Hence $C$ is nonnegative, and so the characterization in \eqref{eqn:coop_value_function_effort} is convex. 

To finish characterizing the solution, when $K = 0$, $V_N(p) = \pi_s$, and $V_N'(p) = 0$, so the smooth pasting and value-matching conditions pin down $C$ and the transition belief $p_{FB}$. Because the value function characterization from \eqref{eqn:coop_value_function_effort} is convex, there can exactly one point where these conditions can be satisfied. 

At the threshold belief $p_{FB}$ then, smooth pasting and value matching implies that the differential equation \eqref{eqn:coop_effort_diffeq} satisfies:
\[ \left(1 + \frac{Np \lambda}{r}\right)\pi_s = p\lambda \left( \frac{\Pi}{r} + R \right)   \]
\[ \pi_s = p\lambda \left( \frac{N}{r}\left(\frac{\Pi}{N} - \pi_s\right) + R \right)  \]
\[ p_{FB} = \frac{\pi_s}{\lambda \left( \frac{N}{r}\left(\frac{\Pi}{N} - \pi_s\right) + R \right)}  \]
A quick rearrangement gives the expression in \eqrefb{eqn:belief_fb}. Lastly, I solve for $C$, from the value matching condition:
\[ \pi_s = p_{FB} \frac{\lambda \left( \frac{\Pi}{r} + R \right)}{1 + \frac{N\lambda}{r}} + C\phi(p_{FB})  \]
\[ C = \frac{\pi_s - p_{FB} \frac{\lambda \left( \frac{\Pi}{r} + R \right)}{1 + \frac{N\lambda}{r}}}{\phi(p_{FB})} =  \frac{\pi_s\left(1 + \frac{N\lambda}{r}\right) - p_{FB} \lambda \left( \frac{\Pi}{r} + R \right)}{\left(1 + \frac{N\lambda}{r}\right) \phi(p_{FB})}=  \frac{\pi_s(1- p_{FB})\frac{N\lambda}{r}}{\left(1 + \frac{N\lambda}{r}\right) \phi(p_{FB})}\]
where the last step used the equation pinning down $p_{FB}$. 
So the value function is 
\begin{equation}\label{eqn:value_function_fb}
V_{FB}(p) = \begin{cases}
 \pi_s & p < p_{FB} \\
 p \frac{\lambda \left( \frac{\Pi}{r} + R \right)}{1 + \frac{N\lambda}{r}} + \frac{\pi_s\left(1 + \frac{N\lambda}{r}\right) - p_{FB} \lambda \left( \frac{\Pi}{r} + R \right)}{\left(1 + \frac{N\lambda}{r}\right) \phi(p_{FB})}\phi(p) & p \ge p_{FB}
\end{cases}
\end{equation}
Optimality follows from standard verification arguments; since the HJB admits a unique solution (if a continuous, differentiable solution exists), and we have constructed a continuous, differentiable solution, we are done.
\hfill \qedsymbol    
\subsection*{Derivation of the Baseline HJB Equation \eqrefb{eqn:hjb_baseline}}
The formal logic follows the same as in the derivation of the cooperative problem HJB; I provide the heuristic calculation here that parallels the formal argument (select a $dt$ and take a constant policy on the interval $[t, t + dt)$, setting the variational derivative to zero as the time increment goes to 0); heuristically, this gives
\begin{align*}
    u(p) &=&& \max_{k_i}&&\left[r \ dt \left( \left(1 - k_i \right)\pi_s + pk_i \lambda R_w + pK_{-i}(p)\lambda R_l \right) +  (1 - r \ dt)  p \ dt (\lambda k_i \pi_w +\lambda K_{-i}(p) \pi_l) \right.\\
    &&&&&  +\left. (1 - r \ dt)(1 - p (k_i + K_{-i}(p)) \lambda \ dt)(u(p) - (k_i + K_{-i}(p))\lambda p (1-p)u'(p) \ dt )  \right] \\
    u(p) &=&& \max_{k_i}&&\left[r \ dt \left[ \left(1 - k_i \right)\pi_s + p k_i \lambda R_w  + pK_{-i}(p)\lambda R_l \right]\right.+ u(p) - r \ dt  \ u(p)  \\
    &&&&& \left.+  p \ dt (\lambda k_i \pi_w +\lambda K_{-i}(p) \pi_l)  - (k_i + K_{-i}(p))\lambda p \ dt (u(p) + (1-p)u'(p))     \right] \\
    u(p) &=&& \max_{k_i}&&\left[ \left( \left(1 - k_i \right)\pi_s + p k_i \lambda R_w + p K_{-i}(p)\lambda R_l \right) \right. \\
    &&&&&\left.+ p \frac{\lambda}{r} \left(k_i \pi_w + K_{-i}(p) \pi_l  - (k_i + K_{-i}(p))(u(p) + (1-p)u'(p)) \right)  \right] 
\end{align*} 
Let $b_I$, $c_I$ be defined as in Equations \eqrefb{eqn:def_bI} and \eqrefb{eqn:def_cI}. Then the HJB becomes:
\begin{align*}
    u(p) &= \pi_s + K_{-i}(p)\left(p\lambda R_l + b_I(p,u,u') - p\frac{\lambda}{r}(\pi_w - \pi_l) \right) +  \max_{k_i} \left[k_i \left( b_I(p, u, u') - c_I(p) \right)  \right]
\end{align*}

\subsection*{Proof of Lemma \ref*{lem:viscosity}}
It suffices to check the regularity conditions necessary for Theorem 2.12 in \cite{bd1997} (A0 - A4). The action space is $[0,1]$, which is closed and compact (implying the first half of A0). The law of motion dictating how actions influence the state is the law of motion \eqrefb{eqn:belief_law}, which we can quickly confirm is continuous (implying A0), bounded (implying A1), and Lipschitz continuous in $p$ (implying A2 and A3). The discount rate $r > 0$, and payoffs are Lipschitz continuous in the action, implying A4. Thus, there exists a unique viscosity solution to the HJB \eqrefb{eqn:hjb_baseline}. 
\hfill \qedsymbol    

\subsection*{Proof of Lemma \ref*{lem:baseline_br}}
Recall that the HJB equation from \eqrefb{eqn:hjb_baseline} is:
\begin{align*} 
u(p) &&= \pi_s &+ K_{-i}(p)\left(p\lambda R_l + b_I(p,u,u')-p\frac{\lambda}{r}(\pi_w - \pi_l) \right) +  \max_{k_i} \left[k_i \left(b_I(p, u, u') - c_I(p) \right)  \right] \\
&&= \pi_s &+ K_{-i}(p) \left(b_I(p,u,u') - c_I(p) \right) + K_{-i}(p)\left( p\lambda R_l + c_I(p) -  \frac{p\lambda}{r}(\pi_w - \pi_l)\right) \\ 
&&& + \max_{k_i} \left[k_i \left( b_I(p, u, u') - c_I(p) \right)  \right].
\end{align*}
Consider $b_I(p, u, u') - c_I(p)$. In any equilibrium, if $k_i = 1$, this term must have been nonnegative. If $k_i = 0$, it must have been nonpositive. If $k_i \in (0,1)$, the term must have been zero. Equivalently, if $k_i = 1$ in equilibrium, it must be the case that
\[ u(p) \ge \pi_s + K_{-i}(p)\left(p\lambda R_l + c_I(p) -  \frac{p\lambda}{r}(\pi_w - \pi_l)\right).  \]
If $k_i = 0$, 
\[ u(p) \le \pi_s + K_{-i}(p)\left(p\lambda R_l + c_I(p) -  \frac{p\lambda}{r}(\pi_w - \pi_l)\right),  \]
and if $k_i \in (0,1)$
\[u(p) = \pi_s + K_{-i}(p)\left(p\lambda R_l + c_I(p) -  \frac{p\lambda}{r}(\pi_w - \pi_l)\right). \]
Combining the cases and substituting in for $c_I(p)$, one recovers the policy in \eqrefb{eqn:baseline_br}.
\hfill \qedsymbol    

\subsection*{Proof of Theorem \ref*{thm:noncoop_efficient}}
First, we show that the first-best solution is an MPE. Suppose all other agents are playing cutoff strategies at $p_{FB}$. We check that $V_{FB}$ defined in \eqref{eqn:value_function_fb} is a solution to the HJB. Then the HJB above $p_{FB}$ gives
\begin{align*} 
    u(p) = &\pi_s + (N-1) \left[p \lambda R_l + p\frac{\lambda}{r}(\pi_l - u(p) - (1-p)u'(p))\right] \\
    &+  \max_{k_i} \left[k_i \left( p\frac{\lambda}{r}(\pi_w - u(p) - (1-p)u'(p)) - (\pi_s - p \lambda R_w) \right)  \right] 
    \end{align*}
If $k_i = 1$, we get the same differential equation as the cooperative case, equation \eqref{eqn:coop_effort_diffeq}, and $V_{FB}$ by construction satisfies this differential equation for $p> p_{FB}$. Below $p_{FB}$ no other agents experiment, so the HJB implies that 
\begin{align*} 
    u(p) = &\pi_s +  \max_{k_i} \left[k_i \left( p\frac{\lambda}{r}(\pi_w - u(p) - (1-p)u'(p)) - (\pi_s - p \lambda R_w) \right)  \right] 
\end{align*}
Note that since $\pi_l + rR_l = \pi_s$, we have that for any $p$, 
\[ p\frac{\lambda}{r}(\pi_w - \pi_s) - (\pi_s - p\lambda R_w) = p\frac{\lambda}{r}(\Pi - N \pi_s) - (\pi_s - p\lambda R)\]
Then at $p < p_{FB}$, 
\begin{align*} 
    &\pi_s +  \max_{k_i} \left[k_i \left( p\frac{\lambda}{r}(\pi_w - V_{FB}(p) - (1-p)V'_{FB}(p)) - (\pi_s - p \lambda R_w) \right)  \right] \\
    =&\pi_s +  \max_{k_i} \left[k_i \left( p\frac{\lambda}{r}(\pi_w - \pi_s) - (\pi_s - p \lambda R_w) \right)  \right] \\
    =&\pi_s +  \max_{k_i} \left[k_i \left(  p\frac{\lambda}{r}(\Pi - N \pi_s) - (\pi_s - p\lambda R) \right)  \right] \\
    =&\pi_s +  \max_{k_i} \left[k_i \left(  p\frac{\pi_s}{p_{FB}} - \pi_s \right)  \right] = \pi_s = V_{FB}(p)
\end{align*}
Hence the HJB is satisfied below $p_{FB}$ as well. So $V_{FB}$ is a solution to the HJB, and therefore symmetric cutoff strategies at $p_{FB}$ is an MPE.

Now, I argue that if the condition fails, the efficient solution cannot be an MPE. 
If $\frac{\pi_s - \pi_l}{r} < R_l$, Lemma \ref*{lem:ptimes_pFB_comparison} implies that $p_\times > p_{FB}$, so the point $(p_{FB}, \pi_s)$ lies strictly interior in the half-plane below $\mathcal{D}_{N-1}$ (see Figure \ref*{fig:ptimes}); thus, playing a cutoff strategy at $p_{FB}$ cannot be a best-response by Lemma \ref*{lem:baseline_br}, since it would imply $k_i = 1$ below $\mathcal{D}_{N-1}$. For the other case, if $\frac{\pi_s - \pi_l}{r} > R_l$, Lemmas \ref*{lem:ptimes_pFB_comparison} and \ref*{lem:pI_characterization} imply that $p_I < p_{FB}$. Thus, the efficient solution cannot be an MPE, since it would imply that there exist an MPE where agents use weakly monotonic strategies but experimentation stops at $p_{FB} > p_I$, a contradiction of Lemma \ref*{lem:stopping_lower_bound}.

Lastly, I show that this is the only MPE. By Lemma \ref*{lem:stopping_lower_bound}, experimentation can never stop below $p_{FB}$, so all agents must stop exerting effort by $p_{FB}$. To show that this is the only MPE, I argue that in any equilibrium, each agent must be exerting full effort above $p_{FB}$. Suppose not, that some equilibrium exists where some agent $i$ has $k_i(p) < 1$ for $p > p_{FB}$. Let $u$ denote the value function of that agent. Since $rR_l + \pi_l = \pi_s$, any agent could obtain $\pi_s$ by playing $k_{-i} = 0$, and so $u(p) \ge \pi_s$. If $u(p) > \pi_s$, then $(p, u(p))$ lies above $\mathcal{D}_{K_{-i}}$ for any $K_{-i}$ so $k_i(p) < 1$ is a contradiction of Lemma 1. If $u(p) = \pi_s$, then by Lemma 1, the only way $k_i(p) < 1$ can be an equilibrium best-response policy requires $K_{-i}(p) = 0$, so the HJB implies that in order for $k_i < 1$ to be optimal, taking any sequence $p_n < p$, $p_n \to p$ and $u$ differentiable at $p_n$, we have  
\[ p_n\frac{\lambda}{r}(\pi_w - u(p_n) - (1-p_n)u'(p-n)) \le (\pi_s - p_n \lambda R_w)  \]
In the limit as $p_n \to p$, 
\[ -p\frac{\lambda}{r}(1-p)u'_-(p) \le \pi_s \left(1 - \frac{p}{p_I}\right)  \]
where $u'_-$ is the left-derivative (again, $u$ need not be differentiable). But this implies that the left-derivative of $u(p)$ is positive (since $p > p_{FB}=p_I$ the right-hand side is negative) and so there must exist some $p' < p$, such that $u(p') < \pi_s$, a contradiction of the fact that any agent can guarantee at least $\pi_s$ by always playing $k_i = 0$. Hence in either case of $u(p) = \pi_s$ or $u(p) > \pi_s$ it cannot be an equilibrium best-response to play $k_i(p) < 1$, and so the only equilibrium must be the first-best solution.
\hfill \qedsymbol    

\subsection*{Proof of Corollary \ref*{corr:unobservable_fb}}

    Note that the observability structure does not matter for the social planner, so the planner may as well have a public belief, allowing us to use Theorem \ref*{thm:cooperative}.

    The belief law of motion when all agents are exerting full effort on research gives 
    \[ \dot{p} = -N\lambda p(1-p) \]
    As before, let $\Omega$ denote the odds ratio. Then it is relatively straightforward to confirm that given the initial belief at $p(0)$, the belief path that satisfies the differential equation is 
    \[ p(t) = \frac{\exp(-N\lambda t)}{\Omega(p(0)) + \exp(-N\lambda t)} \]
    Setting the LHS to $p_{FB}$ and solving for $t$, we get 
    \[ p_{FB} = \frac{\exp(-N\lambda t_{FB})}{\Omega(p(0)) + \exp(-N\lambda t_{FB})} \]
    \[ 1 - p_{FB} = \frac{\Omega(p(0)}{\Omega(p(0)) + \exp(-N\lambda t_{FB})} \]
    \[ \Omega(p_{FB}) = \frac{\Omega(p(0)}{\exp(-N\lambda t_{FB})} \]
    \[ \exp(-N\lambda t_{FB})  = \frac{\Omega(p(0)}{\Omega(p_{FB})} \]
    \[ t_{FB} = \frac{1}{N\lambda}\ln\left( \frac{\Omega(p_{FB})}{\Omega(p(0)} \right) \] 
\hfill \qedsymbol

\subsection*{Proof of Proposition \ref*{prop:unobservable}}
    First, suppose $\frac{\pi_s - \pi_l}{r} \neq R_l$. From Theorem \ref*{thm:noncoop_efficient}, the efficient solution cannot be an equilibrium even when the actions are observable. Hence, when actions are unobservable, the same constructed deviations in the proof of Theorem \ref*{thm:noncoop_efficient} are still profitable, especially when these deviations are not detectable. 

    Now, suppose $\frac{\pi_s - \pi_l}{r} = R_l$. I first confirm that cutoff strategies at $t_{FB}$ are an equilibrium. Suppose all other agents are using a cutoff at $t_{FB}$. Consider the best-response problem of a single agent:

    \begin{gather*}
        \max_{k} \mathbb{E}\left[\int^\tau re^{-rt}\pi_s(1 - k(t)) \ dt + e^{-r\tau}\left( \frac{k(\tau)}{k(\tau) + K_{-i}(\tau)}(rR_w + \pi_w) + \frac{K_{-i}(\tau)}{k(\tau) + K_{-i}(\tau)}(rR_l + \pi_l)\right) \right]  \\
        \dot{p}(t) = - (k(t) + K_{-i}(t))\lambda p(t)(1-p(t)) \\
        K_{-i}(t) = \begin{cases} 
        N-1 & t < t_{FB} \\
        0 & t \ge t_{FB} 
        \end{cases}
    \end{gather*}
    Consider the subgame problem that occurs after $t_{FB}$. The agent's continuation problem is
    \begin{gather*}
        \max_{k} \mathbb{E}\left[\int^\tau re^{-rt}\pi_s(1 - k(t)) \ dt + e^{-r\tau}(rR_w + \pi_w) \right]  \\
        \dot{p}(t) = - k(t)\lambda p(t)(1-p(t)) 
    \end{gather*}
    This is exactly a single-agent (hence trivially cooperative) problem; from Theorem \ref*{thm:cooperative}, there exists a Markovian optimal policy, with cutoff at belief 
    \begin{align*}
        \frac{\pi_s/\lambda}{R_w + \frac{1}{r}(\pi_w - \pi_s)} &= \frac{\pi_s/\lambda}{R + \frac{1}{r}(\Pi - N\pi_s) - (N-1)\left(R_l + \frac{1}{r}(\pi_l - \pi_s) \right)} \\
        &= \frac{\pi_s/\lambda}{R + \frac{1}{r}(\Pi - N\pi_s) } = p_{FB}
    \end{align*} 
    Hence, if the belief $p(t_{FB})$ is at (or below) $p_{FB}$ following some agent strategy on time $[0, t_{FB})$, the agent must stop researching at $t_{FB}$. This also implies that if the belief is $p(t_{FB})$ at $t_{FB}$, then the agent's continuation value is given by
    $V_{FB}(p(t_{FB}))$, from equation \eqref{eqn:value_function_fb}.
    
    Now, consider the finite horizon $[0, t_{FB})$. Let $u(p, t)$ denote the value of the problem at time $t$ where the state is $p$. By Bellman's principle of optimality, we can similarly derive the HJB as before, where the value function is now dependent on both time and belief:
    \begin{align*} 
    u(p, t) =& \max_k \begin{bmatrix} r \dd t \left( \pi_s ( 1 - k) + p k \lambda R_w + p K_{-i}(t)\lambda R_l \right) + \lambda p \dd t\left( k \pi_w + K_{-i}(t) \pi_l \right)  \\ 
    + (1 - r \dd t) (1 - p(k + N-1)\lambda \dd t)(u(p + dp, t + dt)) \end{bmatrix} \\
    =&\max_k \begin{bmatrix} r \dd t \left( \pi_s ( 1 - k) + p k \lambda R_w + p K_{-i}(t)\lambda R_l \right) + \lambda p \dd t\left( k \pi_w + K_{-i}(t) \pi_l \right)  \\ 
    + (1 - r \dd t) (1 - p(k + N-1)\lambda \dd t)(u(p, t) + u_p(p, t)\dot{p} \dd t + u_t(p, t) \dd t) \end{bmatrix} \\
    =&\max_k \begin{bmatrix} r \dd t \left( \pi_s ( 1 - k) + p k \lambda R_w + p K_{-i}(t)\lambda R_l \right) + \lambda p \dd t\left( k \pi_w + K_{-i}(t) \pi_l \right)  \\ 
    + u(p, t) - r \dd t u(p, t) - p(k + K_{-i}(t))\lambda \dd t u(p,t) + u_p(p, t)\dot{p} \dd t + u_t(p, t) \dd t \end{bmatrix} \\
    u(p, t) =&\max_k \begin{bmatrix} \left( \pi_s ( 1 - k) + p k \lambda R_w + p K_{-i}(t)\lambda R_l \right) + \frac{\lambda}{r} p \left( k \pi_w + K_{-i}(t) \pi_l \right)  \\ 
    - p(k + K_{-i}(t))\frac{\lambda}{r}  u(p,t) + \frac{1}{r}\left( u_p(p, t)\dot{p} + u_t(p, t) \right) \end{bmatrix} 
    \end{align*}
    \begin{align}
    u(p,t) - \frac{1}{r}u_t(p,t) =& \pi_s + K_{-i}(t)p \lambda \left( R_l + \frac{1}{r}(\pi_l - u(p,t) - (1-p)u_p(p,t)) \right) \notag \\
    &+ \max_k \left[ k \left( - \pi_s + p \lambda R_w  + \frac{\lambda}{r} p \pi_w - p \frac{\lambda}{r}  u(p,t) - \frac{\lambda}{r} p(1-p)u_p(p, t)  \right) \right] \label{eqn:time_hjb}
    \end{align}
    Note that the belief, in absence of a breakthrough, is strictly decreasing, and the minimum the belief could be at $t_{FB}$ is exactly $p_{FB}$. Hence, the only domain of the HJB we have to consider is $t < t_{FB}$ and $p > p_{FB}$, where $K_{-i}(t) = N - 1$.

    The boundary condition for the terminal time is that $u(p, t_{FB}) = V_{FB}(p)$. Now, I verify that $u(p,t) = V_{FB}(p)$ satisfies the HJB together with the cutoff policy $k = 1$ when $p > p_{FB}$ and $t < t_{FB}$, $0$ otherwise. The solution by construction already satisfies the boundary condition. Plugging $V_{FB}$ into \eqref{eqn:time_hjb}, 
    \begin{align}
    V_{FB}(p) =& p \lambda \left( R + \frac{1}{r}(\Pi - NV_{FB}(p) - (1-p)NV'_{FB}(p)) \right) \notag 
    \end{align}
    \begin{equation}
        \label{eqn:time_diffeq}
\left(1 + \frac{Np \lambda}{r}\right)V_{FB}(p) + \frac{Np(1-p)\lambda}{r}V_{FB}'(p) = p\lambda \left( \frac{\Pi}{r} + R \right)   
    \end{equation}

    But \eqref{eqn:time_diffeq} exactly matches the differential equation \eqref{eqn:coop_effort_diffeq}, which $V_{FB}$ was constructed to solve. Hence, $V_{FB}$ satisfies the differential equation on this region. It remains to verify that the efficient policy $k = 1$ is optimal in this region given $V_{FB}$. That is, 
    \begin{align*}
        &p \lambda R_w  + \frac{\lambda}{r} p \pi_w - p \frac{\lambda}{r}  V_{FB}(p) - \frac{\lambda}{r} p(1-p)V'_{FB}(p)  \\
        &=  p \lambda R_w  + \frac{\lambda}{r} p \pi_w - p\lambda\left(\frac{\Pi/N}{r} + R/N \right) +  V_{FB}(p)/N \\
        &=  p \lambda R_w  + \frac{\lambda}{r} p (\pi_w - \pi_s) - \frac{1}{N}p\lambda\left(\frac{\Pi - N\pi_s}{r} + R\right) +  V_{FB}(p)/N \\
        &=  \frac{p}{p_I}\pi_s - \frac{1}{N}\frac{p}{p_{FB}}\pi_s +  V_{FB}(p)/N \\
        &=  \frac{N-1}{N}\frac{p}{p_{FB}}\pi_s+  V_{FB}(p)/N \\
        &> \frac{N-1}{N}\pi_s + \pi_s/N = \pi_s
    \end{align*}
    where the fourth line applied the definitions of $p_{FB}$ and $p_I$, the fifth line used the fact that $p_{FB} = p_I$ when the efficiency condition holds, and the final line uses the fact that $p > p_{FB}$ and $V_{FB}(p) \ge \pi_s$. Hence, the term multiplying $k$ inside the maximization is always positive, and thus it is optimal to set $k = 1$ for $t < t_{FB}$. But that implies that $p(t_{FB}) = p_{FB}$ and so the best-response to the unobservable problem when all the other agents employ a cutoff at $t_{FB}$ is to do the same. Finally, note that in the unobservable problem, the incentive of the agent to exert effort is monotonic in the belief; hence, if other agents do not exert maximum effort on research before $t_{FB}$ the belief of the agent is strictly higher, and the best-response of the agent must still be to exert full effort prior to $t_{FB}$; thus, in any equilibrium, it must be the case that all agents are exerting full effort prior to $t_{FB}$. It now suffices to argue that experimentation must stop at $t_{FB}$ after all agents exerted full effort prior to $t_{FB}$. Suppose not, that some agent selects a strategy exerts effort past that time to $t_T > t_{FB}$. The strategy induces a Bellman value $u$, which is a viscosity solution to the HJB. Take a sequence of $t_n \to t_T$ such that $u_p$ and $u_t$ are defined on the sequence. Since the agent exerts positive effort on $t_n \to t_T$, it must be the case that 
    \[ p(t_n)\frac{\lambda}{r}(\pi_w - u(p(t_n),t_n) - (1 - p(t_n))u_p(p(t_n),t_n)) > \pi_s - p(t_n)\lambda R_w \]
    \[ p(t_n)\frac{\lambda}{r}(\pi_s - u(p(t_n),t_n) - (1 - p(t_n))u_p(p(t_n),t_n)) > \pi_s\left( 1 - \frac{p(t_n)}{p_{I}}\right)\]
    In the limit of the above expression, since experimentation stops at $t_T$, $u(p(t_T), t_T) = \pi_s$, so
    \[ - p_T\frac{\lambda}{r} (1 - p_T)\lim_{t_n \to t_T} u_p(p(t_n),t_n) > \pi_s\left( 1 - \frac{p_T}{p_{FB}}\right) \]
    Since the belief at $t_{FB}$ is $p_{FB}$, the limit belief $p_T$ must be less than $p_{FB} = p_I$. But this implies the right hand side is positive, and so the limit 
    \[\lim_{t_n \to t_T} u_p(p(t_n),t_n) < 0  \]
    But since $u(p_T, t_T) = \pi_s$, this implies that there exists some $p > p_T$ such that $u(p, t_T) < \pi_s$, a contradiction, since each agent can guarantee a flow value of $\pi_s$ by always playing the safe arm, since the efficiency condition implies $\pi_l + r R_l = \pi_s$ (and so losing grants the same flow value as the safe arm). Hence, all agents stop experimenting at $t_{FB}$, so the efficient solution is the unique equilibrium.
\hfill \qedsymbol

\subsection*{Proof of Theorem \ref*{thm:guarantee_2}}

First, I show that the HJB resulting from the sharing rule $(\alpha_I, \alpha_C)$ in the nonattributable breakthrough model results in an HJB for appropriate parameters in the baseline model. Consider an infinitesimal time increment, $[t, t + dt)$. The subjective probability of a breakthrough in this time interval is given by $p K \lambda \ dt$, so the instantaneous flow payoff term in the HJB is 
    \begin{align*} 
    &r \left[ (1-k_i)\pi_s + p K \lambda \left(\alpha_I \frac{k_i}{K} R + (1-\alpha_I) \frac{R}{N}\right) \right] dt  \\
    =&r \left[ (1-k_i)\pi_s + p \lambda \left(k_i \left( \alpha_I R + (1-\alpha_I)\frac{R}{N}\right) +  K_{-i} (1-\alpha_I)\frac{R}{N}\right) \right] dt \\
    =&r \left[ (1-k_i)\pi_s + p \lambda \left(k_i \tilde{R}_w +  K_{-i} \tilde{R}_l \right) \right] dt 
    \end{align*}
    where $\tilde{R}_w = \alpha_I R + (1-\alpha_I)\frac{R}{N}$ and $\tilde{R}_l = (1-\alpha_I)\frac{R}{N}$. The continuation terms in the payoff are then 
    \begin{align*}
    &p K \lambda \left[ \alpha_C \frac{k_i}{K} \Pi + (1-\alpha_C) \frac{\Pi}{N} - u(p) - (1-p)u'(p) \right] dt  \\
    =&p \lambda \left[  k_i \left( \alpha_C \Pi  + (1-\alpha_C) \frac{\Pi}{N} \right) + K_{-i} (1-\alpha_C) \frac{\Pi}{N} - (k_i + K_{-i})u(p) - (k_i + K_{-i})(1-p)u'(p) \right] dt\\
    =&p \lambda \left[  k_i \tilde{\pi}_w + K_{-i} \tilde{\pi_l} - (k_i + K_{-i})u(p) - (k_i + K_{-i})(1-p)u'(p) \right] dt
    \end{align*}
    where $ \tilde{\pi}_w = \alpha_C \left( \Pi \right) + (1-\alpha_C) \frac{ \Pi }{N} $ and $ \tilde{\pi}_l = (1-\alpha_C) \frac{ \Pi }{N} $. Putting these together, the HJB becomes 
    \begin{align*}
        u(p) = \max_{k_i}&\left[ (1-k_i)\pi_s + p \lambda \left(k_i \tilde{R}_w +  K_{-i} \tilde{R}_l \right) \right.\\
        &\left.+ p \frac{\lambda}{r} \left[  k_i \tilde{\pi}_w + K_{-i} \tilde{\pi_l} - (k_i + K_{-i})u(p) - (k_i + K_{-i})(1-p)u'(p) \right] \right]
    \end{align*}
    Rearranging, we get 
    \begin{align*}
        u(p) =& \pi_s + K_{-i}\left( p\lambda \tilde{R}_l + p\frac{\lambda}{r}(\tilde{\pi}_l - u(p) - (1-p)u'(p))\right) \\
        &+  \max_{k_i}\left[k_i\left( p \frac{\lambda}{r} \left[  \tilde{\pi}_w - u(p) - (1-p)u'(p) \right]  -\left(\pi_s - p \lambda \tilde{R}_w \right) \right)\right]
    \end{align*}
    which matches the HJB from the baseline model in \eqrefb{eqn:hjb_baseline}, with payoff parameters $\tilde{R}_w, \tilde{R}_l, \tilde{p}_w, \tilde{p}_l$. Further, note that 
    \[ G(c^K_{\alpha_I, \alpha_C}) =  r\left(1-\alpha_I\right)\frac{R}{N}  + \left(1-\alpha_C\right)\frac{\Pi}{N} = r\tilde{R}_l + \tilde{\pi}_l \]
    Then 
    \[ G(c^K_{\alpha_I, \alpha_C}) = \pi_s \iff \frac{\pi_s - \tilde{\pi}_l}{r} = \tilde{R}_l \]
    So the result follows from Theorem \ref*{thm:noncoop_efficient}. \hfill \qedsymbol

\subsection*{Proof of Proposition \ref*{prop:hetero_cooperative}}
    Note that since the total instantaneous and continuation payoffs are fixed at $R$ and $\Pi$ respectively, the total payoff to all agents after any outcome is 
\[ \int^\tau re^{-rt}\pi_s(M - K) \ dt + re^{-r\tau}R + e^{-r\tau}\Pi. \]
Note that this is identical to the derivation of equation \eqrefb{eqn:hjb_cooperative}, but with $M$ instead of $N$. Hence, the rest of the result follows from the proof of Theorem \ref*{thm:cooperative} replacing $N$ with $M$. Specifically, the HJB for the total value function is 
\begin{align*}
    V^H(p) &= M\pi_s + \max_K \left[K\left(p\frac{\lambda}{r}\left(\Pi - V^H(p) - (1-p)(V^H)'(p) \right) - (\pi_s - p \lambda R) \right) \right].
\end{align*}
and the value function is 
\begin{equation}\label{eqn:hetero_value_function_fb}
V^H_{FB}(p) = \begin{cases}
 M\pi_s & p < p_{FB} \\
 Mp \frac{\lambda \left( \frac{1}{r}\Pi + R \right)}{1 + \frac{M\lambda}{r}} + M\frac{\pi_s(1 - p^H_{FB})\frac{M\lambda}{r}}{\left(1 + \frac{M\lambda}{r}\right) \phi(p_{FB})}\phi(p) & p \ge p_{FB}.
\end{cases}
\end{equation}
\hfill \qedsymbol    

\subsection*{Proof of Proposition \ref*{prop:hetero_noncoop}}
I recreate the similar steps as in the proof of Theorem \ref*{thm:noncoop_efficient}, and prove some lemmas characterizing analogues of $p_\times$ and $p_I$ for the heterogenous resource setting. 
Taking the same approach to writing out the individual HJB for the best-response problem of agent $i$,
    \begin{align*}
    u_i(p) &= \mu_i \pi_s + K_{-i}(p)\left(p\lambda R_{l,i} + b_{I,i}(p,u,u') - p\frac{\lambda}{r}(\pi_{w,i} - \pi_{l,i}) \right) +  \max_{k_i} \left[k_i \left( b_{I,i}(p, u, u') - c_{I,i}(p) \right)  \right]
\end{align*}
where 
\[ b_{I,i}(p, u, u') = p\frac{\lambda}{r}(\pi_{w,i} - u_i(p) - (1-p)u'_i(p)) \]
and 
\[     c_{I,i}(p) = \pi_s - p \lambda R_{w,i}. \]
Following the same characterization trick as in Lemma \ref*{lem:baseline_br}, the optimal best-response can be characterized as follows:

\begin{lemma}\label{lem:hetero_br}
The best-response of agent $i$ takes the form
    \[k_i = \begin{cases}
    0 & u(p) < \mu_i\pi_s + K_{-i}(p)\left(\pi_s - p\lambda (R_{w,i} - R_{l,i}) - \frac{p\lambda}{r}\left( \pi_{w,i} - \pi_{l,i} \right) \right) \\
    \in [0,\mu_i] & u(p) = \mu_i\pi_s + K_{-i}(p)\left(\pi_s - p\lambda (R_{w,i} - R_{l,i})- \frac{p\lambda}{r}\left( \pi_{w,i} - \pi_{l,i} \right) \right)  \\
    \mu_i & u(p) > \mu_i\pi_s + K_{-i}(p)\left(\pi_s - p\lambda (R_{w,i} - R_{l,i}) - \frac{p\lambda}{r}\left( \pi_{w,i} - \pi_{l,i} \right) \right)  \\
    \end{cases}.\]
\end{lemma}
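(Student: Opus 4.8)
The plan is to mirror the proof of Lemma \ref{lem:baseline_br} almost verbatim, since the heterogeneous best-response HJB differs from the baseline one only by the replacements $\pi_s \mapsto \mu_i\pi_s$ in the constant term, the action set $[0,1]\mapsto[0,\mu_i]$, and the addition of $i$-subscripts on the payoff parameters. The essential device — the one originally due to \cite{bh1999} and already invoked in Lemma \ref{lem:baseline_br} — is to eliminate the policy's dependence on the derivative $u_i'$ (hidden inside $b_{I,i}$) by exploiting the fact that the value function $u_i$ itself satisfies the HJB, which lets me solve for $b_{I,i}$ in terms of $u_i$ and the fixed data.

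Concretely, I would first add and subtract $K_{-i}(p)(b_{I,i}-c_{I,i})$ in the HJB to rewrite it as
\[
u_i(p) = \mu_i\pi_s + K_{-i}(p)\left(b_{I,i} - c_{I,i}\right) + K_{-i}(p)\left(p\lambda R_{l,i} + c_{I,i} - \frac{p\lambda}{r}(\pi_{w,i}-\pi_{l,i})\right) + \max_{k_i}\left[k_i\left(b_{I,i} - c_{I,i}\right)\right].
\]
Substituting $c_{I,i}=\pi_s - p\lambda R_{w,i}$ into the second bracket collapses it to $\pi_s - p\lambda(R_{w,i}-R_{l,i})$, so that the terms not involving $(b_{I,i}-c_{I,i})$ sum to exactly the threshold $T := \mu_i\pi_s + K_{-i}(p)\left(\pi_s - p\lambda(R_{w,i}-R_{l,i}) - \frac{p\lambda}{r}(\pi_{w,i}-\pi_{l,i})\right)$ appearing in the statement. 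Since $k_i\mapsto k_i(b_{I,i}-c_{I,i})$ is linear on $[0,\mu_i]$, the maximizer $k_i^\ast$ is bang-bang in the sign of $b_{I,i}-c_{I,i}$ ($\mu_i$ if positive, $0$ if negative, anything in between if zero). Plugging each case back in yields the identity $u_i(p) = T + (K_{-i}(p)+k_i^\ast)(b_{I,i}-c_{I,i})$; because $K_{-i}(p)\ge 0$ and $\mu_i>0$, this gives the three implications $k_i^\ast=\mu_i\Rightarrow u_i(p)\ge T$, $k_i^\ast=0\Rightarrow u_i(p)\le T$, and $k_i^\ast\in(0,\mu_i)\Rightarrow u_i(p)=T$. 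Taking contrapositives inverts these into the stated policy: $u_i(p)>T$ forces $k_i=\mu_i$, $u_i(p)<T$ forces $k_i=0$, and $u_i(p)=T$ leaves $k_i$ free in $[0,\mu_i]$.

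I do not expect a genuine obstacle, as the argument is a direct transcription; the only points requiring care are that $\mu_i>0$ (so the coefficient $K_{-i}(p)+\mu_i$ on the full-effort branch is strictly positive) and that $b_{I,i}$ involves $u_i'$, which need not exist at kinks. The latter is handled exactly as in the baseline case by the viscosity-solution framework of Lemma \ref{lem:viscosity} (applied to the heterogeneous problem), so the characterization is understood to hold wherever $u_i$ is differentiable, with the usual viscosity adjustments at kinks.
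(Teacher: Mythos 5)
Your proposal is correct and follows essentially the same route as the paper, which itself proves Lemma \ref{lem:hetero_br} by simply invoking "the same characterization trick as in Lemma \ref{lem:baseline_br}" — namely, the add-and-subtract of $K_{-i}(p)\left(b_{I,i}-c_{I,i}\right)$, the collapse of the residual bracket to the stated threshold via $c_{I,i}=\pi_s - p\lambda R_{w,i}$, and the sign/contrapositive case analysis. Your extra remarks on $\mu_i>0$ and on handling nondifferentiability via the viscosity framework are consistent with how the paper treats the baseline case.
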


The level curves can be analogously defined: 
\[ \mathcal{D}_{K_{-i}, i} = \left\{(p,u) \in [0,1]\times \mathbb{R}_+ \left\vert \, u = \mu_i\pi_s + K_{-i}(p)\left(\pi_s - p\lambda (R_{w,i} - R_{l,i}) - \frac{p\lambda}{r}\left( \pi_{w,i} - \pi_{l,i} \right) \right)\right.\right\}. \] 
This implies that the threshold where the level curves intersect is given by 
\[ p^H_{\times, i} = \frac{\pi_s/\lambda}{R_{w,i} - R_{l,i} + \frac{1}{r}\left(\pi_{w,i} - \pi_{l,i} \right)}. \]
Using the normalization condition, $R_{w,i} = R - \sum_{j \neq i} R_{l,j}$ and  $\pi_{w,i} = \Pi - \sum_{j \neq i} \pi_{l,j}$, so 
\[ p^H_{\times, i} = \frac{\pi_s/\lambda}{R - \sum_j R_{l,j} + \frac{1}{r}\left(\Pi - \sum_j \pi_{l,j} \right)}. \]
Note that this implies that $p^H_{\times, i} = p^H_{\times, j}$, so I drop the $i$ subscript. With some more algebra, 
\[ p^H_{\times} = \frac{\pi_s/\lambda}{R + \frac{1}{r}(\Pi - M\pi_s) + \sum_i \left(\frac{1}{r}\left(\mu_i \pi_s -  \pi_{l,i}\right) - R_{l,i}\right)}. \]
Define 
\begin{equation}\label{eqn:def_delta}
\delta_i = \frac{1}{r}\left(\mu_i \pi_s -  \pi_{l,i}\right) - R_{l,i}. 
\end{equation}
Then $p^H_\times$ can be rewritten as 
\begin{equation}\label{eqn:ptimes_hetero}
    p^H_{\times} = \frac{\pi_s/\lambda}{R + \frac{1}{r}(\Pi - M\pi_s) + \sum_i \delta_i }.
\end{equation} 

\begin{lemma}\label{lem:hetero_ptimes_bound}
    Suppose that $p^H_\times > p^H_{FB}$. Then the efficient solution cannot be an MPE.
\end{lemma}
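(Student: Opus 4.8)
The plan is to adapt the geometric argument used in the proof of Theorem \ref{thm:noncoop_efficient} (the case $p_\times > p_{FB}$) to the heterogeneous-resource best-response characterization in Lemma \ref{lem:hetero_br}. Suppose, for contradiction, that the efficient solution is an MPE. By Proposition \ref{prop:hetero_cooperative}, this solution has every agent $j$ play the cutoff strategy $k_j(p) = \mu_j$ for $p > p^H_{FB}$ and $k_j(p) = 0$ for $p \le p^H_{FB}$. Fix an agent $i$ and consider its best-response problem when all agents $j \neq i$ use this common cutoff; then $K_{-i}(p) = \sum_{j\neq i}\mu_j = M - \mu_i$ for $p > p^H_{FB}$ and $K_{-i}(p) = 0$ for $p \le p^H_{FB}$.

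First I would pin down the value of agent $i$'s value function $u_i$ at the threshold. For any belief $p \le p^H_{FB}$, no agent experiments, so the belief is frozen and agent $i$ collects the discounted safe flow $\mu_i \pi_s$ forever; hence $u_i(p) = \mu_i \pi_s$ on $[0, p^H_{FB}]$, and by value matching (continuity of the viscosity solution) $u_i(p^H_{FB}) = \mu_i \pi_s$. Next I would invoke the level-curve geometry. From the definition of $\mathcal{D}_{K_{-i},i}$, every such curve passes through the common point $(p^H_\times, \mu_i\pi_s)$, since the bracketed coefficient multiplying $K_{-i}$ vanishes precisely at $p = p^H_\times$; and for $K_{-i} > 0$ the curve is strictly downward-sloping, because its slope $-K_{-i}\lambda\big(R_{w,i}-R_{l,i}+\tfrac1r(\pi_{w,i}-\pi_{l,i})\big)$ is negative whenever the denominator defining $p^H_\times$ is positive. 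Consequently, for every $p < p^H_\times$ the $u$-height of $\mathcal{D}_{M-\mu_i,i}$ exceeds $\mu_i\pi_s$; in particular, since the hypothesis $p^H_\times > p^H_{FB}$ gives $p^H_{FB} < p^H_\times$, the point $(p^H_{FB}, \mu_i\pi_s)$ lies strictly below the curve $\mathcal{D}_{M-\mu_i,i}$.

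Finally I would derive the contradiction. At $p^H_{FB}$ we have $u_i(p^H_{FB}) = \mu_i\pi_s$, which is strictly below the height of $\mathcal{D}_{M-\mu_i,i}$ there, and because both $u_i$ and the level curve are continuous, this strict inequality persists on a right-neighborhood $(p^H_{FB}, p^H_{FB}+\varepsilon)$. On that neighborhood $K_{-i}(p) = M - \mu_i$, so Lemma \ref{lem:hetero_br} forces the best response $k_i(p) = 0$, contradicting the requirement $k_i(p) = \mu_i > 0$ of the efficient solution. Hence the efficient cutoff cannot be a best response for agent $i$, so it is not an MPE.

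The step I expect to be the main obstacle is this last one: ensuring that the strict inequality at the single boundary point $p^H_{FB}$ genuinely propagates to a right-neighborhood rather than being a knife-edge coincidence, which is precisely why I emphasize continuity of the viscosity value function $u_i$ together with the strict downward slope of $\mathcal{D}_{M-\mu_i,i}$ for $K_{-i} > 0$. Along the way one must also confirm $M - \mu_i > 0$ (so the relevant curve is non-degenerate) and verify the sign of the slope coefficient, both of which follow from the standing assumptions guaranteeing that $p^H_\times$ in \eqref{eqn:ptimes_hetero} is a well-defined positive belief.
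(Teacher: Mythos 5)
Your proposal is correct and follows essentially the same route as the paper's proof: both fix agent $i$'s best-response problem against the common cutoff at $p^H_{FB}$, use the boundary condition $u_i(p^H_{FB}) = \mu_i\pi_s$, observe that this point lies strictly below the downward-sloping level curve $\mathcal{D}_{M-\mu_i,i}$ through $(p^H_\times,\mu_i\pi_s)$ when $p^H_\times > p^H_{FB}$, and invoke continuity of the value function to propagate the strict inequality to a right-neighborhood where Lemma \ref{lem:hetero_br} forces $k_i = 0$, contradicting $k_i = \mu_i$. The extra care you take with the sign of the slope and the non-degeneracy of $M-\mu_i$ is implicit in the paper but does not change the argument.
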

\begin{proof}
    Consider the best-response problem of agent $i$, when other agents use cutoff strategies with cutoff at $p^H_{FB}$. Suppose, for sake of contradiction, that a cutoff strategy at $p^H_{FB}$ is a best response. By the boundary condition of the HJB, the value function at the cutoff must be $\mu_i \pi_s$, so $u_i\left(p^H_{FB}\right) = \mu_i \pi_s$. But the point $(p^H_{FB}, \mu_i \pi_s)$ lies strictly below the half-plane bounded above by $\mathcal{D}_{M-\mu_i, i}$, since the curve $\mathcal{D}_{M- \mu_i, i}$ is a line with negative slope passing through $(p^H_\times, \mu_i \pi_s)$ and $p^H_\times > p^H_{FB}$. By the contradiction hypothesis and continuity of any HJB solution, there exists some interval $(p^H_{FB}, p^H_{FB} + \epsilon)$ such that $\epsilon > 0$ and $i$ exerts full effort $k_i = \mu_i$ on this interval, but the value function $u_i$ lies below $\mathcal{D}_{M - \mu_i, i}$. But this is a contradiction of Lemma \ref{lem:hetero_br}. Hence a cutoff at $p^H_{FB}$ cannot be a best response, so the efficient solution is not an MPE. 
\end{proof}

Now, we generalize the analogous individual experimentation threshold as $p_I$, (i.e., where experimentation would stop if $K_{-i} = 0$ everywhere). The analogue is given by 
\begin{align}
p^H_{I, i} &= \frac{\pi_s / \lambda }{R_{w,i} + \frac{1}{r}(\pi_{w,i} - \mu_i\pi_s)} \notag \\ 
&= \frac{\pi_s / \lambda }{R - \sum_{j \neq i} R_{l,j} + \frac{1}{r}\left(\Pi - \sum_{j \neq i} \pi_{l,j} - \mu_i\pi_s\right)}\notag \\
&= \frac{\pi_s / \lambda }{R  + \frac{1}{r}\left(\Pi - M\pi_s\right) + \sum_{j \neq i}\left(\frac{1}{r}\left(\mu_j \pi_s - \pi_{l,j} \right) - R_{l,j} \right)}\notag \\
&= \frac{\pi_s / \lambda }{R  + \frac{1}{r}\left(\Pi - M\pi_s\right) + \sum_{j \neq i}\delta_j}. \label{eqn:pI_hetero}
\end{align}

\begin{lemma}\label{lem:hetero_pI_bound}
    Suppose that all agents $j \neq i$ have stopped experimenting at $p_T > p^H_{I,i}$ and at all beliefs below $p_T$. Then agent $i$ cannot stop experimenting at $p_T$. 
\end{lemma}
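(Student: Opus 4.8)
The plan is to mirror the argument of Lemma \ref*{lem:stopping_upper_bound} from the baseline model, exploiting the fact that once every agent $j \neq i$ has ceased experimentation, agent $i$'s continuation problem decouples into a single-agent optimal control problem. The central idea is that $p^H_{I,i}$ is exactly the belief at which a \emph{lone} agent $i$ would optimally stop, so as long as the belief sits above $p^H_{I,i}$ and nobody else is experimenting, agent $i$ strictly prefers to keep going.

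First I would observe that, by hypothesis, $K_{-i}(p) = \sum_{j \neq i}\sigma_j(p) = 0$ for every $p \le p_T$. Because the belief is weakly decreasing along any path and can only fall below a given level after first reaching it, agent $i$'s optimal behavior at any belief $p \le p_T$ depends only on the values of $K_{-i}$ on $[0,p]$, all of which vanish; it is insensitive to what happens above $p_T$. Hence, for beliefs at or below $p_T$, agent $i$'s best-response problem reduces to the heterogeneous individual HJB with $K_{-i}\equiv 0$, i.e. $u_i(p) = \mu_i\pi_s + \max_{k_i}\left[k_i\left(b_{I,i}(p,u,u') - c_{I,i}(p)\right)\right]$, which is a single-agent MDP.

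Second I would identify the optimum of this reduced problem. Setting $K_{-i}=0$ gives precisely the heterogeneous cooperative problem of Proposition \ref*{prop:hetero_cooperative} specialized to one agent of resource $\mu_i$, total instantaneous reward $R_{w,i}$, and total continuation reward $\pi_{w,i}$. Substituting these into \eqref{eqn:het_pFB} returns exactly $p^H_{I,i}$ as defined in \eqref{eqn:pI_hetero}, so the single-agent optimum is the cutoff strategy exerting full effort $k_i=\mu_i$ for $p>p^H_{I,i}$ and stopping below. Since $p_T > p^H_{I,i}$, this best response has agent $i$ exerting full effort at every belief in $(p^H_{I,i},p_T]$; in particular agent $i$ strictly prefers to continue at $p_T$, so stopping at $p_T$ cannot be a best response. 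Equivalently, one may argue directly: were agent $i$ to stop at $p_T$, then $u_i(p)=\mu_i\pi_s$ and $u_i'(p)=0$ on a left-neighborhood of $p_T$, and plugging this into the reduced HJB yields $0=\max_{k_i}\left[k_i\,(p-p^H_{I,i})\lambda\bigl(R_{w,i}+\tfrac1r(\pi_{w,i}-\mu_i\pi_s)\bigr)\right]$, whose right-hand side is strictly positive for $p\in(p^H_{I,i},p_T)$ (using the maintained assumption $\pi_{w,i}>\mu_i\pi_s$), a contradiction.

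The routine parts are the algebraic identification of $p^H_{I,i}$ with the single-agent threshold and the sign computation inside the maximization, both of which are immediate from the formulas already derived. The main point requiring care is the decoupling step: I must justify that agent $i$'s optimal action at a belief $p\le p_T$ is governed solely by the (vanishing) effort of the others at weakly lower beliefs, which follows because the belief process is monotone and absorbing downward. No differentiability or weak-monotonicity assumptions on agent $i$'s strategy are needed, since the candidate stopping value is constant and hence smooth on the relevant region.
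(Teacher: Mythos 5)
Your proposal is correct and, in its ``argue directly'' form, is essentially identical to the paper's proof: the paper likewise assumes agent $i$ stops at $p_T$, deduces $u_i \equiv \mu_i\pi_s$ with zero derivative below $p_T$, and derives the contradiction $0 = \max_{k_i}\left[k_i\left(p\frac{\pi_s}{p^H_{I,i}} - \pi_s\right)\right] > 0$ for $p \in (p^H_{I,i}, p_T)$, which matches your sign computation exactly. Your preliminary framing via the decoupled single-agent cutoff at $p^H_{I,i}$ is just the paper's own motivation for defining $p^H_{I,i}$ and does not change the substance of the argument.
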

\begin{proof}
Suppose, for sake of contradiction, that agent $i$ stops experimenting at $p_T$. Then the value function $u_i(p) = \mu_i \pi_s$ for $p < p_T$ by the boundary condition. By the boundary conditions and the HJB, at any point $p \in [p^H_{I,i}, p_T]$, the HJB indicates that
    \begin{align*} 
    \mu_i\pi_s = &\mu_i\pi_s + \max_{k_i} \left[k_i \left( p \frac{\lambda}{r}(\pi_{w,i} - \mu_i\pi_s) - (\pi_s - p \lambda R_{w,i}) \right)  \right] \\
    0 =&  \max_{k_i} \left[k_i \left( p\frac{\pi_s}{p^H_{I,i}}  - \pi_s \right)  \right].
    \end{align*}
But this is a contradiction; since $p > p^H_{I,i}$, the maximal $k_i$ is $\mu_i$, and hence the RHS here cannot be zero, but $p_T > p$. Hence, agent $i$ cannot stop experimentation at $p_T$.
\end{proof}

Further, I extend Lemma \ref*{lem:stopping_lower_bound} to this setting. 
\begin{lemma}\label{lem:hetero_stopping_lower_bound}
Suppose $p^H_\times \le p^H_{I,i}$. Then no agent can experiment at any belief at or below $p^H_\times$. 
\end{lemma}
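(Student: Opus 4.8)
The plan is to mirror the contradiction argument used for the homogeneous lower bound in the proof of Lemma~\ref{lem:stopping_lower_bound}, but tracking the agent-specific individual thresholds $p^H_{I,j}$ against the \emph{common} crossing belief $p^H_\times$ from \eqref{eqn:ptimes_hetero}. Suppose, toward a contradiction, that in some MPE at least one agent exerts positive effort at a belief at or below $p^H_\times$. Let $p_T \le p^H_\times$ be the infimum of beliefs at which any agent experiments, so that below $p_T$ no agent experiments, and fix an agent $j$ who exerts positive effort at beliefs arbitrarily close to $p_T$ from above. Because all experimentation ceases below $p_T$, the value-matching/boundary condition pins down $u_j(p_T) = \mu_j \pi_s$.

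Next I would extract the sign of the right derivative of $u_j$ at $p_T$. Since $u_j$ is the viscosity solution of agent $j$'s best-response HJB for the prevailing $K_{-j}(\cdot)$, I take a sequence $p_n \downarrow p_T$ along which $u_j$ is differentiable. Optimality of positive effort gives $p_n \tfrac{\lambda}{r}\big(\pi_{w,j} - u_j(p_n) - (1-p_n)u_j'(p_n)\big) > \pi_s - p_n \lambda R_{w,j}$, which, after substituting the definition of $p^H_{I,j}$ from \eqref{eqn:pI_hetero}, rearranges to
\[ p_n \tfrac{\lambda}{r}\big(\mu_j\pi_s - u_j(p_n) - (1-p_n)u_j'(p_n)\big) > \pi_s\Big(1 - \tfrac{p_n}{p^H_{I,j}}\Big). \]
Passing to the limit and using $u_j(p_T) = \mu_j\pi_s$ yields $-p_T\tfrac{\lambda}{r}(1-p_T)\,u_{j,+}'(p_T) \ge \pi_s\big(1 - p_T/p^H_{I,j}\big)$, where $u_{j,+}'$ is the right derivative (needed since $u_j$ may kink at the stopping belief).

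The hypothesis enters here. By \eqref{eqn:def_delta}, $p^H_\times \le p^H_{I,i}$ is exactly $\delta_i \ge 0$, and I would invoke it for the experimenting agent $j$ to conclude $p_T \le p^H_\times \le p^H_{I,j}$, making the right-hand side above strictly positive and hence $u_{j,+}'(p_T) < 0$. Combined with $u_j(p_T) = \mu_j\pi_s$, this forces $u_j(p) < \mu_j\pi_s$ for beliefs $p$ slightly above $p_T$ (and still $\le p^H_\times$). But every level curve $\mathcal{D}_{K_{-j},j}$ passes through $(p^H_\times, \mu_j\pi_s)$ with nonpositive slope, so for $p \le p^H_\times$ each lies weakly above $\mu_j\pi_s > u_j(p)$; thus $(p, u_j(p))$ sits strictly below every $\mathcal{D}_{K_{-j},j}$ with $K_{-j} \in [0, M-\mu_j]$, and Lemma~\ref{lem:hetero_br} forces $k_j = 0$ there, contradicting that $j$ experiments arbitrarily close to $p_T$.

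The main obstacle is the step that is trivial in the symmetric model but genuinely new here: the contradiction runs through the \emph{experimenting} agent's own threshold $p^H_{I,j}$, whereas the hypothesis is phrased for the index $i$. Since $p^H_\times \le p^H_{I,j} \iff \delta_j \ge 0$, and an agent with $\delta_j < 0$ would in fact find it optimal to experiment below $p^H_\times$ whenever its rivals have stopped, the conclusion for \emph{all} agents requires the hypothesis to hold for the relevant agent (equivalently $\delta_j \ge 0$ for every $j$ that could be the lowest experimenter); the single-index phrasing should be read in this sense. Secondary care is needed for the boundary case $p_T = p^H_\times$, where the induced dip occurs just to the right of $p^H_\times$ and the level-curve comparison must be argued more delicately, and for the one-sided derivatives arising from possible kinks of the viscosity solution at the stopping belief; both I would handle exactly as in the proof of Lemma~\ref{lem:stopping_lower_bound}.
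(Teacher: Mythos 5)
Your proposal is correct and follows essentially the same argument as the paper: assume an agent experiments down to some $p_T$ at or below $p^H_\times$, use the boundary condition $u_j(p_T)=\mu_j\pi_s$, extract a negative right derivative from the HJB optimality condition along a sequence of differentiability points, and conclude that the value function dips strictly below every level curve $\mathcal{D}_{K_{-j},j}$ (all of which pass through $(p^H_\times,\mu_j\pi_s)$ with nonpositive slope), contradicting the best-response characterization. Your observation that the hypothesis must be read as applying to the agent who actually experiments lowest (equivalently $\delta_j\ge 0$ for that agent) is a fair clarification of the lemma's single-index phrasing and is consistent with how the lemma is invoked in the proof of Proposition \ref{prop:hetero_noncoop}, where $\delta_j=0$ for all $j$.
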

\begin{proof}
Suppose agent $i$ was experimenting at beliefs down to $p_T < p^H_\times$. Let $u_i$ be the value function of that agent. Since experimentation stops at $p_T$, $u_i(p_T) = \mu_i\pi_s$. Since $u_i$ must be a viscosity solution to the best-response HJB for some $K_{-i}(p)$, we can take a sequence $p_n \to p_T$ such that $u'_i(p_n)$ is well defined, $p_n > p_T$. Since the agent was experimenting, $k_i(p_n) > 0$, so it must have been the case that 
\[ p_n\frac{\lambda}{r} \left(\pi_{w,i} - u_i(p_n) - (1-p_n)u'_i(p_n)\right) > \pi_s - p_n \lambda R_{w,i}\]
\[ p_n \left(\lambda R_{w,i} + \frac{\lambda}{r} \left(\pi_{w,i} - u_i(p_n) - (1-p_n)u'_i(p_n)\right)\right) > \pi_s\]
\[ p_n \left(\frac{\pi_s}{p^H_{I,i}}+ \frac{\lambda}{r} \left(\mu_i\pi_s - u_i(p_n) - (1-p_n)u'_i(p_n)\right)\right) > \pi_s\]
\[ p_n \left(\frac{\lambda}{r} \left(\mu_i\pi_s - u_i(p_n) - (1-p_n)u'_i(p_n)\right)\right) > \pi_s\left(1 - \frac{p_n}{p^H_{I,i}} \right)\]
Taking the limit as $p_n \to p_T$, we get 
\[ - \frac{\lambda}{r} p_T(1-p_T)u'_{i,+}(p_T) > \pi_s\left(1 - \frac{p_T}{p^H_{I,i}} \right)\]
where $u'_{i,+}$ denotes the right derivative, since $u$ need not be differentiable at $p_T$. Once again, the right hand side is positive because $p_T < p^H_\times \le p^H_{I,i}$. Therefore, it must be the case that $u'_+(p_T) < 0$. Hence, there must be some point $p \in [p_T, p^H_\times]$ such that $u_i(p) < \mu_i \pi_s$. But this implies that the point $p, u_i(p)$ lies below $\mathcal{D}_{K_{-i}, i}$ for all $K_{-i} \in [0, N-1]$ (since every $\mathcal{D}_{\cdot, i}$ passes through $(p^H_\times, \mu_i\pi_s)$ with nonpositive slope) and an equilibrium exists where agent $i$ is exerting a positive amount of effort on research at that point, a contradiction of Lemma \ref{lem:hetero_br}. 
\end{proof}

Finally, I prove Proposition \ref*{prop:hetero_noncoop} using the lemmas proved above. First, I show that the efficient solution is an MPE of the noncooperative game if $\frac{\mu_i \pi_s -  \pi_{l,i}}{r} = R_{l,i}$ for all $i$. Recall by equation \eqref{eqn:def_delta}, this is by definition equivalent to $\delta_i = 0$ for all $i$. Examining the definitions of $p^H_\times$ and $p^H_{I,i}$ in equations \eqref{eqn:ptimes_hetero} and \eqref{eqn:pI_hetero}, this implies that $p^H_{FB} = p^H_\times = p^H_{I,i}$ for all $i$. So it suffices to take a verification approach; I construct the value function and check that it is smooth, and increasing above $p^H_{FB}$. Above $p^H_{FB}$, the differential equation implied by the HJB is 
    \begin{equation*}
         u_i(p)+ \frac{M\lambda}{r}p u_i(p) + \frac{M\lambda}{r} p(1-p)u_i'(p) = p\lambda\left( \mu_i R_{w,i} + (M - \mu_i)R_{l,i} + \frac{1}{r}(\mu_i \pi_{w,i} + (M - \mu_i)\pi_{l,i}) \right) 
    \end{equation*}
    The right hand side can be rewritten:
    {\small \begin{align}
         &= p\lambda\left( \mu_i\left(R - \sum_{j \neq i} R_{l, j} \right) + (M - \mu_i)R_{l,i} + \frac{1}{r}\left(\mu_i \left(\Pi - \sum_{j \neq i} \pi_{l, j} \right) + (M - \mu_i)\pi_{l,i}\right) \right) \notag \\
         &= p\lambda\left( \mu_i R - \mu_i \sum_{j} R_{l, j} + MR_{l,i} + \frac{1}{r}\left(\mu_i \Pi - \mu_i \sum_{j} \pi_{l, j} + M\pi_{l,i}\right) \right) \notag \\
         &= p\lambda\left( \mu_i \left( R + \frac{1}{r}\Pi\right)  - \mu_i \sum_{j} \left( R_{l, j} + \frac{1}{r}\pi_{l,j}\right) + M\left( R_{l,i} + \frac{1}{r}\pi_{l,i} \right) \right) \notag \\
         &= p\lambda\left( \mu_i \left( R + \frac{1}{r}\Pi\right)  - \mu_i \sum_{j} \left( \frac{1}{r}\mu_j \pi_s\right) + M\left( \frac{1}{r}\mu_i \pi_s \right) \right) \notag \\
         &= p\lambda\left( \mu_i \left( R + \frac{1}{r}\Pi\right)  - \frac{1}{r}\mu_i M \pi_s + \frac{1}{r}M \mu_i \pi_s \right) \notag \\
         &= p\lambda \mu_i \left( R + \frac{1}{r}\Pi\right). \notag 
    \end{align} }
    where the third step uses the assumption that $\frac{\mu_i \pi_s -  \pi_{l,i}}{r} = R_{l,i}$ for all $i$.
    So the differential equation becomes 
    \begin{equation}\label{eqn:hetero_diffeq}
        u_i(p)+ \frac{M\lambda}{r}p u_i(p) + \frac{M\lambda}{r} p(1-p)u_i'(p) =p\lambda \mu_i \left( R + \frac{1}{r}\Pi\right).
    \end{equation}
    Take the functional analogous to the Theorem \ref*{thm:cooperative} solution, but replacing $N$ with $M$:
\[ \phi_M(p) = (1-p)\Omega(p)^{\frac{r}{M\lambda}}. \]
Note that 
\[ \phi_M(p) + \frac{M\lambda}{r}p\phi_M(p) + \frac{M\lambda}{r}p(1-p)\phi_M'(p) = 0. \]\
So solutions to the differential equation \eqref{eqn:hetero_diffeq} have the form:
\[ u_i(p) = p\lambda\frac{\mu_i \left( R + \frac{1}{r}\Pi\right)}{1 + M\lambda/r} + C\phi_M(p)\]
Value matching at $p^H_{FB} = \mu_i \pi_s$ pins down the constant $C$: 
\begin{align}
    C &= \mu_i \frac{\pi_s}{\phi_M(p^H_{FB})} - p^H_{FB}\lambda\frac{\mu_i\left(R + \frac{\Pi}{r} \right) }{(1 + M\lambda/r)\phi_M(p^H_{FB})} \notag \\
    &= \mu_i\frac{\pi_s(1 + M\lambda/r) - p^H_{FB}\lambda\left(R + \frac{\Pi}{r} \right) }{(1 + M\lambda/r)\phi_M(p^H_{FB})} \notag \\
    &= \mu_i\frac{\pi_s(1 + M\lambda/r) - \pi_s\left(1 + \frac{M \lambda p^H_{FB}}{r} \right) }{(1 + M\lambda/r)\phi_M(p^H_{FB})}\notag \\
    &= \mu_i\frac{\pi_s(M\lambda/r)(1 - p^H_{FB}) }{(1 + M\lambda/r)\phi_M(p^H_{FB})}
\end{align} 
Then the value function becomes exactly:
\[ u_i(p) = \frac{\mu_i}{M}V^H_{FB}(p) \]
where $V^H_{FB}$ was given from \eqref{eqn:hetero_value_function_fb}. The standard verification argument shows that indeed, when $\frac{\mu_i \pi_s -  \pi_{l,i}}{r} = R_{l,i}$ for all $i$, the efficient solution is an MPE.

Now, I argue that if the efficient solution is an MPE, $\frac{\mu_i \pi_s -  \pi_{l,i}}{r} = R_{l,i}$ for all $i$. Recall the definition \eqref{eqn:def_delta}: 
\[ \delta_i = \frac{\mu_i \pi_s -  \pi_{l,i}}{r} - R_{l,i}\]
From equation \eqref{eqn:ptimes_hetero} and Lemma \ref{lem:hetero_ptimes_bound}, if $\sum_i \delta_i < 0$, the efficient solution cannot be an MPE. Hence, if the efficient solution is an MPE, it must be the case that 
\[ \sum_i \delta_i \ge 0. \]
Further, by Lemma \ref{lem:hetero_pI_bound}, if $p^H_{I,i} < p^H_{FB}$ for some $i$, the efficient solution cannot be an MPE because agent $i$ cannot stop experimenting at $p^H_{FB}$. So the efficient solution being an MPE implies that $p^H_{I,i} \ge p^H_{FB}$. Examining equation \eqref{eqn:pI_hetero}, this implies that 
\[\sum_{j \neq i} \delta_j \le 0. \]
for all $i$. Since the sum of all $\delta_i$'s must be nonnegative, this implies that each $\delta_i$ must be nonnegative. But then the only way the above inequality can be satisfied is if $\delta_i = 0$ for all $i$. Hence, if the efficient solution is an MPE, $\delta_i = 0$ for all $i$, so $\frac{\mu_i \pi_s -  \pi_{l,i}}{r} = R_{l,i}$ for all $i$. 

Finally, I show that if the condition is satisfied, the efficient MPE is the unique solution. When the condition is satisfied, $p^H_{FB} = p^H_\times = p^H_{I,i}$ for all $i$, so by Lemma \ref{lem:hetero_stopping_lower_bound}, all agents must stop experimenting at or above $p^H_{FB}$. To show that no agent drops effort above $p^H_{FB}$, suppose $k_i(p) < \mu_i$ for $p > p^H_{FB}$. Since $rR_{l,i} + \pi_{l,i} = \mu_i\pi_s$, agent $i$ could obtain $\mu_i \pi_s$ by playing $k_{-i} = 0$, and so $u_i(p) \ge \mu_i\pi_s$. If $u_i(p) > \mu_i\pi_s$, then $(p, u_i(p))$ lies above $\mathcal{D}_{K_{-i}, i}$ for any $K_{-i}$ so $k_i(p) < \mu_i$ is a contradiction of Lemma 1. If $u_i(p) = \mu_i \pi_s$, then by Lemma 1, the only way $k_i(p) < \mu_i$ can be an equilibrium best-response policy requires $K_{-i}(p) = 0$, so the HJB implies that in order for $k_i < 1$ to be optimal, taking any sequence $p_n < p$, $p_n \to p$ and $u_i$ differentiable at $p_n$, we have  
\[ p_n\frac{\lambda}{r}(\pi_{w,i} - u_i(p_n) - (1-p_n)u_i'(p-n)) \le (\pi_s - p_n \lambda R_{w,i})  \]
In the limit as $p_n \to p$, 
\[ -p\frac{\lambda}{r}(1-p)u'_{i,-}(p) \le \pi_s \left(1 - \frac{p}{p^H_{I,i}}\right)  \]
where $u'_{i,-}$ is the left-derivative (as $u_i$ need not be differentiable). But this implies that the left-derivative of $u_i(p)$ is positive (since $p > p^H_{FB}=p^H_{I,i}$ the right-hand side is negative) and so there must exist some $p' < p$, such that $u_i(p') < \mu_i\pi_s$, a contradiction of the fact that any agent can guarantee at least $\mu_i\pi_s$ by always playing $k_i = 0$. Hence in either case of $u_i(p) = \mu_i\pi_s$ or $u_i(p) > \mu_i\pi_s$ it cannot be an equilibrium best-response to play $k_i(p) < \mu_i$ at $p$, and so the only equilibrium must be the efficient solution as each agent must set $k_i(p) = \mu_i$ above $p^H_{FB}$.
\hfill \qedsymbol    

\subsection*{Proof of Proposition \ref*{prop:hetero_contracts}}
    Using the identical arguments as Theorems \ref*{thm:guarantee} and \ref*{thm:guarantee_2}, the contract induces a game with 
    \[ \tilde{\pi}_{l,i} = \Pi(1-\alpha_C)\frac{\mu_i}{M}, \quad \tilde{R}_{l, i} = \Pi(1-\alpha_I)\frac{\mu_i}{M} \]
    so then the efficiency condition from Proposition \ref*{prop:hetero_noncoop}, becomes
    \begin{align*}
         g\left(c^{\cdot, H}_{\alpha_I, \alpha_C}\right)  = \pi_s \iff & rR\left(1-\alpha_I\right)\frac{1}{M} + \Pi\left(1-\alpha_C\right)\frac{1}{M} = \pi_s \\
        \iff & r\tilde{R}_{l,i} + \tilde{\pi}_{l,i} = \mu_i \pi_s \\
        \iff & \frac{\mu_i \pi_s - \tilde{\pi}_{l,i}}{r} = \tilde{R}_{l, i}
    \end{align*}
\hfill \qedsymbol    

\section{Inefficient Equilibria}
\label{sec:inefficient}
In this analysis, I focus on symmetric equilibria in weakly monotonic strategies here.\footnote{This restriction avoids the asymmetric switching equilibria seen in \cite{krc2005}, which are inefficient anyhow, either in the amount of experimentation or the rate of experimentation (or both). For expanded discussion of other equilibria, see Appendix \ref{app:other_equilibria}.}

\subsection{Undercompetition}
First, I consider the case where $\frac{\pi_s - \pi_l}{r} < R_l $. The following result generalizes the main insights from the symmetric equilibrium analysis of \cite{krc2005}, but the approach to characterizing equilibria here is relatively standard and hence my discussion here is brief.

\begin{proposition}[Undercompetition]\label{prop:undercompetitive}
Suppose $\frac{\pi_s - \pi_l}{r} < R_l $. There is a unique symmetric MPE, and in this MPE, agents use weakly monotonic strategies and experimentation stops at $p_I$. 
\end{proposition}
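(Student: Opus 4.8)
The plan is to pin down the stopping belief by squeezing it to $p_I$ from both sides, and then to construct and uniquely characterize the symmetric strategy on the experimentation region by a guess-and-verify argument in the spirit of the symmetric-equilibrium analysis of \cite{krc2005}. First I would record the ordering of the relevant cutoffs: since $\frac{\pi_s-\pi_l}{r}<R_l$, Lemma \ref{lem:ptimes_pFB_comparison} gives $p_\times>p_{FB}$, and Lemma \ref{lem:pI_characterization} places $p_I$ between them, so $p_{FB}<p_I<p_\times$. The upper bound on the end of experimentation is then immediate from Lemma \ref{lem:stopping_upper_bound}: in any MPE in weakly monotonic strategies, experimentation cannot stop at any $p>p_I$.

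Next I would establish the matching lower bound, that experimentation cannot stop at any $p_T<p_I$. Suppose it does. Since the equilibrium is symmetric and weakly monotonic, effort is zero at and below $p_T$, so $u(p_T)=\pi_s$, while total effort is strictly positive just above $p_T$. Running the same computation as in Lemma \ref{lem:stopping_lower_bound} — taking a sequence $p_n\downarrow p_T$ along which $u$ is differentiable, using that positive effort requires $b_I\ge c_I$, and substituting the definition of $p_I$ — yields in the limit $-\frac{\lambda}{r}p_T(1-p_T)u'_+(p_T)\ge \pi_s\left(1-\frac{p_T}{p_I}\right)$, whose right-hand side is strictly positive because $p_T<p_I$. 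Hence $u'_+(p_T)<0$ and $u(p)<\pi_s$ for $p$ slightly above $p_T$. But just above $p_T$ the rival agents exert positive effort, and since $\frac{\pi_s-\pi_l}{r}<R_l$ means a rival's breakthrough is \emph{beneficial} to a loser ($\pi_l+rR_l>\pi_s$), any agent can secure strictly more than $\pi_s$ by free-riding (setting $k_i=0$), so $u(p)>\pi_s$ — a contradiction. Combining the two bounds forces every symmetric weakly monotonic MPE to stop exactly at $p_I$.

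It then remains to show such an equilibrium exists and is unique on $(p_I,1]$. Here I would use the standard indifference construction: on the region where effort is interior, the HJB forces the value function to lie on the level curve $\mathcal{D}_{(N-1)\sigma(p)}$, while the first-order condition $b_I=c_I$ pins down $u'$; differentiating the level-curve identity and equating with the indifference condition gives a single first-order ODE for the symmetric effort $\sigma(\cdot)$ (equivalently for $K_{-i}$), to be solved backward from the boundary data $u(p_I)=\pi_s$ and $\sigma(p_I)=0$. Wherever the ODE would push $\sigma$ above $1$, effort is capped at full effort and $u$ instead solves the full-effort equation \eqref{eqn:coop_effort_diffeq}, pasted continuously. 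A verification argument confirms the constructed $(u,\sigma)$ is a viscosity solution and a mutual best response, while Lipschitz uniqueness of the ODE together with the forced boundary condition at $p_I$ delivers uniqueness of the symmetric MPE.

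I expect the main obstacle to be this construction-and-verification step rather than the bounds: one must check that the indifference ODE admits a well-defined, monotone solution on all of $(p_I,1]$ with $\sigma$ remaining in $[0,1]$, handle the kink where effort reaches its cap, and confirm that the candidate is genuinely a viscosity solution of the best-response HJB (so that indifference is truly optimal, not merely stationary). The lower-bound argument is the only genuinely new ingredient relative to the overcompetition case, and it turns precisely on the sign reversal $\pi_l+rR_l>\pi_s$ that characterizes the free-riding regime.
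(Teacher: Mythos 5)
Your proposal is correct and its overall skeleton matches the paper's: pin the stopping belief to $p_I$, then construct the symmetric strategy from the indifference condition $b_I=c_I$ on the interior-effort region, paste the full-effort ODE \eqref{eqn:coop_effort_diffeq} above the belief where effort hits its cap, and verify. The one step where you take a genuinely different route is the lower bound $p_T\ge p_I$. The paper gets this globally: it first solves the indifference ODE \eqref{eqn:interior_diffeq} explicitly for the value function $W$, observes that any stopping belief $p_T<p_I$ would force $W(p)>V_{FB}(p)$ somewhere, and rules this out because $V_{FB}$ bounds the average (hence, by symmetry, the individual) payoff from above. You instead argue locally at the boundary: positive effort just above $p_T$ forces $-\tfrac{\lambda}{r}p_T(1-p_T)u'_+(p_T)\ge\pi_s\bigl(1-\tfrac{p_T}{p_I}\bigr)$, so $u$ dips below $\pi_s$, contradicting the participation bound $u\ge\pi_s$ that holds in this regime precisely because $\pi_l+rR_l>\pi_s$ (a free-rider who plays $k_i=0$ forever is weakly better off than under the safe flow). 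Your argument is valid — it is essentially the sub/superdifferential boundary argument the paper itself deploys later when it extends the result to asymmetric and nonmonotone equilibria in Appendix \ref{app:other_equilibria} — and it buys robustness (no explicit solution of the ODE is needed to locate $p_T$), whereas the paper's comparison with $V_{FB}$ buys the explicit formula for $W^*$ that it then reuses to read off the effort profile $k^\dagger(p)$ from the level curves $\mathcal{D}_{K_{-i}}$ and to do the smooth-pasting check at $p^\dagger$. Your construction step parametrizes the unknown as an ODE for $\sigma$ rather than solving for $u$ directly and inverting the level-curve identity, but these are equivalent; and you correctly identify the verification (monotonicity of $\sigma$, the cap at full effort, differentiability at the pasting point) as the remaining technical work, which is exactly what the paper carries out.
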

\begin{proof}
    Since $p_\times > p_I$, $(p_T, \pi_s)$ lies below $\mathcal{D}_{N-1}$, and since we are considering symmetric equilibri, Lemma \ref*{lem:baseline_br} implies that just before $p_T$, total effort cannot have been $N$, so each agent was exerting an interior amount of effort. This implies that the $b_I = c_I$ with equality in this region, and so $u$ is pinned down by the differential equation 
    \begin{equation}\label{eqn:interior_diffeq}
         p u(p) + p(1-p)u'(p) = p \left( r R_w + \pi_w \right) - \frac{r \pi_s}{\lambda} 
    \end{equation} 
    This has the strictly convex solution: 
    \begin{equation*}
        W(p) = \left(r R_w + \pi_w - \frac{r}{\lambda}\pi_s \right) -\frac{r\pi_s}{\lambda}\varphi(p) + C(1-p)
    \end{equation*}
    where 
    \[ \varphi(p) = (1-p)\ln \left( \frac{1-p}{p}\right) \]
    and $C$ is some constant. Note the choice of $C$ determines which $p_T$ satisfies $W(p_T) = \pi_s$. Further, since any choice of $p_T < p_I$ implies that $W(p) > V_{FB}(p)$ at some $p$, and $V_{FB}$ by construction is an upper bound on the average payoff and the equilibrium is symmetric, it must be the case that $p_T = p_I$, which implies that the constant $C^*$ satisfies:
    \begin{equation*}
        C^* = \frac{1}{1-p_I}\left[\left(1 + \frac{r}{\lambda}\right)\pi_s - rR_w - \pi_w +\frac{r\pi_s}{\lambda}\varphi(p_I) \right]
    \end{equation*}
    Then $W^* = \left(r R_w + \pi_w - \frac{r}{\lambda}\pi_s \right) -\frac{r\pi_s}{\lambda}\varphi(p) + C^*(1-p)$ is such that $W^*(p_I) = \pi_s$. Plugging into the differential equation \eqref{eqn:interior_diffeq}, $(W^*)'(p_I) = 0$. Then below $p_I$ the value function is constant at $\pi_s$. Note now that $W^*$ intersects $\mathcal{D}_{N-1}$ at some $p^\dagger$ satisfying the following implicit equation:
    \begin{equation}\label{eqn:switching_belief}
        W^*(p^\dagger) = \pi_s + (N-1)\left(\pi_s - p^\dagger\lambda (R_w - R_l)- \frac{p^\dagger\lambda}{r}\left( \pi_w - \pi_l \right) \right)
    \end{equation}
    Then at every point in $[p_I, p^\dagger]$, the equilibrium effort level corresponds to which $\mathcal{D}$ surface $p, W^*(p)$ lies on: 
    {\small \[ k^\dagger(p) = \frac{1}{N-1} \left(\frac{W^*(p) - \pi_s}{\pi_s - p\lambda (R_w - R_l)- \frac{p\lambda}{r}\left( \pi_w - \pi_l \right) }\right) \] }
    Above $p^\dagger$, the value function satisfies the cooperative equation \eqref{eqn:coop_value_function_effort}, with constant chosen for continuity. Let $V^*$ be such a value function. To finish the verification, we need to check for differentiability at $p^\dagger$. That is, 
    {\small \begin{align*}
        N\frac{\lambda}{r} p^\dagger(1-p^\dagger)(W^*)'(p^\dagger) =& Np^\dagger\frac{\lambda}{r} (rR_w + \pi_w) - Np^\dagger \frac{\lambda}{r} W^*(p^\dagger) - N\pi_s \\
        =& N p^\dagger\frac{\lambda}{r} (rR_w + \pi_w) - \left( 1 + N p^\dagger\frac{\lambda}{r} \right) W^*(p^\dagger) + W^*(p^\dagger) - N\pi_s \\
        =& N p^\dagger\frac{\lambda}{r} (rR_w + \pi_w) - \left( 1 + N p^\dagger\frac{\lambda}{r} \right) W^*(p^\dagger) - (N-1)p^\dagger\lambda (R_w - R_l) \\
        &- (N-1)\frac{p^\dagger\lambda}{r}\left( \pi_w - \pi_l \right) \\
        =& p^\dagger\frac{\lambda}{r} (r(R_w + (N-1)R_l) + \Pi) - \left( 1 + N p^\dagger\frac{\lambda}{r} \right) W^*(p^\dagger) \\
        =& p^\dagger\frac{\lambda}{r} (rR  + \Pi) - \left( 1 + N p^\dagger\frac{\lambda}{r} \right) V^*(p^\dagger) \\
        =& N\frac{\lambda}{r} p^\dagger(1-p^\dagger)(V^*)'(p^\dagger) 
    \end{align*} }
    where in the second step we added and subtracted a $W^*(p^\dagger)$, the third step used \eqref{eqn:switching_belief}, the fifth step used that $V^*(p^\dagger) = W^*(p^\dagger)$, and the final step follows from the differential equation \eqref{eqn:coop_effort_diffeq} pinning down $V^*$. Hence, the constructed value function is continuous and differentiable, completing the verification. Since the differential equations pinned down a unique solution, this is the only symmetric equilibrium.

\end{proof}

Note that in the symmetric equilibrium from Proposition \ref*{prop:undercompetitive}, the strategies used are not cutoff strategies; that is, prior to the end of experimentation, the agents reduce effort gradually towards zero. 

The argument follows the same ideas as the proof of Proposition 5.1 in \cite{krc2005}.\footnote{For the reader familiar with the economics bandit literature, the model of \cite{krc2005} is analogous to my model with $\pi_w = \pi_l = \lambda h$, $R_w = h$ and $R_l = 0$.} Since $\pi_l > \pi_s$ implies $p_{FB} < p_I < p_\times$, the value function must cross $\mathcal{D}_{N-1}$ at some $p \in [p_{FB}, p_\times]$. Above $\mathcal{D}_{N-1}$, all agents exert full effort, and thus the HJB generates a differential equation governing the law of motion there. Below $\mathcal{D}_{N-1}$, in any symmetric MPE, all agents must be exerting an interior amount of effort $\in (0,1)$ and hence the condition for $k_i$ to be interior in the HJB provides another differential equation for the value function between $\mathcal{D}_0$ and $\mathcal{D}_{N-1}$. From these two conditions, I use a verification approach and explicitly construct a solution using smooth pasting and value matching conditions. 

Recall that $\frac{\pi_s - \pi_l}{r} < R_l $ implies that $p_I > p_{FB}$, by Lemmas \ref*{lem:ptimes_pFB_comparison} and \ref*{lem:pI_characterization}. That is, in equilibrium, in this case agents experiment less than a social planner would due to the presence of a free-riding effect; that is, losers still benefit from a breakthrough, and so there is an incentive to free-ride on others' effort.

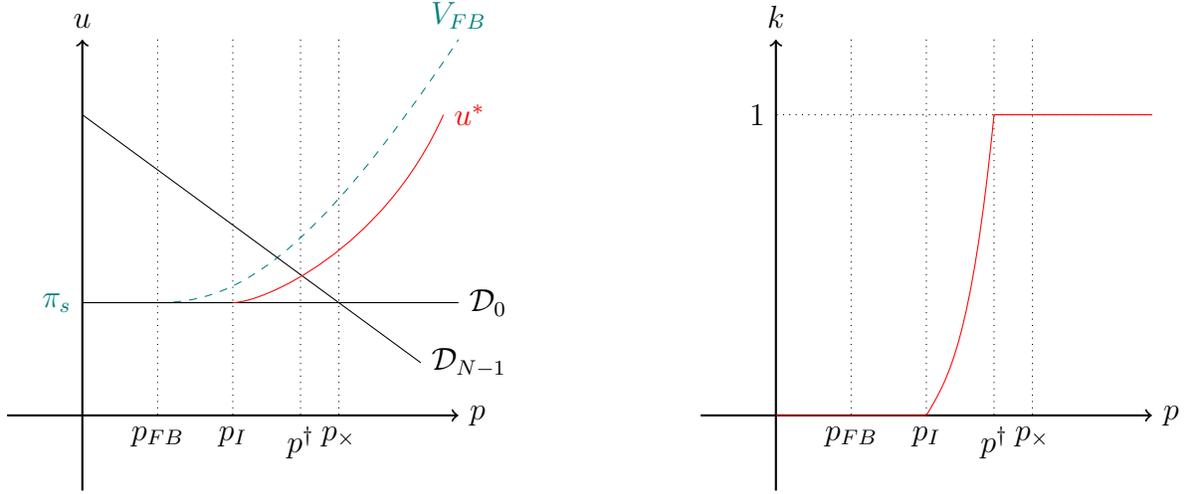
\begin{figure}
    \centering
    \begin{subfigure}[b]{0.45\textwidth}
         \centering
         \begin{tikzpicture}
            \draw[->, thick] (-1,0) -- (5,0) node[anchor=west]{$p$};
            \draw[->, thick] (0,-1,0) -- (0,5) node[anchor=south]{$u$};
            \draw[dotted] (1,5) -- (1,0) node[anchor=north]{$p_{FB}$};
            \draw[teal] (2,1.5) -- (0,1.5) node[anchor=east]{$\pi_s$};
            \draw[teal, dashed] (1,1.5) .. controls (2,1.5) and (3,2) .. (5,5) node[anchor=south]{$V_{FB}$};
            \draw (0,1.5) -- (5,1.5) node[anchor=west]{$\mathcal{D}_0$};
            \draw (0,4) -- (4.5,0.7) node[anchor=west]{$\mathcal{D}_{N-1}$};
            \draw[dotted] (2,5) -- (2,0) node[anchor=north]{$p_I$};
            \draw[dotted] (2.9,5) -- (2.9,0) node[anchor=north]{$p^\dagger$};
            \draw[dotted] (3.41,5) -- (3.41,0) node[anchor=north]{$p_\times$};
            \draw[red] (2,1.5) .. controls (2.4,1.5) and (4,2.2) .. (4.8,4) node[anchor=west]{$u^*$};
        \end{tikzpicture}
        \caption{Value functions. The dashed line $V_{FB}$ denotes the value function of the first-best solution, and $u^*$ denotes the value function of the agent in the noncooperative equilibrium. }
         \label{fig:undercompetitive_value}
    \end{subfigure}
    \hfill
    \begin{subfigure}[b]{0.45\textwidth}
         \centering
         \begin{tikzpicture}
            \draw[->, thick] (-1,0) -- (5,0) node[anchor=west]{$p$};
            \draw[->, thick] (0,-1,0) -- (0,5) node[anchor=south]{$k$};
            \draw[dotted] (1,5) -- (1,0) node[anchor=north]{$p_{FB}$};
            \draw[dotted] (2,5) -- (2,0) node[anchor=north]{$p_I$};
            \draw[dotted] (2.9,5) -- (2.9,0) node[anchor=north]{$p^\dagger$};
            \draw[dotted] (3.41,5) -- (3.41,0) node[anchor=north]{$p_\times$};
            \draw[dotted] (5,4) -- (0,4) node[anchor=east]{1};
            \draw[red] (5,4) -- (2.9,4);
            \draw[red] (2.9,4) .. controls (2.6, 1) and (2.3, 0.5) .. (2,0);
            \draw[red] (2,0) -- (0,0);
        \end{tikzpicture}
        \caption{Strategies. Note that the unique symmetric equilibrium strategies are weakly monotonic, but not cutoff; effort starts decreasing at $p^\dagger$, and reaches zero at $p_I$. }
         \label{fig:undercompetitive_strat}
    \end{subfigure}
    \caption{Equilibrium value function and symmetric strategy in an undercompetitive experimentation game.}
    \label{fig:undercompetitive}
\end{figure}
The solution exhibits a key features of the \cite{krc2005} symmetric equilibrium (that is, agents taper their effort as the belief approaches $p_I$). Figure \ref{fig:undercompetitive} shows the agent strategies and value function in an example of such an undercompetitive equilibrium.

\subsection{Overcompetition}
Now, consider the case where $\frac{\pi_s - \pi_l}{r} > R_l $. The following characterizes the equilibria in weakly monotonic strategies.

\begin{proposition}[Overcompetition]\label{prop:overcompetitive}
    Suppose $\frac{\pi_s - \pi_l}{r} > R_l $. In any symmetric MPE in weakly monotonic strategies, experimentation stops at some threshold $p_T \in [p_\times, p_I]$. Moreover, for any $p_T$ in $[p_\times, p_I]$, there is a symmetric MPE where all agents use cutoff strategies stopping at $p_T$. 
\end{proposition}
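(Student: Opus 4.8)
The statement has two parts, and the bounds follow quickly from results already established. For the first part, fix any symmetric MPE in weakly monotonic strategies and let $p_T$ be the belief at which experimentation stops. Lemma \ref{lem:stopping_upper_bound} applies (the strategies are weakly monotonic) and gives $p_T \le p_I$. Since $\frac{\pi_s-\pi_l}{r} > R_l$ implies $\frac{\pi_s-\pi_l}{r} \ge R_l$, Lemma \ref{lem:stopping_lower_bound} applies and gives $p_T \ge p_\times$. Hence $p_T \in [p_\times, p_I]$, and by Lemmas \ref{lem:ptimes_pFB_comparison} and \ref{lem:pI_characterization} this interval is nonempty with $p_\times \le p_I \le p_{FB}$. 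This disposes of the necessity direction.

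For the existence part I would use guess-and-verify. Fix $p_T \in [p_\times,p_I]$, suppose every agent other than $i$ plays the cutoff at $p_T$, and build agent $i$'s candidate value function. Below $p_T$ no one experiments, so the belief is frozen and $u \equiv \pi_s$; moreover, since $p < p_T \le p_I$, a lone experimenter faces $b_I - c_I = \pi_s\!\left(\frac{p}{p_I}-1\right) < 0$, so $k_i = 0$ is the strict best response there. Above $p_T$ all $N$ agents exert full effort, so $u$ solves the same differential equation \eqref{eqn:coop_effort_diffeq} as in the cooperative problem, now with value-matching $u(p_T) = \pi_s$; this yields $u(p) = A\,p + C\,\phi(p)$ with $A = \frac{\lambda(\Pi/r + R)}{1 + N\lambda/r}$ and $C = \frac{\pi_s - A p_T}{\phi(p_T)}$. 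Because $A p_T \le A p_{FB} < \pi_s$ (the latter from the cooperative value matching in Theorem \ref{thm:cooperative}) and $\phi(p_T) > 0$, we get $C > 0$, so $u$ is convex above $p_T$.

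The crux is to verify that full effort actually attains the maximum in the best-response HJB throughout $(p_T,1]$. Using the algebra behind Lemma \ref{lem:baseline_br}, on any function solving the full-effort equation with $K_{-i} = N-1$ the marginal gain from effort collapses to $b_I - c_I = \frac{1}{N}\bigl(u(p) - \mathcal{D}_{N-1}(p)\bigr)$, so full effort is optimal exactly where $u \ge \mathcal{D}_{N-1}$. At the cutoff this holds because $u(p_T) - \mathcal{D}_{N-1}(p_T) = (N-1)\pi_s\!\left(\frac{p_T}{p_\times}-1\right) \ge 0$. To propagate the inequality upward I would argue by first crossing: at any $\hat p > p_T$ with $u(\hat p) = \mathcal{D}_{N-1}(\hat p)$, the differential equation pins the sign of $\frac{d}{dp}\bigl(u-\mathcal{D}_{N-1}\bigr)$ through the residual $L[u] - L[\mathcal{D}_{N-1}]$, where $L$ is the linear operator on the left-hand side of \eqref{eqn:coop_effort_diffeq}; showing this residual is positive at such $\hat p$ forbids a downward crossing, so the convex $u$ never dips below the line $\mathcal{D}_{N-1}$. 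Combined with the frozen region below $p_T$, this verifies that the posited cutoff is a mutual best response.

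I expect this last propagation step to be the main obstacle. The slope $u'(p_T)$, pinned by \eqref{eqn:coop_effort_diffeq} to $\frac{\pi_s\,(p_T/p_{FB}-1)}{N p_T(1-p_T)\lambda/r}$, is nonpositive (since $p_T \le p_{FB}$) and can be steep, so $u$ dips below $\pi_s$ just above $p_T$ while $\mathcal{D}_{N-1}$ is itself falling; the verification must show that $u$ falls no faster, and the comparison is tightest at the lower endpoint $p_T = p_\times$, where $u(p_T)$ sits exactly on $\mathcal{D}_{N-1}$. Controlling the sign of the residual there requires the fan structure of the curves $\mathcal{D}_{K_{-i}}$ through $(p_\times,\pi_s)$ together with the explicit dependence of $p_\times$, $p_I$, and $p_{FB}$ on the parameters. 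Once full-effort optimality is verified for every $p_T$ in the interval, existence follows, and together with the first part this gives the proposition.
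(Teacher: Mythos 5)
Your first paragraph (necessity) is exactly the paper's argument: Lemmas \ref{lem:stopping_upper_bound} and \ref{lem:stopping_lower_bound} pin experimentation's endpoint to $[p_\times, p_I]$, and nothing more is needed. Your construction of the candidate value function for the existence part also matches the paper's: $\pi_s$ below $p_T$, and the solution of \eqref{eqn:coop_effort_diffeq} with value matching $u(p_T)=\pi_s$ above, with the positive constant $C$ making $u$ convex there. The reduction of full-effort optimality to $u \ge \mathcal{D}_{N-1}$ via the identity $b_I - c_I = \frac{1}{N}\bigl(u - \mathcal{D}_{N-1}\bigr)$ along the full-effort ODE is correct and is in fact a more explicit accounting of a step the paper passes over quickly.

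There are, however, two genuine gaps. First, the step you yourself flag as ``the main obstacle'' --- propagating $u \ge \mathcal{D}_{N-1}$ from $p_T$ upward --- is not actually carried out; you describe a first-crossing strategy and note that the sign of the residual must be controlled, but you do not establish it, and at the endpoint $p_T = p_\times$ the functions touch, so the argument genuinely hinges on a slope comparison you have not verified. A proof that stops at ``showing this residual is positive \ldots forbids a downward crossing'' has not proved the proposition. Second, and more structurally, your verification ignores the point $p_T$ itself. The candidate value function has left derivative $0$ and strictly negative right derivative $u'_+(p_T) = \frac{r\pi_s}{N p_T(1-p_T)\lambda}\left(\frac{p_T}{p_{FB}} - 1\right) < 0$, i.e., a downward kink induced by the discontinuity of $K_{-i}$ at $p_T$. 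A classical verification theorem therefore does not apply, and checking the HJB only where $u$ is differentiable is not sufficient to conclude that the cutoff policy is a best response. This is precisely where the paper's proof does its work: it shows the kinked $V$ is trivially a viscosity supersolution (the subdifferential at a downward kink is empty), checks the subsolution inequality $H(p_T,\pi_s,\phi') \le 0$ for every test slope $\phi'$ in the superdifferential $\left[u'_+(p_T),\, 0\right]$, and then invokes the uniqueness of viscosity solutions (Lemma \ref{lem:viscosity}) to conclude that the constructed function is the best-response value and the cutoff is optimal. Without some treatment of the kink --- viscosity conditions or an equivalent direct argument ruling out profitable deviations near $p_T$ --- your verification is incomplete even if the propagation step were filled in.
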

\begin{proof}
    First, I show the second half of the statement; that is, all agents using a cutoff strategy at $p_T$ for any $p_T \in [p_\times, p_I]$ is an equilibrium. It suffices to show that if all other agents are employing a cutoff strategy at $p_T$, the best response is to also use a cutoff strategy at $p_T$. If all other agents are using cutoff strategies, then $K_{-i}$ is $N-1$ at $p > p_T$ and $0$ at $p \le p_T$. At beliefs above $p_T$, the differential equation on the Bellman value dictated by the HJB is then:
    \begin{align*} 
    u(p) = &\pi_s + (N-1) \left[p \lambda R_l + p\frac{\lambda}{r}(\pi_l - u(p) - (1-p)u'(p))\right] \\
    &+  \max_{k_i} \left[k_i \left( p\frac{\lambda}{r}(\pi_w - u(p) - (1-p)u'(p)) - (\pi_s - p \lambda R_w) \right)  \right] 
    \end{align*}
    I use a verification argument in the same manner as in the proof of Theorem \ref*{thm:cooperative}. We want to show that a cutoff strategy at $p_T$ is an optimal policy to this control problem. To do this, I explicitly construct the value function corresponding to this policy and show that this value function is a viscosity solution to the HJB. Since the HJB has a well-behaved control function (the belief law of motion is smooth and Lipschitz), Theorem 2.12 in \cite{bd1997} ensures that a viscosity solution that solves the HJB is exists and is unique.
    
    Note the best response $k_i$ depends on whether
    \[ p\frac{\lambda}{r}(\pi_w - u(p) - (1-p)u'(p))  - (\pi_s - p \lambda R_w) \]
    is positive, negative, or zero. If $k_i = 1$, note that the HJB becomes a differential equation 
    \[  \left(1 + \frac{Np \lambda}{r}\right)u(p) + \frac{Np(1-p)\lambda}{r}u'(p) = \pi_s + \left(Np\frac{\lambda}{r}\left(\frac{\Pi}{N}\right) - \left(\pi_s - p\lambda R\right) \right)   \]
    which reduces to the same differential equation as in the cooperative case, equation \eqref{eqn:coop_effort_diffeq}. As we explicitly solved before, the value function satisfies Equation \eqref{eqn:coop_value_function_effort}:
    \[ V(p) = p \frac{\lambda \left( \frac{\Pi}{r} + R \right)}{1 + \frac{N\lambda}{r}} + C\phi(p)  \]
    Imposing a value-matching condition at $p_T$ to solve for $C$, the value function constructed is:
    \begin{equation*} V(p) = \begin{cases}
 \pi_s & p < p_{T} \\
 p \frac{\lambda \left( \frac{\Pi}{r} + R \right)}{1 + \frac{N\lambda}{r}} + \frac{\pi_s\left(1 + \frac{N\lambda}{r}\right) - p_{T} \lambda \left( \frac{\Pi}{r} + R \right)}{\left(1 + \frac{N\lambda}{r}\right) \phi(p_{T})}\phi(p) & p \ge p_{T}
\end{cases} \end{equation*}
To verify that this is a viscosity solution, we note that this value function is smooth above $p_T$ and satisfies the HJB differential equation in this region (as we constructed).
Below $p_T$, $V$ is also smooth and satisfies the HJB. However, this solution is not differentiable at $p_T$, so we have to check that this value function is both a viscosity subsolution and viscosity supersolution at $p_T$. In particular, $V$ kinks at $p_T$, where its left derivative is 0 and its right derivative can be found from the differential equation from the HJB: 
\begin{align*} 
u'_+(p_T) &= \frac{r}{Np_T(1-p_T)\lambda} \left(Np_T\frac{\lambda}{r}\left(\frac{\Pi}{N} - \pi_s\right) - \left(\pi_s - p_T\lambda R\right) \right)  \\
&= \frac{r}{Np_T(1-p_T)\lambda} \left( p_T\frac{\pi_s}{p_{FB}} - \pi_s \right) < 0
\end{align*}
where we plugged in $p_{FB}$ from \eqrefb{eqn:belief_fb}, and the inequality follows since $p_T \le p_I < p_{FB}$. Hence $V$ kinks downward at $p_T$, so there is no $C^\infty([0,1])$ variation $\phi$ such that $V - \phi$ attains a minimum of $0$ at $p_T$, and so $V$ is trivially a viscosity supersolution. It remains to be shown that $V$ is also a viscosity subsolution. The Hamiltonian expression is 
\begin{align*} H(p, u, Du) =& u(p) -\pi_s - K_{-i}(p)\left[p \lambda R_l + p\frac{\lambda}{r}(\pi_l - u - (1-p)Du )\right] \\
    &-  \max_{k_i} \left[k_i \left( p\frac{\lambda}{r}(\pi_w - u - (1-p)Du) - (\pi_s - p \lambda R_w) \right)  \right]  \end{align*}
Take any $C^\infty([0,1])$ variation $\phi$ such that $V - \phi$ attains a maximum of at $p_T$, and let $\phi(p_T) = \pi_s = V(p_T)$. Then $\phi'(p_T)$ is by construction a superdifferential of $V$ at $p_T$, so $\phi'(p_T) \in \left[ \frac{r}{Np_T(1-p_T)\lambda} \left( p_T\frac{\pi_s}{p_{FB}} - \pi_s \right), 0\right]$. Then at $(p_T, V(p_T), \phi')$, we have that 
\begin{align*} H(p_T, \pi_s, \phi') =&  -  \max_{k_i} \left[k_i \left( p_T\frac{\lambda}{r}(\pi_w - \pi_s - (1-p_T)\phi') - (\pi_s - p_T \lambda R_w) \right)  \right]\\
&\le 0
\end{align*}
and hence $V$ is a viscosity subsolution (implicitly, the form of the Hamiltonian allows for kinks in only one direction). Hence, $V$ is both a subsolution and a supersolution, and so $V$ is a viscosity solution. Since the viscosity solution is unique by Lemma \ref*{lem:viscosity}, $V$ corresponds to the optimal best-response, and so the best-response strategy is also a cutoff strategy at $p_T$. Hence, for any $p_T \in [p_\times, p_I]$, there is a symmetric MPE in cutoff strategies at $p_T$.

Now, I show the first part of the statement, which is that in any MPE in weakly monotonic strategies, experimentation stops in $[p_\times, p_I]$. Lemma \ref*{lem:stopping_upper_bound}, experimentation cannot stop above $p_I$. By Lemma \ref*{lem:stopping_lower_bound}, experimentation cannot stop at $p_T < p_\times$. Hence we are done.
\end{proof}

The rough intuition is as follows: Lemmas \ref*{lem:stopping_upper_bound} and \ref*{lem:stopping_lower_bound} show that experimentation must stop in $[p_\times, p_I]$.
To show the second part of the statement, I explicitly construct the value function corresponding to any potential equilibrium in cutoff strategies at $p_T$ and uses a verification argument to show that the value function is a viscosity solution to the HJB, which must be the unique solution. 

Recall that $p_I < p_{FB}$ when $\frac{\pi_s - \pi_l}{r} > R_l $ by Lemmas \ref*{lem:ptimes_pFB_comparison} and \ref*{lem:pI_characterization}. Thus, Proposition \ref{prop:overcompetitive} implies that the agents experiment past the point where a social planner would, and so the environment is overcompetitive. Intuitively, since the condition $\frac{\pi_s - \pi_l}{r} > R_l $ implies that the payoff loss from losing outweighs the instantaneous compensation to the losers, agents have a ``fear of missing out'' on a potential discovery. In particular, at beliefs in $(p_I, p_{FB}]$, the social planner would rather everyone drop the research project, but agents still experiment because of the winner advantage combined with the negative payoff implication from losing. This gives a continuum of coordination equilibria; that is, for beliefs in the range $[p_\times, p_I]$, it is a best response to quit research if everyone else also quits at that belief.

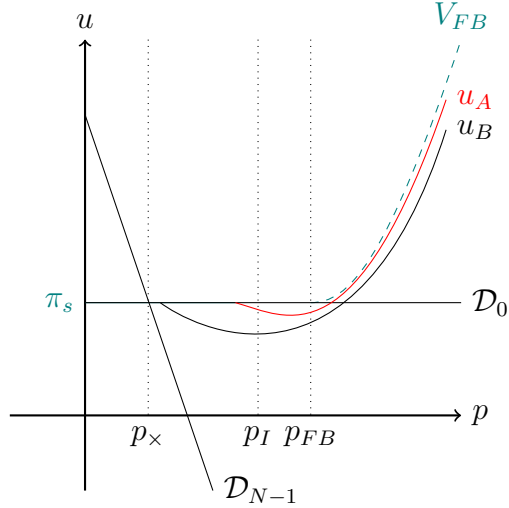
\begin{figure}
    \centering
    \begin{tikzpicture}
    \draw[->, thick] (-1,0) -- (5,0) node[anchor=west]{$p$};
    \draw[->, thick] (0,-1,0) -- (0,5) node[anchor=south]{$u$};
    \draw[dotted] (3,5) -- (3,0) node[anchor=north]{$p_{FB}$};
    \draw[teal] (2,1.5) -- (0,1.5) node[anchor=east]{$\pi_s$};
    \draw[dotted] (0.84,5) -- (0.84,0) node[anchor=north]{$p_\times$} ;
    \draw[teal, dashed] (3,1.5) .. controls (3.5,1.5) and (4,2) .. (5,5) node[anchor=south]{$V_{FB}$};
    \draw (0,1.5) -- (5,1.5) node[anchor=west]{$\mathcal{D}_0$};
    \draw (0,4) -- (1.7,-1) node[anchor=west]{$\mathcal{D}_{N-1}$};
    \draw[dotted] (2.3,5) -- (2.3,0) node[anchor=north]{$p_I$} ;
    \draw[red] (2,1.5) .. controls (2.7,1.3) and (3.6,0.7) .. (4.8,4.2) node[anchor=west]{$u_A$};
    \draw[] (1,1.5) .. controls (1.4,1.2) and (3.6,0) .. (4.8,3.8) node[anchor=west]{$u_B$};
\end{tikzpicture}
    \caption{Value functions of multiple equilibria in an overcompetitive experimentation game. The dashed line $V_{FB}$ denotes the value function of the first-best solution, and $u_A$ and $u_B$ show the value functions of the agent in two different cutoff equilibrium of the noncooperative game. Note that $u_A$ is weakly above $u_B$ everywhere, and experimentation ends sooner (at a higher belief) in the equilibrium corresponding to $u_A$ than the equilibrium corresponding to $u_B$.}
    \label{fig:overcompetitive}
\end{figure}

Figure \ref{fig:overcompetitive} plots the value functions for the cutoff equilibria characterized by Proposition \ref{prop:overcompetitive}. As a side note, one might see from the figure that among the cutoff equilibria mentioned in Proposition \ref{prop:overcompetitive}, the agents strictly prefer equilibria where experimentation stops earlier (at higher beliefs). A quick examination of the value functions constructed in the proof of Proposition \ref{prop:overcompetitive} thus implies:
\begin{corollary}
    Suppose $\frac{\pi_s - \pi_l}{r} > R_l $. Let $u_{p_T}$ denote the value function of the symmetric equilibrium where all agents use cutoff strategies at $p_T$. If $p_T > p_T'$, then $u_{p_T}(p) \ge u_{p_T'}(p)$ $\forall p$, inequality holding strictly for $p > p_T'$. 
\end{corollary}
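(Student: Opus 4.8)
The plan is to work directly with the closed-form value functions constructed in the proof of Proposition \ref{prop:overcompetitive}. Writing $B:=\lambda\left(\frac{\Pi}{r}+R\right)$, the cutoff-$p_T$ equilibrium value function is $u_{p_T}(p)=\pi_s$ for $p<p_T$ and $u_{p_T}(p)=\frac{pB}{1+N\lambda/r}+C(p_T)\phi(p)$ for $p\ge p_T$, where the value-matching condition $u_{p_T}(p_T)=\pi_s$ pins down
\[ C(q)=\frac{\pi_s\left(1+\frac{N\lambda}{r}\right)-qB}{\left(1+\frac{N\lambda}{r}\right)\phi(q)}. \]
Since $\phi>0$ on $(0,1)$, for any $p\ge\max(p_T,p_T')$ the comparison reduces to the identity $u_{p_T}(p)-u_{p_T'}(p)=\left(C(p_T)-C(p_T')\right)\phi(p)$, so the whole statement hinges on the monotonicity of $q\mapsto C(q)$. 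For $p<p_T'$ both functions equal $\pi_s$, giving equality there.

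First I would show that $C$ is strictly increasing on $(0,p_{FB})$. Differentiating and using $\phi'(q)=-\frac{\phi(q)}{1-q}\left(1+\frac{r}{N\lambda q}\right)$ (from the proof of Theorem \ref{thm:cooperative}), the sign of $C'(q)$ equals the sign of
\[ h(q):=-B+\frac{\pi_s\left(1+\frac{N\lambda}{r}\right)-qB}{1-q}\left(1+\frac{r}{N\lambda q}\right). \]
A short manipulation shows $h(q)=0$ is equivalent to $B=\pi_s\left(\frac{N\lambda}{r}+\frac1q\right)$, which by \eqref{eqn:belief_fb} holds precisely at $q=p_{FB}$; as the right-hand side is strictly decreasing in $q$, this is the unique zero in $(0,1)$. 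Since $h(q)\to+\infty$ as $q\to0^+$, it follows that $h>0$ on $(0,p_{FB})$, hence $C'>0$ there. Because $p_\times,p_I<p_{FB}$, this yields $C(p_T)>C(p_T')$ whenever $p_T>p_T'$ in $[p_\times,p_I]$, which immediately settles the region $p\ge p_T$ (strict, since $\phi>0$).

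The remaining and most delicate region is $p\in(p_T',p_T)$, where $u_{p_T}(p)=\pi_s$ but $u_{p_T'}$ is already in its experimenting branch, so the clean difference identity is unavailable and I must instead show $u_{p_T'}(p)<\pi_s$. Here I would invoke convexity: as noted in the proof of Theorem \ref{thm:cooperative}, the family $p\mapsto\frac{pB}{1+N\lambda/r}+C\phi(p)$ is convex for $C\ge0$, and $C(p_T')>0$ (the numerator $\pi_s(1-\frac{p_T'}{p_{FB}})+\frac{N\lambda}{r}\pi_s(1-p_T')$ is positive since $p_T'<p_{FB}<1$), so $u_{p_T'}$ is strictly convex on $[p_T',1]$. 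Setting $D(p):=u_{p_T'}(p)-\pi_s$, we have $D(p_T')=0$, while the downward kink computed in the proof of Proposition \ref{prop:overcompetitive} gives $D'_+(p_T')<0$. Evaluating at $p_{FB}$, the value-matching definition of $C(p_{FB})$ gives $\frac{p_{FB}B}{1+N\lambda/r}+C(p_{FB})\phi(p_{FB})=\pi_s$, and since $C(p_T')<C(p_{FB})$ we obtain $u_{p_T'}(p_{FB})<\pi_s$, i.e. $D(p_{FB})<0$. A convex function that vanishes at $p_T'$, has negative right-derivative there, and is still negative at $p_{FB}$ cannot return to zero before $p_{FB}$; hence $D<0$ on $(p_T',p_{FB}]\supseteq(p_T',p_T]$, which is exactly $u_{p_T'}(p)<\pi_s=u_{p_T}(p)$.

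Assembling the three regions gives $u_{p_T}\ge u_{p_T'}$ everywhere with equality only for $p\le p_T'$, i.e. strict for $p>p_T'$. The main obstacle is precisely the middle region: because the two value functions obey different regimes there, monotonicity of $C$ does not directly apply, and one must combine strict convexity with the sign of $D(p_{FB})$ — which is why establishing $C'>0$ on $(0,p_{FB})$ first is the crucial step. Conceptually, this reflects that $u_{p_T}$ is the per-agent cooperative payoff of the cutoff-$p_T$ policy, which is maximized at $p_{FB}$; for cutoffs in the overcompetitive range $[p_\times,p_I]\subset(0,p_{FB})$, moving the cutoff up toward $p_{FB}$ (stopping earlier) raises the agents' welfare.
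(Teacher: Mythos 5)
Your proof is correct and follows the route the paper intends: the paper offers no argument beyond ``a quick examination of the value functions constructed in the proof of Proposition \ref{prop:overcompetitive},'' and your computation is precisely that examination carried out in full. In particular, your two substantive steps --- strict monotonicity of the value-matching constant $C(q)$ on $(0,p_{FB})$ via the sign of $h$, and the convexity argument showing $u_{p_T'}(p)<\pi_s$ on the middle region $(p_T',p_T]$ where the two value functions are in different regimes --- both check out (the latter in fact needs only convexity together with $D(p_T')=0$ and $D(p_{FB})<0$; the sign of the right derivative at $p_T'$ is not required).
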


\section{Asymmetric and Nonmonotone Equilibria}
\label{app:other_equilibria}
In the paper, I focused on symmetric Markov perfect equilibria of the experimentation game in weakly monotonic strategies, particularly because the efficient solution requires the optimal policy to be symmetric in all agents and has a cutoff structure (and hence is weakly monotonic). In this appendix, I generalize some parts of Propositions \ref{prop:undercompetitive} and \ref{prop:overcompetitive} to discuss asymmetric and nonmonotone equilibria. 

\subsection{Undercompetitive Equilibria}
I first generalize Proposition \ref{prop:undercompetitive} to asymmetric equilibria. Note here that I require the finite piecewise Lipschitz assumption originally introduced in \cite{kr2010}; otherwise, the infinitely switching equilibrium of \cite{krc2005} results in experimentation until $p_{FB}$ (as \cite{krc2005} is a special case of my model). However, \cite{hkr2022} show that this infinitely switching equilibrium is a mathematical artifact of continuous time and never arises as the limit of discrete-time PBEs, so it is credible to exclude this equilibrium.
\begin{proposition}
    Suppose $\frac{\pi_s - \pi_l}{r} < R_l $. In \textbf{any} (symmetric or asymmetric, potentially nonmonotone) MPE, experimentation stops at $p_I$.
\end{proposition}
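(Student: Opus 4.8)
The plan is to pin the stopping belief $p_T$ (the infimum of beliefs reached on the equilibrium path) to $p_I$ from both sides, working directly at $p_T$ with one-sided derivatives so as to avoid any appeal to monotonicity or symmetry. The single enabling observation, special to the undercompetitive regime $\frac{\pi_s-\pi_l}{r}<R_l$, is that every agent can guarantee a payoff of at least $\pi_s$: playing $k_i\equiv 0$ yields flow $\pi_s$ until some rival's breakthrough (if any ever occurs), after which the agent collects the loser payoff, whose flow value $rR_l+\pi_l$ strictly exceeds $\pi_s$ precisely because $\frac{\pi_s-\pi_l}{r}<R_l$. Hence $u_i(p)\ge\pi_s$ for every agent $i$ and belief $p$ in any MPE, with equality only when no experimentation ever occurs from $p$ onward. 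Since at the stopping belief the process is frozen and all agents earn exactly $\pi_s$, we also have $u_i(p_T)=\pi_s$ for all $i$. This uniform floor is what lets both the ``stops too early'' and the ``continues too long'' deviations be detected at the single point $p_T$.

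First I would rule out $p_T>p_I$. At $p_T$ total effort is zero, so $K_{-i}(p_T)=0$ for each $i$; combining $u_i\ge\pi_s$ with $u_i(p_T)=\pi_s$ forces the left derivative $u_i'(p_T^-)\le 0$ (a function bounded below by its value at $p_T$ has nonpositive left derivative there). Using the identity $\lambda R_w+\frac{\lambda}{r}(\pi_w-\pi_s)=\pi_s/p_I$ from the definition of $p_I$, the flow benefit of a deviation to $k_i>0$ at $p_T$ equals $\pi_s\!\left(\tfrac{p_T}{p_I}-1\right)-p_T\frac{\lambda}{r}(1-p_T)u_i'(p_T^-)\ge \pi_s\!\left(\tfrac{p_T}{p_I}-1\right)>0$, since $p_T>p_I$ and $u_i'(p_T^-)\le 0$. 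Thus experimenting strictly dominates idling at $p_T$, contradicting that the process halts there. This is the content of Lemma \ref{lem:stopping_upper_bound}, but the argument now uses only the single point $p_T$ and the floor $u_i\ge\pi_s$, so it needs no monotonicity.

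Next I would rule out $p_T<p_I$, mirroring the proof of Lemma \ref{lem:stopping_lower_bound} with the floor $u_i\ge\pi_s$ playing the role that the level curves $\mathcal{D}_{K_{-i}}$ play there. Because the belief decreases continuously and halts at the first zero of total effort, $K>0$ on $(p_T,p(0))$; as there are finitely many agents using finite piecewise Lipschitz strategies, some fixed agent $i$ exerts positive effort along a sequence $p_n\downarrow p_T$ at which $u_i$ is differentiable. The experimentation condition $b_I\ge c_I$ at $p_n$, together with $u_i(p_n)\to u_i(p_T)=\pi_s$, passes to the limit to yield $-p_T\frac{\lambda}{r}(1-p_T)\,u_{i,+}'(p_T)\ge \pi_s\!\left(1-\tfrac{p_T}{p_I}\right)>0$, the right-hand side being positive because $p_T<p_I$. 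Hence the right derivative $u_{i,+}'(p_T)<0$, so $u_i(p)<\pi_s$ for $p$ slightly above $p_T$, contradicting $u_i\ge\pi_s$. The two bounds together force $p_T=p_I$.

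I expect the main obstacle to be this lower bound. Whereas Lemma \ref{lem:stopping_lower_bound} may lean on $\frac{\pi_s-\pi_l}{r}\ge R_l$ and the geometry of the $\mathcal{D}$ curves, here the contradiction must be manufactured entirely from the uniform floor $u_i\ge\pi_s$, and I must extract a \emph{single} agent whose value is driven below $\pi_s$ near $p_T$ even though, under asymmetric or nonmonotone play, different agents may carry the experimentation at different beliefs and effort may taper continuously to zero as $p\downarrow p_T$. The finite-piecewise-Lipschitz restriction (guaranteeing a fixed active agent on a one-sided neighborhood of $p_T$ along which $u_i$ is differentiable a.e.) together with the nonnegativity of each $u_i-\pi_s$ are exactly what localize the argument to one agent and one-sided derivatives at $p_T$, in the same manner as the existing stopping lemmas.
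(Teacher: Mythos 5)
Your proposal is correct and follows essentially the same route as the paper's own proof: the uniform floor $u_i \ge \pi_s$ from the guarantee of playing $k_i = 0$ (using $rR_l + \pi_l \ge \pi_s$ in this regime), together with one-sided derivative bounds at the stopping belief $p_T$ derived from the optimality conditions, squeezing $p_T$ to $p_I$ from both sides. The only cosmetic differences are that you phrase the upper bound as a direct profitable-deviation argument rather than as a clash between the subdifferential bound and the floor, and that you make explicit the extraction of a single active agent near $p_T$ via the finite piecewise Lipschitz restriction, which the paper leaves implicit.
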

\begin{proof}
    In general, I cannot assume the value function is differentiable at $p_T$. Hence, I use the superdifferential/subdifferential generalizations of the derivative to show the result.

    I first argue that in any MPE, experimentation cannot stop above $p_I$. Let $u$ be the value function of a single agent. In order for experimentation to stop,
    \[ b_I(p, u, u') \le  c_I(p) \]
    \[ p\frac{\lambda}{r}(\pi_w - u(p) - (1-p)u'(p)) \le \pi_s - p \lambda R_w \]\
    \[ pu(p) + p(1-p)u'(p) \ge p \left( r R_w + \pi_w \right) - \frac{r \pi_s}{\lambda} \]
    
    Suppose experimentation stops at $p_T > p_I$. Take any sequence of $p \to p_T$ from below, and let $u'_-(p)$ be the smallest left subdifferential; that is, 
    \[ u'_-(p) = \min_{\{p_n\}, p_n < p_T} \lim_{p_n \to p_T} \frac{u(p_n) - u(p_T)}{p_n - p_T}  \]
    Noting that $u(p_T) = \pi_s$ when experimentation stops,
    \begin{align*}
        p_T(1-p_T)u'_-(p) &\ge p_T \left( r R_w + \pi_w - \pi_s \right) - \frac{r \pi_s}{\lambda} \\
        &= \frac{r}{\lambda}\left( \frac{p_T}{p_I}\pi_s - \pi_s \right)
    \end{align*}  
    
    Note that if $p_T > p_I$, then the RHS is positive, so $u'_-(p) > 0$. But this implies that for some $p$, $u(p) < \pi_s$, a contradiction, since playing $k_i = 0$ guarantees a payoff at least $\min(\pi_s, \pi_l + rR_l) \ge \pi_s$. 
    
    Now, I show experimentation in any MPE cannot stop below $p_I$. Suppose, for sake of contradiction, that in some permissible equilibrium, experimentation did stop at $p_T < p_I$. Let $i$ be an experimenter in that equilibrium who experiments until $p_T$, and suppose $i$'s value function is $u$. Since $i$ was experimenting, it must be the case that $b_I(p, u, u') \ge c_I(p)$ above $p_T$. Taking any sequence of $p \to p_T$ from above, let the maximum right superdifferential be 
    \[ u'_+(p) = \max_{\{p_n\}, p_n > p_T} \lim_{p_n \to p_T} \frac{u(p_n) - u(p_T)}{p_n - p_T} \]
    and noting that $u(p_T) = \pi_s$ when experimentation stops,
    \begin{align*}
        p_T(1-p_T)u'_+(p) &\le p_T \left( r R_w + \pi_w - \pi_s \right) - \frac{r \pi_s}{\lambda} \\
        &= \frac{r}{\lambda}\left( \frac{p_T}{p_I}\pi_s - \pi_s \right)
    \end{align*}  
    But since $p_T < p_I$, the RHS is negative, and so $u'_+$ must be negative. But again this implies that for some $p$, $u(p) < \pi_s$, a contradiction, since playing $k_i = 0$ guarantees a payoff at least $\min(\pi_s, \pi_l + rR_l) \ge \pi_s$. So in any MPE in weakly monotonic strategies, experimentation must stop at $p_I$.
\end{proof}

\subsection{Overcompetitive Equilibria}
If I retain the weak monotonicity assumption, I can also strengthen Proposition \ref{prop:overcompetitive} to asymmetric equilibria as well, since the bounds on the end of experimentation from Lemmas \ref{lem:stopping_upper_bound} and \ref{lem:stopping_lower_bound} required symmetry:
\begin{proposition}\label{prop:overcompetitive_2}
    Suppose $\frac{\pi_s - \pi_l}{r} > R_l $. In any symmetric or asymmetric Markov Perfect Equilibrium in weakly monotonic strategies, experimentation stops at some threshold $p_T \in [p_\times, p_I]$. 
\end{proposition}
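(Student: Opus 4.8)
The plan is to obtain the two bounds $p_T \le p_I$ and $p_T \ge p_\times$ separately, and the key observation is that the proofs of Lemmas \ref*{lem:stopping_upper_bound} and \ref*{lem:stopping_lower_bound} never actually invoke symmetry: each reasons about a single agent's value function and its own family of level curves, using only weak monotonicity and the agent-level best-response characterization of Lemma \ref*{lem:baseline_br}. So the real work is to set up the asymmetric bookkeeping correctly and then replay those two arguments. First I would make ``experimentation stops at $p_T$'' precise without symmetry: since each $\sigma_i$ is weakly monotonic, the aggregate effort $K(p) = \sum_i \sigma_i(p)$ is weakly monotonic, so the cessation belief $p_T := \inf\{p : K(p) > 0\}$ is well defined, with $K(p) = 0$ for all $p < p_T$. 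Because $K = 0$ (and hence the belief is static) below $p_T$, each active agent's value function is continuous and equals $\pi_s$ at $p_T$.

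For the upper bound, suppose $p_T > p_I$ and pick any $p \in (p_I, p_T)$. Then $K(p) = 0$, so every agent $i$ faces $K_{-i}(p) = 0$, and its best-response problem reduces to the single-agent control problem \eqref{eqn:indiv_hjb}, whose optimal policy is the cutoff $p_I$ by Theorem \ref*{thm:cooperative}. Since $p > p_I$, the unique best response is $k_i = 1$, contradicting $\sigma_i(p) = 0$. This is exactly the content of Lemma \ref*{lem:stopping_upper_bound}, now applied to each agent individually rather than to a shared threshold, so $p_T \le p_I$.

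For the lower bound, suppose $p_T < p_\times$. Because there are finitely many agents and $K(p) > 0$ just above $p_T$, some fixed agent $i$ is active along a sequence of beliefs decreasing to $p_T$; let $u$ denote its value function, a viscosity solution of its best-response HJB for the given $K_{-i}$, with $u(p_T) = \pi_s$. Following Lemma \ref*{lem:stopping_lower_bound}, I would take $p_n \downarrow p_T$ along which $u$ is differentiable, apply the experimentation inequality $b_I \ge c_I$ (valid since $i$ is active), and pass to the limit to get $u'_+(p_T) < 0$, the right-hand side $\pi_s(1 - p_T/p_I)$ being positive because $p_T < p_\times \le p_I$ (Lemmas \ref*{lem:ptimes_pFB_comparison} and \ref*{lem:pI_characterization}). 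Then $u$ dips strictly below $\pi_s$ just above $p_T$; since every level curve $\mathcal{D}_{K_{-i}}$ passes through $(p_\times, \pi_s)$ with nonpositive slope (see \eqref{eqn:belief_cross}), each lies weakly above $\pi_s$ at beliefs below $p_\times$, so $(p, u(p))$ sits strictly below $\mathcal{D}_{K_{-i}}$ while agent $i$ exerts positive effort there, contradicting Lemma \ref*{lem:baseline_br}. Hence $p_T \ge p_\times$, and combining the two bounds gives $p_T \in [p_\times, p_I]$.

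The main obstacle is the lower bound, specifically the non-differentiability of $u$ at the cessation belief: once strategies may be asymmetric and need not be cutoffs, $u$ can kink at $p_T$, so the derivative step must be carried out with one-sided (viscosity) derivatives along a differentiability sequence rather than a classical $u'(p_T)$. The only genuinely new bookkeeping relative to the symmetric lemmas is isolating the single agent that remains active down to $p_T$ and working throughout with that agent's value function and its own curves $\mathcal{D}_{K_{-i}}$; this is harmless in the baseline model because all agents share the same payoff parameters, so their level curves and common intersection $p_\times$ coincide, letting the symmetric-case geometry transfer unchanged.
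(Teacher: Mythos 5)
Your proposal is correct and follows essentially the same route as the paper: the paper's proof of Proposition \ref*{prop:overcompetitive_2} likewise just invokes Lemmas \ref*{lem:stopping_upper_bound} and \ref*{lem:stopping_lower_bound}, noting that neither requires symmetry across agents. Your additional bookkeeping (defining the cessation belief via aggregate effort, isolating the agent active down to $p_T$, and using one-sided derivatives along a differentiability sequence) simply makes explicit the details the paper leaves implicit.
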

\begin{proof}
    The argument is relatively simple; by Lemma \ref*{lem:stopping_upper_bound}, experimentation cannot stop above $p_I$. By Lemma \ref*{lem:stopping_lower_bound}, experimentation cannot stop at $p_T < p_\times$. Hence we are done. Note that neither of Lemmas \ref*{lem:stopping_upper_bound} and \ref*{lem:stopping_lower_bound} required strategies to be symmetric across agents.
\end{proof}

However, the weak monotonicity is a necessary condition for the result; if I drop the weak monotonicity condition, there is a special type of semi-efficient equilibria that can sometimes arise. In particular, ending experimentation at the belief $p_{FB}$ is sustained by a coordinated threat by all agents to exert effort on research below $p_{FB}$. This exists for some parameter values of the game; however, this condition is only partly efficient, since if the belief starts below $p_{FB}$ the agents experiment. Alternatively, this equilibrium is ``unstable''; if experimentation has stopped and the belief is at $p_{FB}$, even a small $\epsilon$ change to the belief downwards can induce experimentation (that is, given the arrival of very minor bad news about the project that all agents already stopped working on, the agents in this equilibrium have to start experimenting even though the bad news made them all more pessimistic about the project.) I pictorially depict this equilibrium for a particular choice of game parameters in Figure \ref{fig:semiefficient}.

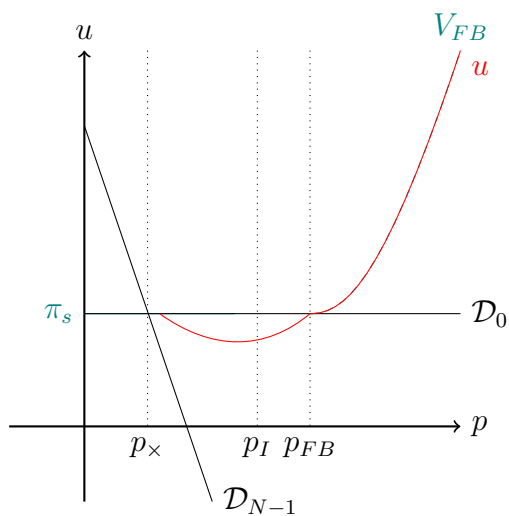
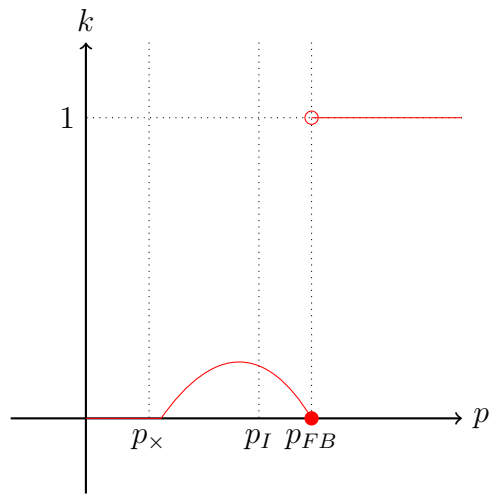
\begin{figure}
    \centering
    \begin{subfigure}[b]{0.45\textwidth}
         \centering
         \begin{tikzpicture}
    \draw[->, thick] (-1,0) -- (5,0) node[anchor=west]{$p$};
    \draw[->, thick] (0,-1,0) -- (0,5) node[anchor=south]{$u$};
    \draw[dotted] (3,5) -- (3,0) node[anchor=north]{$p_{FB}$};
    \draw[teal] (2,1.5) -- (0,1.5) node[anchor=east]{$\pi_s$};
    \draw[dotted] (0.84,5) -- (0.84,0) node[anchor=north]{$p_\times$} ;
    \draw[teal, dashed] (3,1.5) .. controls (3.5,1.5) and (4,2) .. (5,5) node[anchor=south]{$V_{FB}$};
    \draw (0,1.5) -- (5,1.5) node[anchor=west]{$\mathcal{D}_0$};
    \draw (0,4) -- (1.7,-1) node[anchor=west]{$\mathcal{D}_{N-1}$};
    \draw[dotted] (2.3,5) -- (2.3,0) node[anchor=north]{$p_I$} ;
    \draw[red] (3,1.5) .. controls (3.5,1.5) and (4,2) .. (5,5) node[anchor=north west]{$u$};
    \draw[red] (3,1.5) .. controls (2.4,1) and (1.7,1) .. (1,1.5);
\end{tikzpicture}
        \caption{Value functions. The dashed line $V_{FB}$ denotes the value function of the first-best solution, and $u$ denotes the value function of the agent in the semiefficient equilibrium. }
         \label{fig:semiefficient_value}
    \end{subfigure}
    \hfill
    \begin{subfigure}[b]{0.45\textwidth}
         \centering
         \begin{tikzpicture}
            \draw[->, thick] (-1,0) -- (5,0) node[anchor=west]{$p$};
            \draw[->, thick] (0,-1,0) -- (0,5) node[anchor=south]{$k$};
            \draw[dotted] (3,5) -- (3,0) node[anchor=north]{$p_{FB}$};
            \draw[dotted] (2.3,5) -- (2.3,0) node[anchor=north]{$p_I$} ;
            \draw[dotted] (0.84,5) -- (0.84,0) node[anchor=north]{$p_\times$} ;
            \draw[dotted] (5,4) -- (0,4) node[anchor=east]{1};
            \draw[red] (5,4) -- (3,4);
            \draw[red] (3,0) .. controls (2.4, 1) and (1.7, 1) .. (1,0);
            \draw[red] (1,0) -- (0,0);
            \filldraw[red] (3,0) circle (2.5pt);
            \draw[red] (3,4) circle (2.5pt);
        \end{tikzpicture}
        \caption{Strategies. Note that the unique symmetric equilibrium strategies are not weakly monotonic. Although the strategy exerts zero effort into research at $p_{FB}$, it exerts a positive amount of effort into research below $p_{FB}$. }
         \label{fig:semiefficient_strat}
    \end{subfigure}
    \caption{Semiefficient equilibria: value function and strategies. Note the ``efficiency'' (i.e. no experimentation at $p_{FB}$ is sustained by the threat to experiment at beliefs worse than $p_{FB}$. The semiefficient solution results in inefficiency if the belief ever ends up at any $p < p_{FB}$.}
    \label{fig:semiefficient}
\end{figure}

\end{document}